\lstdefinelanguage{program}{keywords={let,pass,function,var,const,bool,int,void,atomic,while,do,if,then,else,assume,require,assert,call,return,rule,forall,with,new,choose,skip,task,async,yield,for,wait,type,relation,init, action, safety, invariant
  },
  morecomment=[l]{//},
  morecomment=[s]{/*}{*/},
  morecomment=[n]{(*}{*)},
  mathescape=true,
  escapeinside=`',
}
\newif\ifnitpick
\newif\iflong
\newif\ifsketch
\newif\iflockserv
\newif\ifapp
\newcommand{\refappendix}[1]{\Cref{#1}}
\newcommand{\refappendix}[1]{the extended version~\cite{extended-version}}
\Crefname{conjecture}{Conjecture}{Conjectures}
\Crefname{proposition}{Proposition}{Propositions}
\Crefname{lemma}{Lemma}{Lemmas}
\Crefname{corollary}{Corollary}{Corollaries}
\Crefname{example}{Example}{Examples}
\Crefname{definition}{Definition}{Definitions}
\crefname{line}{line}{lines}
\newenvironment{proof*}[1]
  {\begin{proof}}
  {\end{proof}}
\Crefname{figure}{Fig.}{Fig.}
\newcommand{\para}[1]{\vspace{2pt}\noindent\textbf{\textit{#1.}}}
\newcommand{\empheval}[1]{\textit{#1}}
\newcommand{\emphexample}[1]{\textit{#1}}
\newcommand{\ov}{\overline}
\renewcommand{\phi}{\varphi}
\newcommand{\A}{{\cal A}}
\renewcommand{\implies}{\Longrightarrow}
\newcommand{\true}{{\textit{true}}}
\newcommand{\false}{{\textit{false}}}
\newcommand{\vocabulary}{\Sigma}
\newcommand{\voc}{\vocabulary}
\newcommand{\vocdouble}{\hat{\vocabulary}}
\newcommand{\structures}{{\mbox{STRUCT}}}
\newcommand{\structOfVoc}[1]{{\structures[#1]}}
\newcommand{\struct}{{s}}
\newcommand{\Dom}{{D}}
\newcommand{\Int}{{\mathcal{I}}}
\newcommand{\diag}[1]{{\textit{Diag}(#1)}}
\newcommand{\structdom}[2]{\structOfVoc{#1}|_{#2}}
\newcommand{\form}[2]{\mathcal{F}_{#1}(#2)}
\newcommand{\vars}{V}
\newcommand{\Init}{\textit{Init}}
\newcommand{\Safety}{\mathcal{P}}  
\newcommand{\Prop}{P}
\newcommand{\Inv}{\textit{Inv}}
\newcommand{\Lset}{L}
\newcommand{\TS}{{\textit{TS}}}
\newcommand{\TR}{{\textit{TR}}}
\newcommand{\Univ}{{\forall^*}}
\newcommand{\leqLof}[1]{\sqsubseteq_{#1}}
\newcommand{\leqL}{\leqLof{\Lset}}
\newcommand{\leqU}{\leqLof{\Univ}}
\newcommand{\qoset}{\mathcal{V}}
\newcommand{\phaseauto}{\mathcal{A}}
\newcommand{\phasestruct}{\mathcal{S}}
\newcommand{\autostate}{\mathcal{Q}}
\newcommand{\autoinitset}{\mathcal{Q}_0}
\newcommand{\autoinit}{\iota}
\newcommand{\autotrans}{\mathcal{R}}
\newcommand{\edgelabel}[1]{\delta_{#1}}
\newcommand{\statelabel}[1]{\varphi_{#1}}
\newcommand{\fv}[1]{\textit{FV}({#1})}
\newcommand{\Lang}[1]{\mathcal{L}(#1)}
\newcommand{\sphase}{q}
\newcommand{\tphase}{p}
\newcommand{\UPDR}{\text{PDR}^\forall}
\newcommand{\Frame}{\mathcal{F}}
\newcommand{\Grd}{\textit{Grd}}
\newcommand{\codefont}[1]{{\footnotesize\ensuremath{\mathtt{#1}}}}
\newcommand{\lockelem}{\ell}
\newcommand{\holdslock}{\codefont{holds\_lock}}
\newcommand{\rowner}{\codefont{owner}}
\newcommand{\rtable}{\codefont{table}}
\newcommand{\rseqsent}{\codefont{seqnum\_sent}}
\newcommand{\runacked}{\codefont{unacked}}
\newcommand{\rseqrecvd}{\codefont{seqnum\_recvd}}
\newcommand{\rtransfermsg}{\codefont{transfer\_msg}}
\newcommand{\rackmsg}{\codefont{ack\_msg}}
\newcommand{\Tool}{{\sc mypyvy}} 
\begin{document}
\newif\ifcomments
\commentsfalse

\ifcomments
\newcommand{\sharon}[1]{{\textcolor{blue}{SS: {\em #1}}}}
\newcommand{\jrw}[1]{{\textcolor{green}{JRW: {\em #1}}}}
\newcommand{\mooly}[1]{{\textcolor{cyan}{MS: {\em #1}}}}
\newcommand{\yotam}[1]{{\textcolor{magenta}{YF: {\em #1}}}}
\newcommand{\TODO}[1]{{\textcolor{red}{TODO: {\em #1}}}}\

\else
\newcommand{\sharon}[1]{}
\newcommand{\jrw}[1]{}
\newcommand{\mooly}[1]{}
\newcommand{\yotam}[1]{}
\newcommand{\TODO}[1]{}

\fi

\newcommand{\commentout}[1]{}
\newcommand{\OMIT}[1]{}

\title{Inferring Inductive Invariants from Phase Structures}
\author{Yotam M.\ Y.\ Feldman\inst{1}\and
James R.\ Wilcox\inst{2}\and
Sharon Shoham\inst{1}\and
Mooly Sagiv\inst{1}}

\institute{Tel Aviv University, Israel \and
University of Washington, USA}
\maketitle              \begin{abstract}
Infinite-state systems such as distributed protocols are challenging to verify using interactive theorem provers or automatic verification tools.
Of these techniques, deductive verification
is highly expressive
but requires the user to annotate the system with \emph{inductive invariants}.
To relieve the user from this labor-intensive and challenging task,
\emph{invariant inference} aims to find inductive invariants automatically.
Unfortunately, when applied to infinite-state systems such as distributed protocols, existing inference techniques often diverge, which limits their applicability.

This paper proposes \emph{user-guided invariant inference} based on \emph{phase invariants}, which capture the different logical phases of the protocol.
Users conveys their intuition by specifying a \emph{phase structure}, an automaton with edges labeled by program transitions; the tool automatically infers assertions that hold in the automaton's states, resulting in a full safety proof.
The additional structure from phases guides the inference procedure towards finding an invariant.

Our results show that user guidance by phase structures facilitates successful inference beyond the state of the art.
We find that phase structures are pleasantly well matched to the intuitive reasoning routinely used by
domain experts to understand why distributed protocols are correct,
so that providing a phase structure reuses this existing intuition.
\end{abstract}
 
\section{Introduction}
Infinite-state systems such as distributed protocols remain challenging to verify despite decades of work developing interactive and automated proof techniques.
Such proofs rely on the fundamental notion of an \emph{inductive invariant}.
Unfortunately, specifying inductive invariants is difficult for users, who must often repeatedly iterate through candidate invariants before achieving an inductive invariant.
For example, the Verdi project's proof of the Raft consensus protocol used an inductive invariant with 90 conjuncts and relied on significant manual proof effort~\cite{verdi-pldi,verdi-cpp}.

The dream of \emph{invariant inference} is that users would instead be assisted by automatic procedures that could infer the required invariants.
While other domains have seen successful applications of invariant inference,
using techniques such as abstract interpretation~\cite{DBLP:conf/popl/CousotC77} and property-directed reachability~\cite{ic3,pdr},
existing inference techniques fall short for interesting distributed protocols, and often diverge while searching for an invariant.
These limitations have hindered adoption of invariant inference.

\para{Our Approach}
The idea of this paper is that invariant inference can be made drastically more effective by utilizing \emph{user-guidance} in the form of \emph{phase structures}.
We propose user-guided invariant inference, in which the user provides some additional information to guide the tool towards an invariant.
An effective guidance method must (1) match users' high-level intuition of the proof, and
(2) convey information in a way that an automatic inference tool can readily utilize to direct the search.
\ifsketch, and
(3) allow varying degrees of control over the inference procedure by providing detail in specifying the proof by the user, from a full proof to a mere sketch, so that the user will always be able to find an invariant using the tool by providing more details until the tool converges.
In this paper we introduce \emph{phase invariants} as the basis of user-guided invariant inference that enjoys these properties.
\fi
In this setting invariant inference turns a partial, high-level argument accessible to the user into a full, formal correctness proof, overcoming scenarios where procuring the proof completely automatically is unsuccessful.

Our approach places \emph{phase invariants} at the heart of both user interaction and algorithmic inference.
Phase invariants have an automaton-based form that is well-suited to the domain of distributed protocols.
They allow the user to convey a high-level temporal intuition of why the protocol is correct in the form of a \emph{phase structure}. The phase structure provides hints that direct the search and allow a more targeted generalization of states to invariants, which can facilitate inference where it is otherwise impossible.
\OMIT{
\yotam{Perhaps stop here and move to the next paragraph?}
Phase invariants capture an innate structure of protocol's correctness in a way that can be utilized to direct the search and improve the applicability of inference.
\yotam{omitting:
\yotam{consider moving these points to a discussion of phase \emph{sketches}}
Phase invariants are useful as an interface for guiding the search for an invariant, as they
(1)~are easier to specify manually than classical inductive invariants,
(2)~naturally allow presenting the correctness proof at different levels of detail, ranging
from the full proof to only the essential structure while omitting the low-level details,
(3)~capture the temporal aspects of the algorithm, and
(4)~support automatic inference for filling in missing details in the proof, guided by the automaton structure.
Our approach places phase invariants at the heart of both user interaction and algorithmic inference, leveraging these benefits.}
}

\OMIT{
\paragraph{Phase Invariants}
Phase invariants are so named because they describe executions of the protocol
as transitioning between different logical stages, or phases.
These phases form the states of an automaton, whose edges are labeled by actions of the protocol that trigger transitions between phases.
The phases themselves are labeled
by \emph{phase characterizations}: assertions that hold whenever the protocol is in that phase.

Phase invariants closely match the way domain experts already think about the correctness of distributed protocols by means of state-machine refinement \`a la \citet[e.g.][]{DBLP:books/aw/Lamport2002}.
In essence, phase invariants reveal a \emph{logical} control structure in the protocol, guided by the correctness property.
\yotam{R1: ``it is also possible that most of the benefit of the technique is simply in the structuring mechanism of phase automaton for specifying the inductive invariant. In other words, once the programmer has come up with the intuition for the phase structure and transition relations on the edges it is not much more work to write down the assertions in each state and have them be checked by standard VC checking without the more unpredictable step of invariant inference.''}

\OMIT{
Often, the latent phase structure of a protocol becomes clear only after focusing on a particular entity in the system (e.g.\ a single node or a single piece of data).
Phase invariants are thus parameterized by a \emph{view}, a finite set of elements (implicitly universally quantified)
whose evolution is described by the phase structure.
The elements in the view may be threads or processes (as is common when dealing with parameterized systems), but may also be different objects, e.g. locks, as demonstrated by our examples (see \Cref{sec:overview,sec:evaluation}).
}

\yotam{Sharon, how the technical section looks is relevant here}
A full correctness full arises from an \emph{inductive phase automaton}, in which
the phase characterizations are sufficiently strong to show that the automaton includes transitions for all actions possible at the phase,
and that the characterizations are preserved along each automaton edge.
An inductive phase invariant thus establishes a simulation relation between the protocol and the phase automaton,
parameterized by the view,
ensuring that the automaton overapproximates all executions of the protocol \jrw{merged:} for every view.

\paragraph{User-guided inference of phase invariants}
\yotam{omitting: The automaton structure of a phase invariant is more closely aligned with the intuition of protocol developers, but specifying all the details necessary for an inductive phase invariant is still challenging and tedious.
\yotam{added:} Thus in this paper we are interested in \emph{invariant inference}.}
Our approach utilizes the additional structure of phase invariants to facilitate effective \emph{invariant inference} which is \emph{guided} by the user's phase-based understanding of correctness. In our approach, users convey their intuition of the proof by specifying an automaton structure.
\ifsketch , possibly with some \emph{partial} (i.e.\ not necessarily inductive) phase characterizations.
An automaton with partial characterizations (if any) provided for the sake of inference is called a \emph{phase sketch}.
\yotam{added:} A phase sketch is still far from a full correctness proof.
\fi
Our inference procedure automatically finds phase characterizations to achieve an inductive phase invariant out of the provided phase structure.
The inference algorithm, which is parameterized by a language of possible phase characterizations,
and may return either a completed inductive phase invariant
or a witness showing that the user's phase structure is inappropriate, in the sense that no choice of phase characterizations in the given language can form an inductive phase invariant to establish safety. \yotam{Sharon, lingo also here}

From the perspective of automatic invariant inference, the given phase structure,
the localized view it induces,
and disabled transitions---actions of the system that are impossible at some phase---are utilized by our algorithm to guide the search for an inductive invariant.
The decomposition to phases guides inference towards finding an invariant.
Ultimately, the phase structure facilitates the gradual construction of a disjunctive invariant corresponding to the different phases under the scope of universal quantification over the view,
by combining manual insight with automatic inference.

\TODO{put results much earlier}

\OMIT{
Perhaps surprisingly, we show that applying the automaton's structure itself yields a speedup in inference. \TODO{move elsewhere}
One reason is that the phase structure can also specify
\emph{disabled transitions}, actions of the system that are impossible at some phase. \yotam{TODO: modify this explanation after we understand the reasons well}
Inference uses these disabled transitions as additional safety properties to guide the search.
}

\OMIT{
\jrw{cut the following paragraph?} \yotam{cutting:
From the user's perspective, where inductive invariants attempt to summarize all possible states of the system at once, phase automata instead allow the user to refer to distinct phases in the computation, each of which is summarized separately and semi-automatically. Furthermore, the witness provided in case no inductive strengthening exists may assist the user in diagnosing the failure and modifying the phase sketch in order to achieve an inductive phase invariant.
}
}

\OMIT{

From the user's perspective, phase structure express a temporal intuition of the protocol's behavior, including when certain transitions are impossible.
\ifsketch
The specification of partial characterizations builds on the local view of phases, as the provided characterizations need not hold globally but only in the specific phase.
\fi
\jrw{Okay, decided to edit and merge with previous paragraph. Is it still correct?}
Ultimately, the phase structure facilitates the gradual construction of a disjunctive invariant corresponding to the different phases under the scope of universal quantification over the view,
by combining manual insight with automatic inference.
}

\paragraph{Results}
We instantiate our user-guided phase-based inference to the language of universally quantified invariants for EPR programs,
which previous work has used to model distributed protocols~\cite{pldi/PadonMPSS16,DBLP:journals/pacmpl/PadonLSS17,DBLP:conf/pldi/TaubeLMPSSWW18}.
Using a variant of $\UPDR$~\cite{DBLP:journals/jacm/KarbyshevBIRS17}, we infer universally quantified phase characterizations, in addition to the universal quantifiers that arise from the automaton's view.
We use our approach to infer phase invariants from phase structures for several interesting distributed protocols.
We show that inference guided by a phase structure can infer proofs for distributed protocols that are currently beyond reach for inference, and can also achieve convergence faster than unguided state-of-the-art inference.

}

This paper makes the following contributions:

        (1) We present \emph{phase invariants}, an automaton-based form of safety proofs, based on the distinct logical phases of a certain view of the system.
             Phase invariants closely match the way domain experts already think about the correctness of distributed protocols by state-machine refinement \`a la Lamport~\cite[e.g.][]{DBLP:books/aw/Lamport2002}.

        (2) We describe an algorithm for inferring \emph{inductive phase invariants} from \emph{phase structures}.
              The decomposition to phases through the phase structure guides inference towards finding an invariant.
        The algorithm finds a proof over the phase structure or explains why no such proof exists. In this way, phase invariants facilitate user interaction with the algorithm. 

        (3) Our algorithm reduces the problem of inferring inductive phase invariants from phase structures to the problem of solving a linear system of Constrained Horn Clauses (CHC), irrespective of the inference technique and the logic used. In the case of universally quantified phase inductive invariants for protocols modeled in EPR (motivated by previous deductive approaches~\cite{pldi/PadonMPSS16,DBLP:journals/pacmpl/PadonLSS17,DBLP:conf/pldi/TaubeLMPSSWW18}), we show how to solve the resulting CHC using a variant of $\UPDR$~\cite{DBLP:journals/jacm/KarbyshevBIRS17}.

        (4) We apply this approach to the inference of invariants for several interesting distributed protocols. (This is the first time invariant inference is applied to distributed protocols modeled in EPR.) In the examples considered by our evaluation, transforming our high-level intuition about the protocol into a phase structure was relatively straightforward.
        The phase structures allowed our algorithm to outperform in most cases an implementation of $\UPDR$ that does not exploit such structure, facilitating invariant inference on examples beyond the state of the art and attaining faster convergence.

\iflong
It is surprising that invariant inference---operating in the realm of logical clauses and implications---can so effectively benefit from guidance by phase structures, which exhibit a much higher level of abstraction.
While there remain significant challenges to applying invariant inference on complex distributed protocols---notably, inference of invariants with quantifier alternations, necessary, e.g.\ for Paxos~\cite{DBLP:journals/pacmpl/PadonLSS17}---our approach demonstrates that the seemingly inherent intractability of sifting through a vast space of candidate invariants can be mitigated by leveraging users' high-level intuition.
\else
Overall, our approach demonstrates that the seemingly inherent intractability of sifting through a vast space of candidate invariants can be mitigated by leveraging users' high-level intuition. \OMIT{make sure to mention Paxos somewhere}
\fi

\OMIT{We should find a natural place somewhere in the intro to mention Ivy in passing and cite,
  so that the overview does not have to explain what Ivy is.}

\OMIT{
We seek to improve on these results by leveraging two further types of automation.
First, we model systems using decidable fragments of first-order logic, leading to fully automated checking of inductive invariants.
Second, we also semi-automate the process finding a state-machine abstraction of the protocol, by allowing the user to sketch a \emph{phase automaton}.
The details of the abstraction are then filled out automatically by solving a system of second-order constrained Horn clauses using a variant of $\UPDR$.
\jrw{I was just trying to sound fancy with the previous sentence. Please check that I used all the words properly.}

Building on existing literature from the distributed systems community, our key insight is that state-machine refinement techniques such as our phase automata
more naturally capture the structure of distributed protocols.
Where inductive invariants attempt to summarize all possible states of the system at once, phase automata instead allow the user to refer to distinct phases in the computation, each of which is summarized separately and semi-automatically.
Such a summary describes the mechanisms used by the protocol during execution, which provide the essential reason for its correctness to be checked in the proof.

\TODO{explain what is a phase invariant}

We demonstrate the benefits of phase invariants by using them to verify to sophisticated protocol from the distributed systems literature.
We also compare our phase automaton characterizations to traditional inductive invariants used in state-of-the-art deductive verification tools.
We find that phase invariants improve proof automation by localizing the search, yielding a speedup of \TODO{insert results} over baseline inductive invariant inference.
Phase invariants also facilitate user interaction since users can provide hints about the meaning of each phase, which the system either completes to a proof or explains why no such proof exists.
When given such guidance, our results further improve to \TODO{insert results}, showing that users can trade off machine time for their own effort.

}

\OMIT{
    \yotam{The use of phase invariants for infinite-state systems relies on the use of a
\emph{view}. The view is a finite list of elements, fixed throughout the execution, to which the phases refer. Intuitively, each set of elements from the domain induce a phase automaton, and they all execute in parallel. The proofs we consider are \emph{view-modular}, meaning the proof of a phase invariant refers only to single arbitrary choice of the elements for the view, and not to the automata of other choices of view. \TODO{probably not clear at all...} This is sound because the view is in essence universally quantified.
One observation of this work is that often the safety property calls for a view on elements other than nodes or threads. For example, the view can consist of a shared resource or two different values in a consensus protocol.
}
}

\OMIT{

\paragraph{Inferring phase invariants}
We note that providing fully inductive (phase) invariants in distributed systems can be tricky. Typically, the user has to account for the exact inductive argument for correctness, including all corner cases. Therefore, we also describe and implement an algorithm for inferring phase invariants. This is an adaptation of IC3/PDR~\cite{} and $\UPDR$~\cite{} for inferring quantified invariants.
\TODO{insert results here}

\paragraph{Why is deductive verification hard?}
\paragraph{Why is deductive verification of distributed protocols hard?}
-> Inductive invariants are properties of states rather than traces.
More restricted than temporal logic. (Is it contained in LTL?)

\paragraph{Why is inference hard?}
\begin{itemize}
    \item CNF good for PDR, DNF for AI (?). Natural invariants of distributed protocols are a disjunction of clauses, like we find here
    \item Problem of disjunction
\end{itemize}

Phases:
Better fit.
Easy to partially characterize phases, to be completed by inference.
Even just the structure provides valuable insight that can improve inference.

\subsection{Some Potential Phrases}
\begin{itemize}
    \item Distributed protocols are often viewed and described as progressing between distinct phases of the computation~\cite{}.
    Our key insight is that \emph{correctness proofs} of distributed protocols enjoy a similar structure.

    \item Providing control to unstructured protocols

    \item While an inductive phase invariant induces a standard inductive invariant, we view the phase invariant as the fundamental building block of the proof, as it more closely captures the intuition users have of the protocol's correctness. \yotam{This one is even more dubious than the rest :)}
\sharon{maybe the point is that if we want to convert it to CNF it will become exponentially larger and much less clear?}

    \item When the phase structure of the proof deviates from that of the protocol, this elucidates the subtlety in the proof in the form of a refinement of the protocol's straightforward, operational, description.
\sharon{Here I assume you refer to the one round vs two rounds in paxos, but I am not sure that this is a good thing. }

    \item For the purpose of automatic verification, we harness the phase structure in a PDR-based invariant inference procedure, which aims to complete a self-sufficient phase invariant out of a phase sketch.
\sharon{more generally: given a phase sketch, we formulate/reduce the problem of computing a phase invariant using CHCs. Talk about linear vs. non linear?}

    \item Impossible transitions are valuable in directing the search for invariant, as they constitute additional safety constraints to be satisfied by phase characterizations.
\sharon{vague thought: can the guards on transitions be used to give us the ``bounded occurrences of arbitrary relations'' that were used in POPL16 for decidability?}

    \item (Phase specification can be useful also bug finding, restricting the possible transitions in phases)

    \item Making the distinction between phases is useful when there is a property which holds in one phase but not in the other, and when this fact is important to the proof.
    Without separating the phases, the invariant would guard the property by some other formula expressing indirectly which phase the execution is currently in (as the lockservice example does).

    \item The phases are parameterized by a certain view of the system, e.g., the view of a single thread/round or the view of a pair of threads/rounds.
    \sharon{this is related to the point about control: in distributed protocols there is no clear control partly because the players are not fixed, but when we fix a view, the control does become clear.}

    \item (?) Phase invariants combine intuition from abstraction (simulation as in existential abstraction) and deductive verification (inductiveness).

    \item (?) Phase invariants provide a convenient way for a user to convey intuition to a tool, with a spectrum of how much information can be provided, and with "focused" counterexamples from the tool (since they refer to specific phases).

\end{itemize}
}  
\iflong
\else
\vspace{-0.4cm}
\fi
\section{Preliminaries}
\iflong
\else
\vspace{-0.4cm}
\fi
\iflong
In this section we provide background on modeling and verifying systems using first-order logic.
\else
In this section we provide background on first-order transition systems.
\fi
\iflong
We assume familiarity with first-order logic.
\fi \iflong
We use many-sorted first order logic, but omit sorts here to simplify the presentation.
\else
Sorts are omitted for simplicity.
\fi
\iflong
Although in this paper we will mostly deal with uninterpreted first-order logic, our definitions and results extend to logics with a background theory.
\else
Our results extend also to logics with a background theory.
\fi

\para{Notation}
$\fv{\varphi}$ denotes the set of free variables of $\varphi$.
$\form{\voc}{\vars}$ denotes the set of first-order \iflong well-formed \fi formulas \iflong $\varphi$\fi over vocabulary $\voc$ with $\fv{\varphi} \subseteq \vars$.
We
\iflong
extend the notation $\implies$ to quantified formulas and
\fi
write $\forall \qoset.\ \varphi \implies \psi$ to denote that the formula $\forall \qoset.\ \varphi \to \psi$ is valid.
We sometimes use $f_a$ as a shorthand for $f(a)$.

\para{Transition systems}
We represent transition systems symbolically, via formulas in first-order logic. The definitions are standard.
A vocabulary $\vocabulary$ consisting of constant, function, and relation symbols is used to represent states\iflong (each function and relation symbol is associated with its arity)\fi.
Post-states of transitions are represented by a copy of $\vocabulary$ denoted $\vocabulary' = \{ a' \mid a \in \vocabulary\}$\iflong (where the arity of each function and relation symbol is inherited from $\vocabulary$)\fi.
A \emph{first-order transition system} over \iflong vocabulary\fi $\vocabulary$ is a tuple $\TS = (\Init, \TR)$, where $\Init \in \form{\voc}{\emptyset}$ describes the initial states, and $\TR \in \form{\vocdouble}{\emptyset}$ with $\vocdouble = \vocabulary \uplus \vocabulary'$ describes the transition relation.The states of $\TS$ are first-order structures over $\voc$\iflong, denoted $\structOfVoc{\voc}$\fi.
\iflong
Each state $\struct \in \structOfVoc{\voc}$ is a pair $\struct = (\Dom,\Int)$ where $\Dom$ is the \emph{domain} and $\Int$ is the \emph{interpretation function} mapping each symbol in $\voc$ to its interpretation over $\Dom$. We denote by $\structdom{\voc}{\Dom}$ the set of structures with domain $\Dom$.
In this way, every closed formula over $\voc$ represents the set of states (first-order structures) that satisfy it.
\fi
\iflong
In particular, a state $\struct$ is initial if $\struct \models \Init$. A transition of $\TS$ is a pair of states $\struct_1 = (\Dom, \Int_1),\struct_2 = (\Dom, \Int_2)$ with a shared domain such that $(\struct_1,\struct_2) \models \TR$, where $(\struct_1,\struct_2)$ is a shorthand for the structure $\struct = (\Dom, \Int)$ obtained by defining $\Int(a) = \Int_1(a)$ if $a \in \voc$, and $\Int(a) = \Int_2(a)$ if $a \in \voc'$.
\else
A state $\struct$ is initial if $\struct \models \Init$. A transition of $\TS$ is a pair of states $\struct_1,\struct_2$ over a shared domain such that $(\struct_1,\struct_2) \models \TR$, $(s_1,s_2)$ being the structure over that domain in which $\voc$ in interpreted as in $\struct_1$ and $\voc'$ as in $\struct_2$.
\fi
$\struct_1$ is also called the \emph{pre-state} and $\struct_2$ the \emph{post-state}.
Traces are finite
sequences of states $\sigma_1, \sigma_2, \ldots$ starting from an initial state
such that there is a transition between each pair of consecutive states.
The \emph{reachable states} \iflong of $\TS$\fi are those that reside on traces starting from an initial state.

\para{Safety}
\iflong
A safety property $\Prop$ is a formula in $\form{\voc}{\emptyset}$. We say that $\TS$ is \emph{safe} if all the reachable states satisfy $\Prop$, in which case we also say that $\Prop$ is an \emph{invariant} of $\TS$.
A prominent way to prove safety is via \emph{inductive invariants}. An inductive invariant $\Inv$ is a closed first-order formula over $\voc$ such that the following requirements hold:
\begin{inparaenum}[(i)]
\item $\Init \implies \Inv$ (initiation), and
\item $\Init \wedge \TR \implies \Inv'$ (consecution), \end{inparaenum}
where $\Inv'$ is obtained from $\Inv$ by replacing each symbol from $\voc$ with its primed counterpart.Initiation and consecution ensure that all the reachable states satisfy $\Inv$. If, in addition, $\Inv$ satisfies:
\begin{inparaenum}[(iii)]
\item $\Inv  \implies \Prop$ (safety),
\end{inparaenum}
it follows that all the reachable states satisfy $\Prop$, and
$\TS$ is safe.
\else
A safety property $\Prop$ is a formula in $\form{\voc}{\emptyset}$. We say that $\TS$ is \emph{safe}, and that $\Prop$ is an \emph{invariant}, if all the reachable states satisfy $\Prop$.
$\Inv \in \form{\voc}{\emptyset}$ is an \emph{inductive invariant} if
\begin{inparaenum}[(i)]
\item $\Init \implies \Inv$ (initiation), and
\item $\Init \wedge \TR \implies \Inv'$ (consecution),
\end{inparaenum}
where $\Inv'$ is obtained from $\Inv$ by replacing each symbol from $\voc$ with its primed counterpart.
If also
\begin{inparaenum}[(iii)]
\item $\Inv  \implies \Prop$ (safety),
\end{inparaenum}
then it follows that
$\TS$ is safe.
\fi  
\iflockserv
\input{running-lockserv}
\else
\section{Running Example: Distributed Key-Value Store}
\label{sec:running}
\iflong

\begin{figure*}[t]
  \centering
\begin{minipage}{1.1\textwidth}
\begin{minipage}{0.53\textwidth}
  \begin{lstlisting}[numbers=left, numberstyle=\tiny, numbersep=5pt, name=lockserv, escapeinside={(*}{*)}]
type key(*\label{kv-code:types-start}*)
type value
type node
type sequnum(*\label{kv-code:types-end}*)

relation owner: node, key(*\label{kv-code:state-start}*)
relation table: node, key, value
relation transfer_msg: node, node, 
                       key, value, seqnum
relation ack_msg: node, node, seqnum
relation seqnum_sent: node, seqnum
relation unacked: node, node, 
                  key, value, seqnum
relation seqnum_recvd: node, node, seqnum(*\label{kv-code:state-end}*)

init $\forall n_1,n_2,k.$(*\label{kv-code:init-start}*) owner($n_1$,$k$)$\land$owner($n_2$,$k$)
                  $\rightarrow n_1=n_2$(*\label{kv-code:init-start}*)
init // all other relations are empty(*\label{kv-code:init-end}*)

action reshard(n_old:node, n_new:node,(*\label{kv-code:action-start}*)(*\label{kv-code:reshard-start}*)
               k:key, value:sequnum)
  require table(n_old, k, v)
        $\land$$\neg$seqnum_sent(n_old, s)
  seqnum_sent(n_old, s) := true
  table(n_old, k, v) := false
  owner(n_old, k) := false(*\label{kv-code:cede-own}*)
  transfer_msg(n_old, n_new, k, v, s) := true(*\label{kv-code:inserting-tuple}*)
  unacked(n_old, n_new, k, v, s) := true(*\label{kv-code:reshard-end}*)

action drop_transfer_msg(src:node, dst:node,(*\label{kv-code:transfer-drop-start}*)
                    k:key, v:value, s:seqnum)
    require transfer_msg(src, dst, k, v, s)(*\label{kv-code:requiring-tuple}*)
    transfer_msg(src, dst, k, v, s) := false(*\label{kv-code:removing-tuple}*)(*\label{kv-code:transfer-drop-end}*)

action retransmit(src:node, dst:node,(*\label{kv-code:transfer-retransmit-start}*)
                  k:key, v:value, s:seqnum)
    require unacked(src, dst, k, v, s)
    transfer_msg(src, dst, k, v, s) := true(*\label{kv-code:transfer-retransmit-end}*)
  \end{lstlisting}
\end{minipage}
\hspace{0.0\textwidth}\begin{minipage}{0.4\textwidth}
  \begin{lstlisting}[numbers=left, numberstyle=\tiny, numbersep=5pt, name=lockserv, escapeinside={(*}{*)}]
action recv_transfer_msg(src:node, n:node,(*\label{kv-code:receive-transfer-start}*)
                         k:key, v:value, s:seqnum)
    require transfer_msg(src, n, k, v, s)
          $\land$$\neg$seqnum_recvd(n, src, s)(*\label{kv-code:ignore-dup}*)
    seqnum_recvd(n, src, s) := true
    table(n, k, v) := true
    owner(n, k) := true(*\label{kv-code:receive-transfer-end}*)

action send_ack(src:node, n:node,(*\label{kv-code:ack-start}*)
                k:key, v:value, s:seqnum)
    require transfer_msg(src, n, k, v, s)
          $\land$seqnum_recvd(n, src, s)
    ack_msg(src, n, s) := true(*\label{kv-code:ack-end}*)

action drop_ack_msg(src:node, dst:node,(*\label{kv-code:ack-drop-start}*)
                    k:key, s:seqnum)
    require ack_msg(src, dst, s)
    ack_msg(src, dst, s) := false(*\label{kv-code:ack-drop-end}*)

action recv_ack_msg(src:node, dst:node, 
                    k:key, s:seqnum)
    require ack_msg(src, dst, s)
    unacked(src, dst, *, *, s) := false(*\label{kv-code:qf-update:recv_ack_msg}*)

action put(n:node, k:key, v:value)
    require owner(n, k)
    table(n, k, *) := false(*\label{kv-code:qf-update:put}*)
    table(n, k, v) := true(*\label{kv-code:action-end}*)

safety $\forall k, n_1, n_2, v_1, v_2.$(*\label{kv-code:safety-start}*)
  table($n_1$,$k$,$v_1$) $\land$
  table($n_2$,$k$,$v_2$) $\to$
  $n_1 = n_2 \land v_1 = v_2$(*\label{kv-code:safety-end}*)
  \end{lstlisting}
\end{minipage}
\captionof{figure}{\footnotesize Sharded key-value store with retransmissions (KV-R) in a first-order relational modeling.}
  \label{fig:kv-ivy}
  \end{minipage}
\end{figure*} 

 \else

\begin{figure*}[t]
  \centering
\begin{minipage}{1.1\textwidth}
\begin{minipage}{0.53\textwidth}
  \begin{lstlisting}[numbers=left, numberstyle=\tiny, numbersep=5pt, name=lockserv, escapeinside={(*}{*)}]
type key(*\label{kv-code:types-start}*)
type value
type node
type sequnum(*\label{kv-code:types-end}*)

relation owner: node, key(*\label{kv-code:state-start}*)
relation table: node, key, value
relation transfer_msg: node, node, 
                       key, value, seqnum
relation ack_msg: node, node, seqnum
relation seqnum_sent: node, seqnum
relation unacked: node, node, 
                  key, value, seqnum
relation seqnum_recvd: node, node, seqnum(*\label{kv-code:state-end}*)

init $\forall n_1,n_2,k.$(*\label{kv-code:init-start}*) owner($n_1$,$k$)$\land$owner($n_2$,$k$)
                  $\rightarrow n_1=n_2$(*\label{kv-code:init-start}*)
init // all other relations are empty(*\label{kv-code:init-end}*)

action reshard(n_old:node, n_new:node,(*\label{kv-code:action-start}*)(*\label{kv-code:reshard-start}*)
               k:key, value:sequnum)
  require table(n_old, k, v)
        $\land$$\neg$seqnum_sent(n_old, s)
  seqnum_sent(n_old, s) := true
  table(n_old, k, v) := false
  owner(n_old, k) := false(*\label{kv-code:cede-own}*)
  transfer_msg(n_old, n_new, k, v, s) := true(*\label{kv-code:inserting-tuple}*)
  unacked(n_old, n_new, k, v, s) := true(*\label{kv-code:reshard-end}*)

action drop_transfer_msg(src:node, dst:node,(*\label{kv-code:transfer-drop-start}*)
                    k:key, v:value, s:seqnum)
    require transfer_msg(src, dst, k, v, s)(*\label{kv-code:requiring-tuple}*)
    transfer_msg(src, dst, k, v, s) := false(*\label{kv-code:removing-tuple}*)(*\label{kv-code:transfer-drop-end}*)

action retransmit(src:node, dst:node,(*\label{kv-code:transfer-retransmit-start}*)
                  k:key, v:value, s:seqnum)
    require unacked(src, dst, k, v, s)
    transfer_msg(src, dst, k, v, s) := true(*\label{kv-code:transfer-retransmit-end}*)
  \end{lstlisting}
\end{minipage}
\hspace{0.0\textwidth}\begin{minipage}{0.4\textwidth}
  \begin{lstlisting}[numbers=left, numberstyle=\tiny, numbersep=5pt, name=lockserv, escapeinside={(*}{*)}]
action recv_transfer_msg(src:node, n:node,(*\label{kv-code:receive-transfer-start}*)
              k:key, v:value, s:seqnum)
    require transfer_msg(src, n, k, v, s)
          $\land$$\neg$seqnum_recvd(n, src, s)(*\label{kv-code:ignore-dup}*)
    seqnum_recvd(n, src, s) := true
    table(n, k, v) := true
    owner(n, k) := true(*\label{kv-code:receive-transfer-end}*)

action send_ack(src:node, n:node,(*\label{kv-code:ack-start}*)
              k:key, v:value, s:seqnum)
    require transfer_msg(src, n, k, v, s)
          $\land$seqnum_recvd(n, src, s)
    ack_msg(src, n, s) := true(*\label{kv-code:ack-end}*)

action drop_ack_msg(src:node, dst:node,(*\label{kv-code:ack-drop-start}*)
              k:key, s:seqnum)
    require ack_msg(src, dst, s)
    ack_msg(src, dst, s) := false(*\label{kv-code:ack-drop-end}*)

action recv_ack_msg(src:node, dst:node, 
              k:key, s:seqnum)
    require ack_msg(src, dst, s)
    unacked(src, dst, *, *, s) := false(*\label{kv-code:qf-update:recv_ack_msg}*)

action put(n:node, k:key, v:value)
    require owner(n, k)
    table(n, k, *) := false(*\label{kv-code:qf-update:put}*)
    table(n, k, v) := true(*\label{kv-code:action-end}*)

safety $\forall k, n_1, n_2, v_1, v_2.$(*\label{kv-code:safety-start}*)
  table($n_1$,$k$,$v_1$) $\land$
  table($n_2$,$k$,$v_2$) $\to$
  $n_1 = n_2 \land v_1 = v_2$(*\label{kv-code:safety-end}*)
  \end{lstlisting}
\end{minipage}
\captionof{figure}{\footnotesize Sharded key-value store with retransmissions (KV-R) in a first-order relational modeling.}
  \label{fig:kv-ivy}
  \end{minipage}
\end{figure*} 

 \fi

We begin with a description of the running example we refer to throughout the paper.

The \emph{sharded key-value store with retransmissions (KV-R)}, adapted
from IronFleet~\cite[\S 5.2.1]{IronFleet}, is a distributed hash table where each node owns a subset of the keys, and keys can be dynamically transferred among nodes to balance load.
 The safety property ensures that each key is globally associated with one value, even in the presence of key transfers.
 Messages might be dropped by the network, and the protocol uses retransmissions and sequence numbers to maintain availability and safety.

\Cref{fig:kv-ivy} shows code modeling the protocol in a relational first-order language akin to Ivy~\cite{DBLP:conf/sas/McMillanP18}, which compiles to EPR transition systems.
The state of nodes and the network is modeled by global relations.
\Crefrange{kv-code:types-start}{kv-code:types-end} declare uninterpreted
sorts for keys, values, clients, and sequence numbers.
\Crefrange{kv-code:state-start}{kv-code:state-end} describe the
state, consisting of:
\begin{inparaenum}[(i)]
	\item local state of clients pertaining to the table (which nodes are \rowner{}s of which keys, and the local shard of the \rtable{} mapping keys to values);
	\item local state of clients pertaining to sent and received messages (\rseqsent{}, \runacked{}, \rseqrecvd{}); and
	\item the state of the network, comprised of two kinds of messages (\rtransfermsg{}, \rackmsg{}).
	Each message kind is modeled as a relation whose first two arguments indicate the source and destination of the message, and the rest carry the message's payload.
	For example, \rackmsg{} is a relation over two nodes and a sequence number, with the intended meaning that a tuple $(c_1, c_2, s)$ is in
	\rackmsg{} exactly when there is a message in the network from $c_1$ to $c_2$ acknowledging a message with sequence number $s$.
\end{inparaenum}

The initial states are specified in \crefrange{kv-code:init-start}{kv-code:init-end}.
Transitions are specified by the actions declared in \crefrange{kv-code:action-start}{kv-code:action-end}.
Actions can fire nondeterministically at any time when their precondition (\codefont{require} statements) holds.
Hence, the transition relation comprises of the disjunction of the transition relations induced by the actions.
The state is mutated by modifying the relations. For example,
message sends are modeled by inserting a tuple into the corresponding relation
(e.g. \cref{kv-code:inserting-tuple}),
while message receives are modeled by requiring a tuple to be in the relation
(e.g. \cref{kv-code:requiring-tuple}),
and then removing it
(e.g. \cref{kv-code:removing-tuple}).
The updates in \cref{kv-code:qf-update:recv_ack_msg,kv-code:qf-update:put} remove a set of tuples matching the pattern.

\iflong
\para{KV-R protocol}
\fi
Transferring keys between nodes begins by sending a \rtransfermsg{} from the owner to a new node (\cref{kv-code:reshard-start}), which stores the key-value pair when it receives the message (\cref{kv-code:receive-transfer-start}).
Upon sending a transfer message the original node cedes ownership (\cref{kv-code:cede-own}) and does not send new transfer messages.
Transfer messages may be dropped (\cref{kv-code:transfer-drop-start}). To ensure that the key-value pair is not lost, retransmissions are performed (\cref{kv-code:transfer-retransmit-start}) with the same sequence number until the target node acknowledges (which occurs in \cref{kv-code:ack-start}). Acknowledge messages themselves may be dropped (\cref{kv-code:ack-drop-start}).
Sequence numbers protect from delayed transfer messages, which might contain old values (\cref{kv-code:ignore-dup}).

\iflong
\para{KV-R safety property}
\fi
\Crefrange{kv-code:safety-start}{kv-code:safety-end}
specify the key safety property: at most one value is associated with any key, anywhere in the network.
Intuitively, the protocol satisfies this because each key $k$ is either currently
\begin{inparaenum}[(1)]
	\item \emph{owned} by a node, in which case this node is unique, or
	\item it is in the process of \emph{transferring} between nodes, in which case the careful use of sequence numbers ensures that the destination of the key is unique.
\end{inparaenum}
As is typical, it is not straightforward to translate this intuition into a full correctness proof. In particular, it is necessary to relate all the different components of the state, including clients' local state and pending messages.
\OMIT{This is a more elaborate version that makes it seem easy to write by hand:
When $k$ is \emph{owned}, it is owned by a single node, and there are no pending messages concerning this key. When $k$ is transferring \emph{transferring}, the key is not owned by any node, and pending non-duplicate transfer messages are unique.
}
\OMIT{In this flow I don't have a version of this:
As is typical, the lock service safety property is not inductive by itself,
because it is not strong enough to be preserved by the transitions of the system.
In particular, since the safety property only mentions \holdslock{}, it does not
constrain the other components of the state (including all network messages and the server state).}

\emph{Invariant inference} strives to automatically find an inductive invariant establishing safety.
This example is challenging for existing inference techniques (\Cref{sec:evaluation}).
This paper proposes \emph{user-guided invariant inference} based on \emph{phase-invariants} to overcome this challenge. The rest of the paper describes our approach, in which inference is provided with the phase structure in
\iflockserv
 \Cref{fig:lockserv-automaton},
 \else
 \Cref{fig:kv-automaton},
 \fi
 matching the high level intuitive explanation above. The algorithm then automatically infers facts about each phase to obtain an inductive invariant. \Cref{sec:phase-invariants} describes phase structures and inductive phase invariants, and \Cref{sec:inference} explains how these are used in user-guided invariant inference.

\commentout{
\subsection{Inductive Invariants for Distributed Protocols}
The standard approach to formally prove safety for a system such as the lock service is to find an
\emph{inductive invariant}, which is a property that
(1) implies safety,
(2) is true in the initial state, and
(3) is preserved by all transitions of the system.
}

\commentout{
	Unfortunately, inductive invariants are notoriously hard and laborious to specify, making \emph{invariant inference} an appealing direction.
In this setting an automatic algorithm finds an inductive invariant for a given system and safety property, relieving the user from specifying it by hand.
Unfortunately, state-of-the-art invariant inference techniques typically fall short for interesting distributed protocols, and often diverge in their search.
\OMIT{
\yotam{Include only if we can do better:
Furthermore, these techniques suffer from sensitivity to minor details in the program---for example, insubstantial details in the data representation might throw them off course---and even to the details of the underlying solver they use.}
These problems hinder the applicability of invariant inference techniques to distributed protocols.
}	
}

\commentout{
\paragraph{Inference implementation}
Our algorithm reduces the inference of inductive phase invariants from a given phase structure to a set of linear Constrained Horn Clauses.
Such systems can be solved in various ways, including abstract interpretation and property-directed reachability (PDR). We implement a PDR-based inference procedure for universally quantified characterizations over uninterpreted first-order logic, based on $\UPDR$~\cite{DBLP:journals/jacm/KarbyshevBIRS17}, and apply it to infer inductive phase invariants for various distributed protocols modeled in EPR. The implementation and evaluation are described in \Cref{sec:evaluation}.	
}   
\section{Phase Structures and Invariants}
\label{sec:phase-invariants}
\iflong
Phase invariants describe the
protocol as transitioning
between different logical stages.
\fi
In this section we introduce \emph{phase structures} and \emph{inductive phase invariants}\iflong
and explain their role in verifying safety properties.
\else. \fi
\iflong
In \Cref{sec:inference} we explain how we use these in guiding automatic invariant inference.
\else
These are used for guiding automatic invariant inference in \Cref{sec:inference}.
\fi
\ifapp
For brevity, proofs are deferred to \refappendix{sec:proofs}.
\else
Proofs appear in~\cite{extendedVersion}. \fi

\subsection{Phase Invariants}
\begin{definition}[Quantified Phase Automaton]
A \emph{quantified phase automaton} (\emph{phase automaton} for short) over $\voc$ is a tuple
$\phaseauto = (\autostate, \autoinit, \qoset, \edgelabel{}, \statelabel{})$ where:
\iflong
\begin{itemize}
	\item $\autostate$ is a finite set of \emph{phases}.
\item $\autoinit \in \autostate$ is the initial phase.
	\item $\qoset$ is a set of variables, called the \emph{automaton's quantifiers} \sharon{or view, or view quantifiers?}.
	\item $\edgelabel{}: \autostate \times \autostate \to \form{\vocdouble}{\qoset}$ is a function labeling every pair of phases by a transition relation formula, such that $\fv{\edgelabel{(\sphase,\tphase)}} \subseteq \qoset$ for every $(\sphase,\tphase) \in \autostate \times \autostate$. \item $\statelabel{}: \autostate \to \form{\voc}{\qoset}$ is a function labeling every phase by a \emph{phase characterization} formula, such that $\fv{\statelabel{\sphase}} \subseteq \qoset$ for every phase $\sphase \in \autostate$.
\end{itemize}
\else
$\autostate$ is a finite set of \emph{phases}.
$\autoinit \in \autostate$ is the initial phase.
$\qoset$ is a set of variables, called the \emph{automaton's quantifiers}.
$\edgelabel{}: \autostate \times \autostate \to \form{\vocdouble}{\qoset}$ is a function labeling every pair of phases by a transition relation formula, such that $\fv{\edgelabel{(\sphase,\tphase)}} \subseteq \qoset$ for every $(\sphase,\tphase) \in \autostate \times \autostate$. $\statelabel{}: \autostate \to \form{\voc}{\qoset}$ is a function labeling every phase by a \emph{phase characterization} formula, s.t.\ $\fv{\statelabel{\sphase}} \subseteq \qoset$ for every $\sphase \in \autostate$.
\fi

\end{definition}
Intuitively, $\qoset$ should be understood as free variables that are implicitly universally quantified outside of the automaton's scope. For each assignment to these variables, the automaton represents the progress along the phases from the point of view of this assignment, and thus $\qoset$ is also called the \emph{view} (or \emph{view quantifiers}).

We refer to $(\autostate, \autoinit, \qoset, \edgelabel{})$, where $\statelabel{}$ is omitted, as the \emph{phase structure} (or the \emph{automaton structure}) of $\phaseauto$.
\iflong

\fi
We refer by the \emph{edges} of $\phaseauto$ to
$
\autotrans = \{(\sphase,\tphase) \in \autostate \times \autostate \mid \edgelabel{(\sphase,\tphase)} \not\equiv \false \}.
$
\iflong

\fi
A \emph{trace} of $\phaseauto$ is a sequence of phases $\sphase_0,\ldots,\sphase_n$ such that $q_0 = \autoinit$ and $(\sphase_i,\sphase_{i+1}) \in \autotrans$ for every $0 \leq i < n$.
\iflong

\fi
We say that $\phaseauto$ is \emph{deterministic} if for every $(\sphase,\tphase_1), (\sphase,\tphase_2) \in \autotrans$ s.t. $\tphase_1 \neq \tphase_2$,
the formula $\edgelabel{(\sphase,\tphase_1)} \wedge \edgelabel{(\sphase,\tphase_2)}$ is unsatisfiable.

\iflockserv
\input{figs/lockserv-phase-automaton-w-char}
\else
\iflong
\begin{figure*}[t]
\begin{subfigure}{\textwidth}
  \centering
\begin{tikzpicture}[
    state/.style={
      draw,
      circle,
      font=\footnotesize\ttfamily,
      minimum size=9mm,
      inner sep=0pt
    },
    action/.style={
      font=\ttfamily\scriptsize }
  ]
    \node[state] (O) at (180:1.5cm) {{\small O[$k$]}};
    \node[state] (T) at (0:1.5cm) {{\small T[$k$]}};

    \draw[->] (O) edge[bend left] node[action,above] {\footnotesize reshard(*,*,$k$,*)} (T);
\draw[->] (T) edge[bend left] node[action,below] {\footnotesize recv\_transfer\_msg(*,*,$k$,*,*)} (O);

    \draw[->] (O) edge[loop left] node[action,left,align=center] 
      {drop\_transfer\_msg(*,*,$k$,*,*) \\
       retransmit(*,*,$k$,*,*) \\
       send\_ack(*,*,$k$,*,*) \\
       drop\_ack\_msg(*,*,$k$,*) \\
       recv\_ack\_msg(*,*,$k$,*) \\
       put(*,$k$,*)} (O);
    \draw[->] (T) edge[loop right] node[action,right,align=center] 
      {drop\_transfer\_msg(*,*,$k$,*,*) \\
       retransmit(*,*,$k$,*,*) \\
       send\_ack(*,*,$k$,*,*) \\
       drop\_ack\_msg(*,*,$k$,*) \\
       recv\_ack\_msg(*,*,$k$,*) \\
       put(*,$k$,*)} (T);
  \end{tikzpicture}
\end{subfigure}
\\
\begin{subfigure}{\textwidth}
  \centering
  \begin{lstlisting}[numbers=left, numberstyle=\tiny, numbersep=5pt, name=lockserv, morekeywords={phase}]
phase O[$k$]:
  invariant $\forall n_1,n_2.$ owner($n_1$,$k$)$\land$owner($n_2$,$k$)$\rightarrow n_1=n_2$
  invariant $\forall n,v.$ table($n$,$k$,$v$)$\rightarrow$owner($n$,$k$)
  invariant $\forall \vsrc,\vdst,v,s.$ $\neg$(transfer_msg($\vsrc$,$\vdst$,$k$,$v$,$s$)$\land$$\neg$seqnum_recvd($\vdst$,$\vsrc$,$s$))$\label{kv-char:disallow-recv-transfer}$
  invariant $\forall n_1,n_2,v_1,v_2.$ table($n_1$,$k$,$v_1$)$\land$table($n_2$,$k$,$v_2$)$\rightarrow n_1 = n_2 \land v_1 = v_2$
  invariant $\forall \vsrc,\vdst,v,s.$ $\neg$(unacked($\vsrc$,$\vdst$,$k$,$v$,$s$)$\land$$\neg$seqnum_recvd($\vdst$,$\vsrc$,$s$))

phase T[$k$]:
  invariant $\forall n.$ $\neg$owner($n$,$k$)
  invariant $\forall n,v.$ table($n$,$k$,$v$)$\rightarrow$owner($n$,$k$)
  invariant $\label{kv-char:unique-transfer-unreceived}$$\forall \vsrc_1,\vsrc_2,\vdst_1,\vdst_2,v_1,v_2,s_1,s_2.$ transfer_msg($\vsrc_1$,$\vdst_1$,$k$,$v_1$,$s_1$)$\land$$\neg$seqnum_recvd($\vdst_1$,$\vsrc_1$,$s_1$)
    $\land$transfer_msg($\vsrc_2$,$\vdst_2$,$k$,$v_2$,$s_2$)$\land$$\neg$seqnum_recvd($\vdst_2$,$\vsrc_2$,$s_2$)$\rightarrow \vsrc_1 = \vsrc_2 \land \vdst_1 = \vdst_2 \land v_1=v_2 \land s_1=s_2$
  invariant $\forall \vsrc_1,\vsrc_2,\vdst_1,\vdst_2,v_1,v_2,s_1,s_2.$transfer_msg($\vsrc_1$,$\vdst_1$,$k$,$v_1$,$s_1$)$\land$$\neg$seqnum_recvd($\vdst_1$,$\vdst_1$,$s_1$)
    $\land$unacked($\vsrc_2$,$\vdst_2$,$k$,$v_2$,$s_2$)$\land$$\neg$seqnum_recvd($\vdst_2$,$\vsrc_2$,$s_2$)$\rightarrow \vsrc_1 = \vsrc_2 \land \vdst_1 \land \vdst_2 \land v_1 = v_2 \land s_1 = s_2$
  invariant $\forall \vsrc_1,\vsrc_2,\vdst_1,\vdst_2,v_1,v_2,s_1,s_2.$ unacked($\vsrc_1$,$\vdst_1$,$k$,$v_1$,$s_1$)$\land$$\neg$seqnum_recvd($\vdst_1$,$\vsrc_1$,$s_1$)
    $\land$unacked($\vsrc_2$,$\vdst_2$,$k$,$v_2$,$s_2$)$\land$$\neg$seqnum_recvd($\vdst_2$,$\vsrc_2$,$s_2$)$\rightarrow \vsrc_1 = \vsrc_2 \land \vdst_1 = \vdst_2 \land v_1 = v_2 \land s_1 = s_2$
\end{lstlisting}
\end{subfigure}
\caption{\footnotesize Phase structure for key-value store (top) and phase characterizations (bottom). The user provides the phase structure, and inference automatically produces the phase characterizations, forming a safe inductive phase automaton.
}
\label{fig:kv-automaton}
\label{fig:kv-invariants}
\label{fig:kv-phase}
\end{figure*}  \else
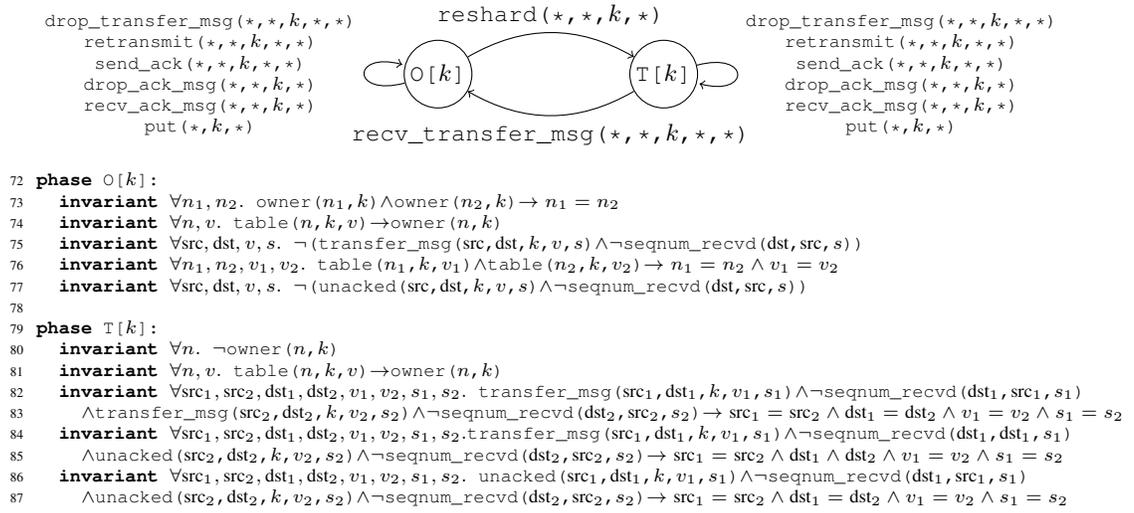
\begin{figure*}[t]
\begin{subfigure}{\textwidth}
  \centering
\begin{tikzpicture}[
    state/.style={
      draw,
      circle,
      font=\footnotesize\ttfamily,
      minimum size=9mm,
      inner sep=0pt
    },
    action/.style={
      font=\ttfamily\scriptsize }
  ]
    \node[state] (O) at (180:1.5cm) {{\small O[$k$]}};
    \node[state] (T) at (0:1.5cm) {{\small T[$k$]}};

    \draw[->] (O) edge[bend left] node[action,above] {\footnotesize reshard(*,*,$k$,*)} (T);
\draw[->] (T) edge[bend left] node[action,below] {\footnotesize recv\_transfer\_msg(*,*,$k$,*,*)} (O);

    \draw[->] (O) edge[loop left] node[action,left,align=center] 
      {drop\_transfer\_msg(*,*,$k$,*,*) \\
       retransmit(*,*,$k$,*,*) \\
       send\_ack(*,*,$k$,*,*) \\
       drop\_ack\_msg(*,*,$k$,*) \\
       recv\_ack\_msg(*,*,$k$,*) \\
       put(*,$k$,*)} (O);
    \draw[->] (T) edge[loop right] node[action,right,align=center] 
      {drop\_transfer\_msg(*,*,$k$,*,*) \\
       retransmit(*,*,$k$,*,*) \\
       send\_ack(*,*,$k$,*,*) \\
       drop\_ack\_msg(*,*,$k$,*) \\
       recv\_ack\_msg(*,*,$k$,*) \\
       put(*,$k$,*)} (T);
  \end{tikzpicture}
\end{subfigure}
\\
\begin{subfigure}{\textwidth}
  \centering
  \begin{lstlisting}[numbers=left, numberstyle=\tiny, numbersep=5pt, name=lockserv, morekeywords={phase}]
phase O[$k$]:
  invariant $\forall n_1,n_2.$ owner($n_1$,$k$)$\land$owner($n_2$,$k$)$\rightarrow n_1=n_2$
  invariant $\forall n,v.$ table($n$,$k$,$v$)$\rightarrow$owner($n$,$k$)
  invariant $\forall \vsrc,\vdst,v,s.$ $\neg$(transfer_msg($\vsrc$,$\vdst$,$k$,$v$,$s$)$\land$$\neg$seqnum_recvd($\vdst$,$\vsrc$,$s$))$\label{kv-char:disallow-recv-transfer}$
  invariant $\forall n_1,n_2,v_1,v_2.$ table($n_1$,$k$,$v_1$)$\land$table($n_2$,$k$,$v_2$)$\rightarrow n_1 = n_2 \land v_1 = v_2$
  invariant $\forall \vsrc,\vdst,v,s.$ $\neg$(unacked($\vsrc$,$\vdst$,$k$,$v$,$s$)$\land$$\neg$seqnum_recvd($\vdst$,$\vsrc$,$s$))

phase T[$k$]:
  invariant $\forall n.$ $\neg$owner($n$,$k$)
  invariant $\forall n,v.$ table($n$,$k$,$v$)$\rightarrow$owner($n$,$k$)
  invariant $\label{kv-char:unique-transfer-unreceived}$$\forall \vsrc_1,\vsrc_2,\vdst_1,\vdst_2,v_1,v_2,s_1,s_2.$ transfer_msg($\vsrc_1$,$\vdst_1$,$k$,$v_1$,$s_1$)$\land$$\neg$seqnum_recvd($\vdst_1$,$\vsrc_1$,$s_1$)
    $\land$transfer_msg($\vsrc_2$,$\vdst_2$,$k$,$v_2$,$s_2$)$\land$$\neg$seqnum_recvd($\vdst_2$,$\vsrc_2$,$s_2$)$\rightarrow (\vsrc_1, \vdst_1, v_1, s_1)=(\vsrc_2, \vdst_2, v_2, s_2)$
  invariant $\forall \vsrc_1,\vsrc_2,\vdst_1,\vdst_2,v_1,v_2,s_1,s_2.$transfer_msg($\vsrc_1$,$\vdst_1$,$k$,$v_1$,$s_1$)$\land$$\neg$seqnum_recvd($\vdst_1$,$\vdst_1$,$s_1$)
    $\land$unacked($\vsrc_2$,$\vdst_2$,$k$,$v_2$,$s_2$)$\land$$\neg$seqnum_recvd($\vdst_2$,$\vsrc_2$,$s_2$)$\rightarrow (\vsrc_1, \vdst_1, v_1, s_1) = (\vsrc_2, \vdst_2, v_2, s_2)$
  invariant $\forall \vsrc_1,\vsrc_2,\vdst_1,\vdst_2,v_1,v_2,s_1,s_2.$ unacked($\vsrc_1$,$\vdst_1$,$k$,$v_1$,$s_1$)$\land$$\neg$seqnum_recvd($\vdst_1$,$\vsrc_1$,$s_1$)
    $\land$unacked($\vsrc_2$,$\vdst_2$,$k$,$v_2$,$s_2$)$\land$$\neg$seqnum_recvd($\vdst_2$,$\vsrc_2$,$s_2$)$\rightarrow (\vsrc_1, \vdst_1, v_1, s_1) = (\vsrc_2, \vdst_2, v_2, s_2)$
\end{lstlisting}
\end{subfigure}
\caption{\footnotesize Phase structure for key-value store (top) and phase characterizations (bottom). The user provides the phase structure, and inference automatically produces the phase characterizations, forming a safe inductive phase automaton.
}
\label{fig:kv-automaton}
\label{fig:kv-invariants}
\label{fig:kv-phase}
\end{figure*}  \fi
\fi

\iflockserv
\begin{example}[Quantified Phase Automaton and Structure] \label{ex:phase-structure}
\Cref{fig:lockservice-phase} shows a phase structure of the running example.
The phase invariant is structured as an automaton parameterized by a lock $\lockelem$.
The automaton, displayed in \Cref{fig:lockserv-automaton}, describes the protocol as transitioning between 4 distinct (logical) phases of $\lockelem$.
Clockwise from top, the phases for the lock $\lockelem$ are: Server Holds Lock, Grant message destined for some client, lock held by some client, and Unlock message sent from some client and pending to be received by the server.  The transitions are labeled by actions of the system; a wildcard \texttt{*} means that the action is executed by some arbitrary node.
Each phase has
a self loop for \texttt{send\_lock}, which can happen at any time. Actions related to locks other than $\lockelem$ are considered as self-loops, and omitted here for brevity. There
are four non-self-loop edges, forming the core of the automaton and
capturing the lifecycle of the lock service.
The automaton transitions between phases when the protocol executes an action that matches the automaton's edge between the phases.
For example, a \codefont{recv\_grant} for some client and the lock $\lockelem$ is possible only when starting from phase \texttt{G}($\lockelem$), in which case the automaton transitions to phase \texttt{HL}($\lockelem$); in all other phases, the transition is disallowed.
Our algorithm exploits missing edges as additional safety properties to guide the search.
Each phase in the automaton is \emph{characterized} by a set of logical formulas, depicted in \Cref{fig:lockserv-invariants}.
When these are omitted \Cref{fig:lockserv-automaton} represents a \emph{phase structure}, which is the input to our inference algorithm.
\end{example}
\else
\iflong
\begin{example}[Quantified Phase Automaton and Structure]
\else
\begin{example}
\fi
\label{ex:phase-structure}
\Cref{fig:kv-phase} shows a phase automaton for the running example, with the view of a single key $k$.
It describes the protocol as transitioning between two distinct (logical) phases of $k$: \emph{owned} ($\texttt{O}[k]$) and \emph{transferring} ($\texttt{T}[k]$).
The edges are labeled by actions of the system. A wildcard \texttt{*} means that the action is executed with an arbitrary argument.
The two central actions are (i) {\small \texttt{reshard}}, which transitions from  $\texttt{O}[k]$ to $\texttt{T}[k]$, but cannot execute in $\texttt{T}[k]$, and (ii) {\small \texttt{recv\_transfer\_message}}, which does the opposite.
The rest of the actions do not cause a phase change and appear on a self loop in each phase.
Actions related to keys other than $k$ are considered as self-loops, and omitted here for brevity.
\iflong
Intuitively, the automaton transitions between phases when the protocol executes an action that matches the automaton's edge between the phases.
\fi
Some actions are \emph{disallowed} in certain phases, namely, do not label \emph{any} outgoing edge from a phase, such as {\small \texttt{recv\_transfer\_msg($k$)}} in $\texttt{O}[k]$.
\emph{Characterizations} for each phase are depicted in \Cref{fig:kv-invariants} (bottom).
Without them, \Cref{fig:kv-automaton} represents a \emph{phase structure}, which serves as the input to our inference algorithm.
\iflong
\begin{remark}
\fi
We remark that the choice of automaton aims to reflect the safety property of interest.
\iflockserv
In the running example, one might instead imagine taking the view of a single client
as it interacts with multiple locks.
\else
In our example, one might instead imagine taking the view of a single node
as it interacts with multiple keys, which might seem intuitive
\fi
from the standpoint of implementing
the system.
However, it is not appropriate for the proof of
\iflockserv
mutual exclusion,
since locks pass in and out of the view of a single client over their lifetime.
\else
value uniqueness, since keys pass in and out of the view of a single client\iflong over their lifetime\fi.
\fi
\iflong
The phase automaton should not aim to capture the phase structure of the implementation but that arising from the correctness intuition.
\end{remark}
\fi
\fi
\end{example}
\fi

\commentout{
	\begin{remark}[Phases under a view]
	\TODO{move? explain somewhere earlier about the view?}
	The different phases are understood with respect to a \emph{view},
	\iflockserv
	a single lock in the case of the lock service.
	\else
	a single key in the case of the running example.
	\fi
	This encodes another part of the intuitive argument: the focus on a
	\iflockserv
	single lock at a time,
	which exploits the fact that different locks
	\else
	single key at a time,
	which exploits the fact that different keys
	\fi
	do not interact in this protocol. \TODO{improve} \TODO{footnote}

\end{remark}	
}

We now formally define \emph{phase invariants} as phase automata that overapproximate the behaviors of the original system.
\begin{definition}[Language of Phase Automaton]
\label{def:language-phase-automaton}
Let $\phaseauto$ be a quantified phase automaton over $\voc$, and $\ov{\sigma} = \sigma_0,\ldots,\sigma_n$ a finite sequence of states \iflong (first-order structures)\fi \iflong in $\structOfVoc{\voc}$\else over $\voc$\fi, all with domain $\Dom$.
Let $v: \qoset \to \Dom$ be a valuation of the automaton quantifiers.
We say that:
\begin{itemize}
\item $\ov{\sigma}, v \models \phaseauto$ if there exists a trace of phases $\sphase_0,\ldots,\sphase_n$ such that $(\sigma_i, \sigma_{i+1}), v \models \edgelabel{(\sphase_i, \sphase_{i+1})}$ for every $0 \leq i < n$ \iflong (in particular, $\sphase_0,\ldots,\sphase_n$ is a trace of $\phaseauto$)\fi and $\sigma_i, v \models \statelabel{\sphase_i}$ for every $0 \leq i \leq n$.

\item $\ov{\sigma} \models \phaseauto$ if $\ov{\sigma}, v \models \phaseauto$ for every valuation $v$.
\end{itemize}
The language of $\phaseauto$ is $\Lang{\phaseauto} = \{ \ov{\sigma} \mid \ov{\sigma} \models \phaseauto\}$.
\end{definition}

\iflong
\subsection{Phase Invariants}
\fi

\OMIT{omitting: We use phase automata to over-approximate the set of traces of a transition system, in which case we refer to them as phase invariants:}

\begin{definition}[Phase Invariant]
A phase automaton $\phaseauto$ is a \emph{phase invariant} for a transition system $\TS$ \iflong(both over $\voc$)\fi if $\Lang{\TS} \subseteq \Lang{\phaseauto}$, where $\Lang{\TS}$ denotes the set of \iflong all\fi finite traces of $\TS$.
\end{definition}
\commentout{
That is, $\phaseauto$ is a phase invariant of $\TS$ if the language of the latter is included in the language of the former (when viewing the set of traces of $\TS$ as its language).
In other words, $\phaseauto$ over-approximates the set of traces of $\TS$.}\iflong{
Trace inclusion ensures that a phase invariant $\phaseauto$ can be soundly used to verify safety properties of $\TS$.
\fi

\iflockserv
\iflong
\begin{example}[Phase Invariant]
\else
\begin{example}
\fi
The phase automaton of \Cref{fig:lockserv-automaton}  is a \emph{phase invariant} for the lock service protocol: the phase characterizations of each phase hold whenever an execution trace of the original system reaches the phase.
The phases and characterizations correspond exactly to the properties from the intuitive correctness argument in \Cref{sec:running}; one out of the four properties is true in each phase, while all others are false.
\end{example}
\else
\iflong
\begin{example}[Phase Invariant]
\else
\begin{example}
\fi
The phase automaton of \Cref{fig:kv-automaton} is a \emph{phase invariant} for the protocol: intuitively, whenever an execution of the protocol reaches a phase, its characterizations hold.
This fact may not be straightforward to establish. To this end we develop the notion of \emph{inductive} phase invariants.
\end{example}

\iflong
\subsection{Establishing Phase Invariants with Inductive Phase Invariants}
\else
\subsection{Establishing Safety and Phase Invariants with Inductive Phase Invariants}
\fi

To establish phase invariants, we use inductiveness:

\begin{definition}[Inductive Phase Invariant] \label{def:ind}
$\phaseauto$ is \emph{inductive w.r.t.\ $\TS = (\Init, \TR)$} if\iflong the following conditions hold\fi:
\begin{description}
\item[{\bf Initiation:}] $\Init \implies \left(\forall \qoset. \ \statelabel{\autoinit}\right)$ .
\item[{\bf Inductiveness}:] for all $(\sphase,\tphase) \in \autotrans$, \quad $\forall \qoset. \ \left(\statelabel{\sphase} \land \edgelabel{(\sphase,\tphase)} \implies \statelabel{\tphase}'\right)$.
\item[{\bf Edge Covering}:] for every $\sphase \in \autostate$, \quad $\forall \qoset. \ \left(\statelabel{\sphase} \land \TR \implies \bigvee_{(\sphase,\tphase) \in \autotrans}{\edgelabel{(\sphase,\tphase)}}\right)$ .
\end{description}
\end{definition}

\iflong
\begin{example}[Inductive Phase Invariant]
\else
\begin{example}
\fi
The phase automaton in
\iflockserv
\Cref{fig:lockservice-phase}
\else
\Cref{fig:kv-phase}
\fi
is an inductive phase invariant\iflong: the characterizations are such that in every possible phase, when the protocol executes a valid action starting from a state satisfying the current phase characterizations, there is an outgoing edge that matches this action (covering), and the resulting program state satisfies the characterization of the target phase (inductiveness)\fi.
 \iflockserv
For example, it can be seen that every transition of the protocol starting from a state that satisfies the characterizations of \texttt{G}($\lockelem$) matches the labeling of at least one automaton edge outgoing from \texttt{G}($\lockelem$), and if, for example, the transition matches the labeling of the edge to \texttt{HL}($\lockelem$) (i.e.\ a \codefont{recv\_grant}(\texttt{*},$\lockelem$) action), the protocol necessarily ends in a state satisfying the characterizations of \texttt{HL}($\lockelem$).
\else
For example, the only disallowed transition in \texttt{O[$k$]} is {\small \texttt{recv\_transfer\_message}}, which indeed cannot execute in \texttt{O[$k$]} according to the characterization in \cref{kv-char:disallow-recv-transfer}.
Further, if, for example, a protocol's transition from \texttt{O[$k$]} matches the labeling of the edge to \texttt{T[$k$]} (i.e.\ a {\small \texttt{reshard}} action on $k$), the post-state necessarily satisfies the characterizations of \texttt{T[$k$]}: for instance, the post-state satisfies the uniqueness of unreceived transfer messages (\cref{kv-char:unique-transfer-unreceived}) because in the pre-state there are none (\cref{kv-char:disallow-recv-transfer}).
\fi
\end{example}

\commentout{
The requirements listed in \Cref{eq:ind-auto-init,eq:ind-auto-consec,eq:ind-edge-covering} define \emph{verification conditions} for verifying that a given phase automaton is inductive w.r.t.\ $\TS$, \sharon{rephrase as follws: Even though the focus of this paper is on inference of inductive invariants, this already gives rise to deductive verification based on user-provided phased automata, as justified by the following lemma} and form the basis for \emph{deductive verification} based on phase automata, as justified by the following lemma.
}

\begin{lemma}
\label{lemma:phase-ind-to-phase-inv}
If $\phaseauto$ is inductive w.r.t.\ $\TS$ then it is a phase invariant for $\TS$.
\end{lemma}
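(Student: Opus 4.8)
The goal is to show $\Lang{\TS} \subseteq \Lang{\phaseauto}$, i.e. every finite trace $\ov{\sigma} = \sigma_0,\ldots,\sigma_n$ of $\TS$ satisfies $\ov{\sigma} \models \phaseauto$. Unfolding \Cref{def:language-phase-automaton}, this means: for every valuation $v : \qoset \to \Dom$ of the view quantifiers, there is a trace of phases $\sphase_0,\ldots,\sphase_n$ of $\phaseauto$ with $\sphase_0 = \autoinit$, such that $(\sigma_i,\sigma_{i+1}), v \models \edgelabel{(\sphase_i,\sphase_{i+1})}$ for all $i < n$ and $\sigma_i, v \models \statelabel{\sphase_i}$ for all $i \le n$. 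So I would fix an arbitrary trace $\ov{\sigma}$ of $\TS$ and an arbitrary valuation $v$, and construct the phase trace by induction on $n$ (equivalently, on the prefix length).

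\textbf{Base case.} For the prefix of length $0$, set $\sphase_0 = \autoinit$. Since $\sigma_0$ is an initial state of $\TS$, $\sigma_0 \models \Init$; by the \textbf{Initiation} condition $\Init \implies \forall \qoset.\ \statelabel{\autoinit}$, so $\sigma_0, v \models \statelabel{\autoinit}$ for our (indeed any) valuation $v$. This discharges the characterization requirement at index $0$.

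\textbf{Inductive step.} Suppose we have already built a phase trace $\sphase_0,\ldots,\sphase_i$ (with $i < n$) satisfying the edge and characterization requirements up to index $i$; in particular $\sigma_i, v \models \statelabel{\sphase_i}$. Since $\ov{\sigma}$ is a trace of $\TS$, $(\sigma_i,\sigma_{i+1}) \models \TR$, hence $(\sigma_i,\sigma_{i+1}), v \models \TR$ ($\TR$ has no free variables, so the valuation is irrelevant). Combined with $\sigma_i, v \models \statelabel{\sphase_i}$, the \textbf{Edge Covering} condition at $\sphase_i$ gives $(\sigma_i,\sigma_{i+1}), v \models \bigvee_{(\sphase_i,\tphase)\in\autotrans} \edgelabel{(\sphase_i,\tphase)}$, so there exists $\tphase$ with $(\sphase_i,\tphase) \in \autotrans$ and $(\sigma_i,\sigma_{i+1}), v \models \edgelabel{(\sphase_i,\tphase)}$. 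Set $\sphase_{i+1} := \tphase$; then $(\sphase_i,\sphase_{i+1}) \in \autotrans$ extends the phase trace, and the edge requirement at index $i$ holds. Finally, \textbf{Inductiveness} for the edge $(\sphase_i,\sphase_{i+1})$ gives $\statelabel{\sphase_i} \land \edgelabel{(\sphase_i,\sphase_{i+1})} \implies \statelabel{\sphase_{i+1}}'$; since the pair of states $(\sigma_i,\sigma_{i+1})$ under $v$ satisfies both antecedents, it satisfies $\statelabel{\sphase_{i+1}}'$, which by definition of priming/post-state interpretation means $\sigma_{i+1}, v \models \statelabel{\sphase_{i+1}}$. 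This discharges the characterization requirement at index $i+1$, closing the induction.

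\textbf{Main obstacle.} The argument is essentially a routine simulation/invariant-propagation induction, so I do not expect a serious mathematical obstacle; the only points needing care are bookkeeping ones. First, one must be careful that the phase $\sphase_{i+1}$ chosen by Edge Covering is the \emph{same} one whose edge label is witnessed — this is automatic since the disjunct that holds identifies the edge. Second, one should note determinism of $\phaseauto$ is \emph{not} needed: nondeterministic choice of $\sphase_{i+1}$ is fine because we only need existence of \emph{a} phase trace. Third, the interplay between primed formulas and the $(\sigma_i,\sigma_{i+1})$ structure must be unwound using the preliminaries' convention that $(\struct_1,\struct_2) \models \varphi'$ iff $\struct_2 \models \varphi$; with free view-variables present, this becomes $(\struct_1,\struct_2), v \models \varphi'$ iff $\struct_2, v \models \varphi$, which follows because priming only renames $\voc$-symbols and leaves $\qoset$-variables untouched. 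Once these conventions are spelled out, the induction goes through directly.
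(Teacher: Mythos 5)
Your proof is correct and follows essentially the same route as the paper's: the paper packages the argument as a simulation relation $H = \{(\sigma,\sphase) \mid \sigma, v \models \statelabel{\sphase}\}$ whose three defining properties are discharged by Initiation, Edge Covering, and Inductiveness respectively, and then notes that trace inclusion follows; you simply inline that final induction on the trace prefix. The bookkeeping points you flag (the witnessing disjunct identifying the edge, no need for determinism, and the priming convention) are all handled correctly.
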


\commentout{
To prove the lemma we first define a notion of \emph{simulation} which provides a sufficient condition for $\phaseauto$ being a phase invariant of $\TS$. We then show that for an inductive phase invariant, the phase characterizations induce a simulation relation.

\begin{definition}[Simulation]
\label{def:simulation}
We say that $\phaseauto$ \emph{simulates} $\TS = (\Init, \TR)$ (both over $\voc$) if for every domain $\Dom$ and every valuation $v$ over $\Dom$ there is a relation $H \subseteq \structdom{\voc}{\Dom} \times \autostate$ such that
\begin{enumerate}
	\item \label{it:simulation-labeling-agree} $(\sigma, \sphase) \in H$ implies $\sigma,v \models \statelabel{\sphase}$.
	\item \label{it:simulation-init} For every $\sigma_0 \models \Init$, $(\sigma_0, \autoinit) \in H$.
	\item \label{it:simulation-next} For every transition $(\sigma, \sigma') \in \TR$, if $(\sigma, \sphase) \in H$ then there exists $\sphase' \in \autostate$ such that  $(\sigma', \sphase') \in H$ and
	$(\sigma, \sigma'), v \models \edgelabel{(\sphase,\sphase')}$.
\end{enumerate}
\end{definition}

\begin{lemma}\label{lem:sim-to-phase-inv}
If $\phaseauto$ simulates $\TS$ then it is a phase invariant for $\TS$.
\end{lemma}
\begin{proof}
	Let $\ov{\sigma} = \sigma_0, \ldots, \sigma_n$ be a finite trace of $\TS$, and let $v$ be a valuation of $\phaseauto$'s quantifiers to the domain of $\ov{\sigma}$.
	As is standard, by induction on $n$ we can show that there exists a trace $\sphase_0, \ldots, \sphase_n$ of the phase automaton such that
	$\sigma_i, v \models \statelabel{\sphase_i}$ and $(\sigma_i, \sigma_{i+1}), v \models \edgelabel{(\sphase_i, \sphase_{i+1})}$.
	This implies that $\ov{\sigma}, v \models \phaseauto$, as required.
\qed
\end{proof}

Equipped with the notion of simulation, we now return to the proof of \Cref{lemma:phase-ind-to-phase-inv}, and show that the definition of an inductive phase invariant ensures that
$\statelabel{}$ induces a simulation relation, and therefore ensures that the automaton is a phase invariant.
\begin{proof}[Proof of \Cref{lemma:phase-ind-to-phase-inv}]
We show that if $\phaseauto$ is inductive w.r.t.\ $\TS$ then $\phaseauto$ simulates $\TS$ (and hence by \Cref{lem:sim-to-phase-inv} it is a phase invariant).	
Fix a domain $\Dom$ and a valuation $v$.
	Take the simulation relation to be
	\begin{equation}
		H = \{(\sigma, \sphase) \, | \, \sigma, v \models \statelabel{\sphase}\}.
	\end{equation}
	To see that this is a simulation relation as required in \Cref{def:simulation}, note that requirement \ref{it:simulation-labeling-agree} holds by construction.
	The condition of \Cref{eq:ind-auto-init} implies requirement \ref{it:simulation-init}.
	For requirement \ref{it:simulation-next}, let $(\sigma, \sphase) \in H$ and $(\sigma, \sigma') \in \TR$.
	We have that $\sigma, v \models \statelabel{\sphase}$, so, from edge covering (\Cref{eq:ind-edge-covering}), there exists $\tphase \in \autostate$ such that $(\sigma, \sigma'), v \models \edgelabel{(\sphase,\tphase)}$.
	From the condition of \Cref{eq:ind-auto-consec} necessarily $\sigma', v \models \statelabel{\tphase}'$, and thus $(\sigma', \tphase) \in H$, as required.
\end{proof}

}

\OMIT{
\begin{corollary}
\label{cor:ind-safe-implies-safe}
If $\phaseauto$ is inductive w.r.t.\ $\TS$ and safe w.r.t\ a safety property $\forall \qoset. \ \Safety$ then $\forall \qoset. \, \Safety$ is an invariant of $\TS$.
\end{corollary}
}

\begin{remark} \label{rem:weaker-ind}
The careful reader may notice that the inductiveness requirement is stronger than needed to ensure that the characterizations form a phase invariant. It could be weakened to require for every $\sphase \in \autostate$:
$
\forall \qoset. \ \statelabel{\sphase} \wedge \TR \implies \bigvee_{(\sphase,\tphase) \in \autotrans}{\edgelabel{(\sphase,\tphase)} \wedge \statelabel{\tphase}'}.
$
However, as we explain in \Cref{sec:inference}, our notion of inductiveness is crucial for \emph{inferring} inductive phase automata, which is the goal of this paper.
Furthermore, for deterministic phase automata, the two requirements coincide.
\end{remark}

\para{Inductive invariants vs. inductive phase invariants}
Inductive invariants and inductive phase invariants are closely related:
\iflong
an inductive phase invariant  induces a ``standard'' inductive invariant, and vice versa: \fi
\begin{lemma}
\label{lemma:phase-to-ind}
If $\phaseauto$ is inductive w.r.t.\ $\TS$ then
$
	\forall \qoset. \ \bigvee_{\sphase \in \autostate}{\statelabel{\sphase}}
$
is an inductive invariant for $\TS$. \iflong
Conversely, if
\else
If
\fi
$\Inv$ is an inductive invariant for $\TS$, then the phase automaton $\phaseauto_{\Inv} = (\{\sphase\}, \{\sphase\}, \emptyset, \edgelabel{}, \statelabel{})$, where $\edgelabel{(\sphase,\sphase)} = \TR$ and $\statelabel{\sphase} = \Inv$ is an inductive phase automaton w.r.t. $\TS$.
\end{lemma}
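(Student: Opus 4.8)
The plan is to prove the two directions separately, each by unwinding definitions. For the first direction, set $\IndInv = \forall \qoset.\ \bigvee_{\sphase \in \autostate}\statelabel{\sphase}$ and verify the two conditions of an inductive invariant. Initiation: the \textbf{Initiation} clause of \Cref{def:ind} gives $\Init \implies \forall \qoset.\ \statelabel{\autoinit}$, and since $\statelabel{\autoinit}$ is one of the disjuncts, this immediately yields $\Init \implies \IndInv$. Consecution: assume a transition $(\sigma,\sigma') \models \TR$ with $\sigma \models \IndInv$, and fix an arbitrary valuation $v$ of $\qoset$; then $\sigma, v \models \statelabel{\sphase}$ for some phase $\sphase \in \autostate$. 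By \textbf{Edge Covering} applied at $\sphase$, there is an edge $(\sphase,\tphase) \in \autotrans$ with $(\sigma,\sigma'), v \models \edgelabel{(\sphase,\tphase)}$. Then \textbf{Inductiveness} for that edge gives $\sigma', v \models \statelabel{\tphase}'$, so $\sigma', v \models \bigvee_{\sphase' \in \autostate}\statelabel{\sphase'}{}'$. Since $v$ was arbitrary, $\sigma' \models \IndInv'$, establishing consecution. (Note the safety clause is not part of this lemma's statement, so I would not need it here.)

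For the converse direction, let $\Inv$ be an inductive invariant for $\TS$ and consider $\phaseauto_{\Inv}$ with a single phase $\sphase$ that is both the only phase and the initial phase, no view quantifiers ($\qoset = \emptyset$), self-edge $\edgelabel{(\sphase,\sphase)} = \TR$, and $\statelabel{\sphase} = \Inv$. First observe that $\autotrans = \{(\sphase,\sphase)\}$ exactly when $\TR \not\equiv \false$, which is the only interesting case (if $\TR \equiv \false$ every state is trivially handled). Check the three conditions: \textbf{Initiation} is $\Init \implies \Inv$, which holds since $\Inv$ is an inductive invariant; \textbf{Inductiveness} for the single edge is $\Inv \land \TR \implies \Inv'$, which is exactly consecution of $\Inv$; and \textbf{Edge Covering} at $\sphase$ is $\Inv \land \TR \implies \edgelabel{(\sphase,\sphase)} = \TR$, which is trivially valid. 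Hence $\phaseauto_{\Inv}$ is inductive w.r.t.\ $\TS$.

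Neither direction presents a real obstacle — the statement is essentially a bookkeeping exercise translating between the monolithic and automaton-structured formulations. The only point requiring a moment of care is the quantifier handling in the first direction: \textbf{Edge Covering} and \textbf{Inductiveness} are stated as $\forall \qoset.\ (\cdots)$, so in the consecution argument one must fix the valuation $v$ \emph{before} invoking these clauses and observe that the same $v$ threads through both, which is legitimate because both clauses quantify over all valuations. A secondary subtlety is the degenerate case $\TR \equiv \false$ in the converse (and whether $\autoinit \in \autostate$ forces a nonempty trace structure), but this is harmless since there are then no transitions to check. I would also remark that this lemma, together with \Cref{lemma:phase-ind-to-phase-inv}, is what justifies viewing inductive phase invariants as a genuine generalization of inductive invariants rather than a strictly weaker or stronger notion.
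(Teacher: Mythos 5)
Your proof is correct and follows essentially the same route as the paper's: initiation is immediate, consecution is obtained by picking the phase satisfied under a fixed valuation, applying edge covering to find a matching edge, and then inductiveness to land in the target phase's characterization; the converse is the same single-phase construction the paper dismisses as ``clear.'' Your version is merely more explicit about threading the valuation $v$ through the edge-covering and inductiveness clauses and about the degenerate $\TR \equiv \false$ case, both of which the paper elides.
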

In this sense, phase inductive invariants are as expressive as inductive invariants.
However, as we show in this paper, their structure can be used by a user as an intuitive way to guide an automatic invariant inference algorithm.

\iflong
\begin{remark}
It is straightforward to add more flexibility to a phase automaton by allowing a set of initial states, $\autoinitset$.
In this case, for the automaton to over-approximate all the reachable states of $\TS$, it suffices that every initial state corresponds to some initial phase of $\TS$, possibly depending on the valuation of $\qoset$.
Therefore, the initiation constraint in the definition of an inductive phase invariant for $\TS$ may be relaxed into:
\[
\Init \implies \left(\forall \qoset. \ \bigvee_{\sphase_0 \in \autoinitset}\statelabel{\sphase_0}\right).
\]
\end{remark}
\fi

\iflong
\subsection{Safe Inductive Phase Invariants}
\else
\para{Safe Inductive Phase Invariants}
\fi
Next we show that an inductive phase invariant can be used to establish safety.

\begin{definition}[Safe Phase Automaton]
\label{def:safe-automaton}
Let $\phaseauto$ be a phase automaton over $\voc$ with quantifiers $\qoset$
\iflong, and let $\forall \qoset.\ \Safety$ be a safety property\fi. Then $\phaseauto$ is \emph{safe} w.r.t.\ $\forall \qoset. \ \Safety$ if
\iflong
\begin{equation*}
\label{eq:safe-automaton}
	\forall \qoset. \ \left(\statelabel{\sphase} \implies \Safety\right)
\end{equation*}
\else
\, $\forall \qoset. \ \left(\statelabel{\sphase} \implies \Safety\right)$
\fi
holds for every $\sphase \in \autostate$.
\end{definition}

\OMIT{
\iflong
The following lemma shows that if $\phaseauto$ is a phase invariant of $\TS$ then the safety of $\phaseauto$ implies safety of $\TS$.
\begin{lemma} \label{lem:phase-inv-to-safety}
If $\phaseauto$ is a phase invariant for $\TS$ and safe w.r.t.\ $\forall \qoset.\ \Safety$, then $\forall \qoset. \, \Safety$ is an invariant of $\TS$.
\end{lemma}
\begin{proof}
	Let $\sigma_1,\ldots,\sigma_n$ be a finite trace of $\TS$.
	Let $v$ be a valuation for $\qoset$.
	$\ov{\sigma} \models \phaseauto$, so there exists a trace of phases $\sphase_0,\ldots,\sphase_n$ such that $\sigma_i, v \models \statelabel{\sphase_i}$.
	Since $v \models \statelabel{\sphase_i} \to \Safety$, for every $i$ it holds that $\sigma_i, v \models \Safety$.
The claim follows.
\end{proof}

\begin{corollary}
\label{cor:ind-safe-implies-safe}
\iflong
If $\phaseauto$ is inductive w.r.t.\ $\TS$ and safe w.r.t.\ a \TODO{omit next 2 words if it helps} safety property $\forall \qoset. \ \Safety$ then $\forall \qoset. \, \Safety$ is an invariant of $\TS$.
\else
If $\phaseauto$ is inductive w.r.t.\ $\TS$ and safe w.r.t.\ $\forall \qoset. \ \Safety$ then $\forall \qoset. \, \Safety$ is an invariant of $\TS$.
\fi
\end{corollary}
}

\begin{lemma}
\label{cor:ind-safe-implies-safe}
If $\phaseauto$ is inductive w.r.t.\ $\TS$ and safe w.r.t.\ $\forall \qoset. \ \Safety$ then $\forall \qoset. \, \Safety$ is an invariant of $\TS$.
\end{lemma}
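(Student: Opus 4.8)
The plan is to reduce to the relation between inductive phase invariants and ordinary inductive invariants established in \Cref{lemma:phase-to-ind}. Set $\Inv \;\equiv\; \forall \qoset.\ \bigvee_{\sphase \in \autostate} \statelabel{\sphase}$. Since $\phaseauto$ is inductive w.r.t.\ $\TS$, \Cref{lemma:phase-to-ind} gives that $\Inv$ is an inductive invariant of $\TS$, and hence (by initiation and consecution, as in the preliminaries) every reachable state of $\TS$ satisfies $\Inv$.

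It remains to show $\Inv \implies \forall \qoset.\ \Safety$. By the definition of a safe phase automaton (\Cref{def:safe-automaton}), for every $\sphase \in \autostate$ the formula $\forall \qoset.\ (\statelabel{\sphase} \to \Safety)$ is valid; equivalently, $\statelabel{\sphase} \to \Safety$ holds under every structure and every valuation of $\qoset$. Taking the disjunction over the finitely many phases, $\big(\bigvee_{\sphase \in \autostate} \statelabel{\sphase}\big) \to \Safety$ is valid, and since universal quantification over $\qoset$ is monotone with respect to validity of implications between formulas with free variables in $\qoset$, we obtain $\forall \qoset.\ \bigvee_{\sphase} \statelabel{\sphase} \implies \forall \qoset.\ \Safety$, i.e.\ $\Inv \implies \forall \qoset.\ \Safety$.

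Combining the two steps: every reachable state satisfies $\Inv$, and $\Inv$ entails $\forall \qoset.\ \Safety$, so every reachable state satisfies $\forall \qoset.\ \Safety$, which is exactly what it means for $\forall \qoset.\ \Safety$ to be an invariant of $\TS$. As an alternative route one can argue directly at the level of traces: by \Cref{lemma:phase-ind-to-phase-inv} $\phaseauto$ is a phase invariant, so for every finite trace $\sigma_0,\ldots,\sigma_n$ of $\TS$ and every valuation $v$ of $\qoset$ there is a phase trace $\sphase_0,\ldots,\sphase_n$ with $\sigma_i, v \models \statelabel{\sphase_i}$ for all $i$; safety of $\phaseauto$ then yields $\sigma_i, v \models \Safety$ for all $i$ and all $v$, hence $\sigma_i \models \forall \qoset.\ \Safety$. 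I do not anticipate a genuine obstacle here; the single point that needs a line of care is the handling of the view quantifier $\qoset$ in passing from the pointwise implications $\statelabel{\sphase} \to \Safety$ to $\Inv \to \forall \qoset.\ \Safety$, which is just monotonicity of $\forall$.
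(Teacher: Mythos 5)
Your proof is correct and follows essentially the same route as the paper's: reduce to the ordinary inductive invariant $\forall \qoset.\ \bigvee_{\sphase}\statelabel{\sphase}$ via \Cref{lemma:phase-to-ind} and observe that safety of $\phaseauto$ yields $\Inv \implies \forall \qoset.\ \Safety$; the paper even records the same trace-based alternative you mention. Your extra care about passing the implication under the view quantifier is a fine (if routine) addition.
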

\fi

\commentout{
\subsection{Obtaining Inductive Phase Invariants by Strengthening} \label{sec:stregthening}

Given a phase invariant that is not inductive w.r.t.\ $TS$, we may establish that it is an invariant by \emph{strengthening} it into one that is inductive:

\begin{definition}[Strengthening]
\label{def:strengthening}
A phase automaton $\phaseauto_2$ is a \emph{strengthening} of a phase automaton $\phaseauto_1$ if $\phaseauto_1, \phaseauto_2$ have the same automaton structure, and the characterization of every phase is stronger in $\phaseauto_2$ than it is in $\phaseauto_1$, namely, for every $\sphase \in \autostate^1 = \autostate^2$,
\begin{equation}
	\forall \qoset. \ \left(\statelabel{\sphase}^2 \implies \statelabel{\sphase}^1\right)
\end{equation}
\end{definition}

\begin{lemma}
\label{lem:ind-strengthening-to-phase-inv}
If $\phaseauto$ has a strengthening $\phaseauto'$ that is inductive for $\TS$ then $\phaseauto$ is a phase invariant for $\TS$.
\end{lemma}
\iflong
\begin{proof}
Let $\ov{\sigma} = \sigma_0,\ldots,\sigma_n$ be a trace of $\TS$ starting from an initial state.
We show that $\ov{\sigma} \in \Lang{\phaseauto}$.
Let $v$ be a valuation of $\qoset$.
By \Cref{lemma:phase-ind-to-phase-inv}, $\phaseauto'$ is a phase invariant for $\TS$. Hence, there exists a trace $\sphase_0,\ldots,\sphase_n$ of phases that corresponds to $(\ov{\sigma},v)$ in $\phaseauto'$. We show that the same trace is also a witness for $\phaseauto$.
Since $\phaseauto$ has the same structure as $\phaseauto'$, we know that $\sphase_0 = \autoinit$ and $(\sigma_i, \sigma_{i+1}), v \models \edgelabel{(\sphase_i, \sphase_{i+1})}$ for every $0 \leq i < n$.
It remains to show that $\sigma_i, v \models \statelabel{\sphase_i}$ for every $0 \leq i \leq n$, but since $\phaseauto'$ is a strengthening of $\phaseauto$, this follows immediately from $\sigma_i, v \models \statelabel{\sphase_i}'$. \TODO{the prime here is confusing!}
\end{proof}
\fi

We note that $\phaseauto$ may be a phase invariant for $\TS$ but not have any strengthening that is inductive.
This may happen for two reasons.

First, as with standard inductive invariants, it is possible that the necessary strengthening of the phase characterizations is not expressible in the logic available to us.

Second, even if we assume an unrestricted language of phase characterizations, it is possible that the edge labeling is too permissive, thus adding transitions that are not necessary for the edge covering requirement.
Such ``redundant'' transitions may sometimes be harmless, but they may also violate preservation along some edge.
Namely, if no state that has such an outgoing transition can reach the corresponding phase $\sphase$ from previous phases, such violations
can be overcome by strengthening $\statelabel{\sphase}$ to exclude all states that have such an outgoing transition (assuming an unrestricted language of phase characterizations), thus disabling the problematic transition along the edge.
In these cases, the automaton can be strengthened into one that is inductive.
However, in other cases, such strengthening
would exclude states that \emph{can} reach phase $\sphase$ and as such would damage the inductiveness property along incoming edges of $\sphase$.
In such cases, the only way to disable problematic transitions along automaton edges is by strengthening the transition relation formulas (i.e., updating the automaton structure). Hence, no inductive strengthening exists.
The second reason for the absence of a strengthening has no counterpart in standard inductive invariants; it reflects the additional structure expressed by a phase automaton, which is enforced by our stronger definition of inductiveness (as opposed to the weaker definition mentioned in \Cref{rem:weaker-ind}).
Fortunately, this reason can be avoided by considering deterministic phase automata:

\begin{lemma} \label{lem:strengthening-exists}
Let $\phaseauto$ be a deterministic phase automaton, and assume an unrestricted language of phase characterizations. Then $\phaseauto$ is a phase invariant for $\TS$ if and only if it has a  strengthening $\phaseauto'$ that is inductive w.r.t.\ $\TS$.
\end{lemma}
\iflong
\begin{proof}[Proof (sketch)]
The implication from right to left is clear. Consider the other direction.
For a deterministic phase automaton, $\Lang{\TS} \subseteq \Lang{\phaseauto}$ if and only if there exists a simulation relation between $\TS$ and $\phaseauto$.
Furthermore, in this case, the inductiveness requirement coincides with the requirement that the phase characterizations induce a simulation relation. Hence, in this case, by defining the characterization of $\sphase$ to include all the states simulated by it, we obtain an inductive strengthening of $\phaseauto$.
\end{proof}
\fi

We point out that restricting our attention to deterministic phase automata does not lose generality in the context of safety verification since every inductive phase invariant, which are the ones we seek, can be translated into a deterministic one:

\begin{lemma}
Let $\phaseauto =(\autostate, \autoinit, \qoset, \edgelabel{}, \statelabel{}) $ be an inductive phase invariant w.r.t.\ $\TS$. Define an arbitrary total order, $<$, on $\autostate$, and define $\phaseauto' = (\autostate, \autoinit, \qoset, \edgelabel{}', \statelabel{})$ where
\[
\edgelabel{(\sphase,\tphase)}' = \edgelabel{(\sphase,\tphase)} \wedge \bigwedge_{\tphase' < \tphase} \neg \edgelabel{(\sphase,\tphase')}
\]
Then $\phaseauto'$ is a deterministic inductive phase invariant w.r.t.\ $\TS$.
\end{lemma}
\iflong
\begin{proof}
The definition of $\edgelabel{}'$ ensures that it is deterministic, and inherits the edge covering property from $\edgelabel{}$.
Initiation is not affected by the edge labeling, and inductiveness cannot be damaged by strengthening $\edgelabel{(\sphase,\tphase)}'$.
\end{proof}
\fi

We note that when the language of phase characterizations is restricted and  an inductive strengthening does not exist for this reason, it may be possible to obtain an inductive strengthening in the given language by changing the automaton structure.

}  
\section{Inference of Inductive Phase Invariants}
\label{sec:inference}
\iflockserv
\else
\fi
In this section we turn to the \emph{inference} of safe inductive phase invariants over a given phase structure, which guides the search\iflong for invariant\fi.
\commentout{
Providing a full specification of an inductive phase invariant may still be difficult.
We therefore consider the problem of inferring a safe inductive phase invariant from a phase structure. }
\iflong
This section defines the problem, shows that it can be reduced to a set of Constrained Horn Clauses, discusses the aspects by which a phase structure guides inference, and considers witnesses for the case that no solution exists.
\fi
Formally, the problem we target is:
\begin{definition}[Inductive Phase Invariant Inference]  Given a transition system $\TS = (\Init, \TR)$, a phase structure $\phasestruct = (\autostate, \autoinit, \qoset, \edgelabel{})$  and a safety property $\forall \qoset. \ \Safety$, all over $\voc$, \emph{find} a safe inductive phase invariant $\phaseauto$ for $\TS$ over the phase structure $S$, if one exists.
\end{definition}

\iflockserv
\TODO{}
\else
\begin{example} \label{ex:inference}
Inference of an inductive phase invariant is provided with the phase structure in \Cref{fig:kv-automaton}, which embodies an intuitive understanding of the different phases the protocol undergoes (see \Cref{ex:phase-structure}). 
The algorithm automatically finds phase characterizations forming a safe inductive phase invariant over the user-provided structure.
We note that inference is valuable even after a phase structure is provided: in the running example, finding an inductive phase invariant is not easy; in particular, the characterizations in \Cref{fig:kv-invariants} relate different parts of the state and involve multiple quantifiers.
\end{example}
\fi

\subsection{Reduction to Constrained Horn Clauses}
\label{sec:chc-reduction}
We view each unknown phase characterization, $\statelabel{\sphase}$, which we aim to infer for every $\sphase \in \autostate$, as a predicate $I_\sphase$.
The definition of a safe inductive phase invariant induces a set of second-order Constrained Horn Clauses (CHC) over $I_\sphase$: \begin{align}
	\label{eq:chc-init}
&\text{\bf Initiation.} && \Init \implies \left(\forall \qoset. \ I_{\autoinit}\right)	&& \\
		&\text{\bf Inductiveness.} \text{ \ For every } (\sphase,\tphase) \in \autotrans:
         && \forall \qoset. \ \left(I_\sphase \land \edgelabel{(\sphase,\tphase)} \implies I'_\tphase\right) \label{eq:chc-ind}	 &&	\\	
		&\text {\bf Edge Covering.} \text{ \ For every } \sphase \in \autostate:
          && \forall \qoset. \ \bigg(I_\sphase \land \TR \implies \bigvee_{(\sphase,\tphase) \in \autotrans}{\edgelabel{(\sphase,\tphase)}}\bigg) \label{eq:chc-edge-cover} && \\
		&\text{\bf Safety.} \ \text{For every } \sphase \in \autostate:  && \forall \qoset. \ \left(I_\sphase \implies \Safety\right) \label{eq:chc-safe} &&
\end{align}
where $\qoset$ denotes the quantifiers of $\phaseauto$.
\iflong
The number of constraints is $1 + |\autotrans| + 2|\autostate|$.
\ifsketch
\TODO{}
If $\phaseauto$ is nothing more than a phase structure, constraint \ref{eq:chc-strengthen} can be omitted, and the number of constraints is
$1 + |\autotrans| + 2|\autostate|$.
\fi

\fi
All the constraints are \emph{linear}, namely at most one unknown predicate appears at the lefthand side of each implication.

\commentout{Using CHC terminology, \Cref{eq:chc-init} defines ``facts'' (since no unknown predicate appears in the body);
\Cref{eq:chc-ind} defines ``rules''; and
\Cref{eq:chc-edge-cover,eq:chc-safe,eq:chc-strengthen} define ``queries'' (since no unknown predicates appear in the righthand side of the implications).}
Constraint (\ref{eq:chc-safe}) captures the original safety requirement, whereas (\ref{eq:chc-edge-cover}) can be understood as additional safety properties that are specified by the phase automaton (since no unknown predicates appear in the righthand side of the implications).
\OMIT{Omitting: It requires that a system trace that reaches phase $\sphase$ with state
$\sigma$ has a matching outgoing edge of $\sphase$ for every outgoing transition of $\sigma$.}

A \emph{solution} $\mathbf{I}$ to the CHC system associates each predicate $I_\sphase$ with a formula $\psi_\sphase$ over $\vocabulary$ (with $\fv{\psi_\sphase} \subseteq \qoset$) such that when $\psi_\sphase$ is substituted for $I_\sphase$, all the constraints are satisfied (i.e., the corresponding first-order formulas are valid).
A solution to the system induces a safe inductive phase automaton through characterizing each phase $\sphase$ by the interpretation of $I_\sphase$, and vice versa. Formally:
\OMIT{
\iflong
\begin{lemma}
Let $\mathbf{I}_\sphase$ be a solution to the CHC system.
Then $\phaseauto' = (\autostate, \autotrans, \autoinit, \qoset, \edgelabel{}, \statelabel{}')$ with $\statelabel{\sphase}' = \mathbf{I}_\sphase$ is a safe inductive phase invariant wrt.\ $\TS$ and $\Safety$ and strengthens $\phaseauto$.
\end{lemma}
\begin{proof}
$\phaseauto'$ is safe due to constraint \ref{eq:chc-safe}. Constraint \ref{eq:chc-edge-cover} implies that $\phaseauto'$ covers $\TR$, constraint \ref{eq:chc-ind} implies that it is inductive, and with constraint \ref{eq:chc-init} this implies that $\phaseauto'$ is inductive wrt.\ $\TS$. Constraint \ref{eq:chc-strengthen} implies that $\phaseauto'$ is a strengthening of $\phaseauto$.
\end{proof}
The converse also holds:
\begin{lemma}
A safe inductive phase invariant which is a strengthening of $\phaseauto$ induces a solution to the CHC system. \end{lemma}
\else
}
\begin{lemma}\label{lem:inference}
Let $\phaseauto = (\autostate, \autotrans, \autoinit, \qoset, \edgelabel{}, \statelabel{})$ with $\statelabel{\sphase} = {\mathbf{I}}_\sphase$. Then $\phaseauto$ is a safe inductive phase invariant wrt.\ $\TS$ and $\forall \qoset.\ \Safety$
if and only if $\mathbf{I}$ is a solution to the CHC system.
\end{lemma}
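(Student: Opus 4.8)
The plan is to unfold both sides of the biconditional directly against the definitions and observe that the CHC constraints (\ref{eq:chc-init})--(\ref{eq:chc-safe}) are, almost verbatim, the conditions in \Cref{def:ind} together with \Cref{def:safe-automaton}. Concretely, I would fix the phase automaton $\phaseauto = (\autostate, \autotrans, \autoinit, \qoset, \edgelabel{}, \statelabel{})$ with $\statelabel{\sphase} = \mathbf{I}_\sphase$ and argue the two directions as follows.

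\emph{($\Rightarrow$)} Suppose $\phaseauto$ is a safe inductive phase invariant w.r.t.\ $\TS$ and $\forall \qoset.\,\Safety$. By \Cref{def:ind}, Initiation gives $\Init \implies (\forall \qoset.\ \statelabel{\autoinit})$, which upon substituting $\mathbf{I}_{\autoinit}$ for $\statelabel{\autoinit}$ is precisely constraint~(\ref{eq:chc-init}); Inductiveness gives, for each $(\sphase,\tphase)\in\autotrans$, the formula $\forall\qoset.\,(\statelabel{\sphase}\land\edgelabel{(\sphase,\tphase)}\implies\statelabel{\tphase}')$, i.e.\ constraint~(\ref{eq:chc-ind}); and Edge Covering gives, for each $\sphase$, $\forall\qoset.\,(\statelabel{\sphase}\land\TR\implies\bigvee_{(\sphase,\tphase)\in\autotrans}\edgelabel{(\sphase,\tphase)})$, i.e.\ constraint~(\ref{eq:chc-edge-cover}). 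Finally, safety of $\phaseauto$ w.r.t.\ $\forall\qoset.\,\Safety$ (\Cref{def:safe-automaton}) is exactly $\forall\qoset.\,(\statelabel{\sphase}\implies\Safety)$ for every $\sphase$, i.e.\ constraint~(\ref{eq:chc-safe}). Hence $\mathbf{I}$ satisfies every constraint, so it is a solution.

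\emph{($\Leftarrow$)} Conversely, suppose $\mathbf{I}$ is a solution, so every constraint holds when $\mathbf{I}_\sphase$ is substituted for $I_\sphase$. Reading the substituted constraints back: (\ref{eq:chc-init}) is the Initiation condition of \Cref{def:ind} for $\statelabel{\sphase}=\mathbf{I}_\sphase$; (\ref{eq:chc-ind}) over all $(\sphase,\tphase)\in\autotrans$ is Inductiveness; (\ref{eq:chc-edge-cover}) over all $\sphase\in\autostate$ is Edge Covering. Thus $\phaseauto$ is inductive w.r.t.\ $\TS$. Constraint~(\ref{eq:chc-safe}) over all $\sphase\in\autostate$ is exactly the condition of \Cref{def:safe-automaton}, so $\phaseauto$ is safe w.r.t.\ $\forall\qoset.\,\Safety$. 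Therefore $\phaseauto$ is a safe inductive phase invariant. (If desired, one can additionally invoke \Cref{cor:ind-safe-implies-safe} to note that such a $\phaseauto$ establishes $\forall\qoset.\,\Safety$ as an invariant of $\TS$, but this is not needed for the lemma itself.)

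There is essentially no obstacle here: the lemma is a bookkeeping statement asserting that the CHC encoding faithfully captures \Cref{def:ind} and \Cref{def:safe-automaton}. The only point requiring the slightest care is making explicit that substituting a concrete formula $\psi_\sphase$ (with $\fv{\psi_\sphase}\subseteq\qoset$) for the predicate $I_\sphase$ turns the second-order Horn clauses into exactly the first-order validity obligations appearing in those definitions — i.e.\ that the universal quantifier $\forall\qoset$ in each constraint matches the view quantification in the definitions, and that the primed predicate $I'_\tphase$ corresponds to $\statelabel{\tphase}'$. I would state this substitution correspondence once up front and then let the two directions be a line-by-line match of the four constraint families against the four definitional conditions.
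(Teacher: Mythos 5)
Your proposal is correct and matches the paper's own proof, which likewise establishes the lemma by a direct one-to-one correspondence between the four CHC constraint families and the Initiation/Inductiveness/Edge Covering conditions of \Cref{def:ind} together with the safety condition of \Cref{def:safe-automaton}. The only difference is presentational: you make the substitution correspondence (replacing the predicate $I_\sphase$ by the formula $\statelabel{\sphase}=\mathbf{I}_\sphase$) explicit up front, which the paper leaves implicit.
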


Therefore, to infer a safe inductive phase invariant over a given phase structure, we need to solve the corresponding CHC system.
In \Cref{sec:updr-phase} we explain our approach for doing so for the class of universally quantified phase characterizations.
Note that the weaker definition of inductiveness discussed in \Cref{rem:weaker-ind} would prevent the reduction to CHC as it would result in clauses that are \emph{not} Horn clauses.

\para{Completeness of inductive phase invariants}
There are cases where a given phase structure induces a safe phase invariant $\A$, but not an inductive one, making the CHC system unsatisfiable. However, a strengthening into an inductive phase invariant can always be used to prove that $\phaseauto$ is an invariant if
\begin{inparaenum}[(i)]
	\item the language of invariants is unrestricted, and
	\item the phase structure is deterministic, namely, does not cover the same transition in two outgoing edges. Determinism of the automaton does not lose generality in the context of safety verification since every inductive phase automaton can be converted to a deterministic one; non-determinism is in fact unbeneficial as it mandates the same state to be characterized by multiple phases (see also \Cref{rem:weaker-ind}).
\end{inparaenum}
These topics are discussed in detail in \refappendix{sec:completeness}.

\begin{remark}
\label{rem:phase-decomposition}
Each phase is associated with a set of states that can reach it, where a state $\sigma$ can reach phase $\sphase$ if there is a sequence of program transitions that results in $\sigma$ and can lead to $\sphase$ according to the automaton's transitions. This makes a phase structure different from a simple syntactical disjunctive template for inference, in which such semantic meaning is unavailable.
\end{remark}

\iflong
\begin{remark} \TODO{remove? (space)}
When the safety property is of the form $\forall \qoset. \ \Grd \to \psi$ where $\Grd, \psi \in \form{\voc}{\qoset}$, we sometime seek an inductive phase invariant where $\Grd$ guards the entire phase structure of the automaton.
This may be represented by splitting the initial phase into two: one whose characterization includes $\Grd$, and another ``dummy'' initial phase whose characterization is $\neg \Grd$. The dummy initial phase has a single self loop labeled $\TR$, whereas the other maintains the actual phase structure. Moreover, $\Grd$ is added to the characterization of all other phases.
\end{remark}
\fi

\subsection{Phase Structures as a Means to Guide Inference}
\label{sec:inference-benefit}
The search space of invariants over a phase structure is in fact \emph{larger} than that of standard inductive invariants, because each phase can be associated with different characterizations.
Sometimes the disjunctive structure of the phases (\Cref{lemma:phase-to-ind}) uncovers a significantly simpler invariant than exists in the syntactical class of standard inductive invariants explored by the algorithm, but this is not always the case.\footnote{
As an illustration, \refappendix{sec:kv-ind} includes an inductive invariant for the running example which is comparable in complexity to the inductive phase invariant in \Cref{fig:kv-invariants}.
}
Nonetheless, the search for an invariant over the structure is \emph{guided}, through the following aspects: \\
(1) \emph{Phase decomposition.}
Inference of an inductive phase invariant aims to find characterizations that overapproximate the set of states reachable in each phase (\Cref{rem:phase-decomposition}). 
The distinction between phases is most beneficial when there is a considerable \emph{difference} between the sets associated with different phases and their characterizations. For instance, in the running example, all states without unreceived transfer messages are associated with $\texttt{O[$k$]}$, whereas all states in which such messages exist are associated with $\texttt{T[$k$]}$---a distinction captured by the characterizations in \cref{kv-char:disallow-recv-transfer,kv-char:unique-transfer-unreceived} in \Cref{fig:kv-invariants}.

Differences between phases would have two consequences. First, since each phase corresponds to fewer states than all reachable states, generalization---the key ingredient in inference procedures---is more focused. The second consequence stems from the fact that inductive characterizations of different phases are correlated. It is expected that a certain property is more readily learnable in one phase, while related facts in other phases are more complex. For instance, the characterization in \cref{kv-char:disallow-recv-transfer} in \Cref{fig:kv-invariants} is more straightforward than the one in \cref{kv-char:unique-transfer-unreceived}. Simpler facts in one phase can help characterize an adjacent phase when the algorithm analyzes how that property evolves along the edge. Thus utilizing the phase structure can improve the gradual construction of overapproximations of the sets of states reachable in each phase.
\\
(2) \emph{Disabled transitions.} A phase automaton explicitly states which transitions of the system are enabled in each phase, while the rest are disabled. Such impossible transitions induce additional safety properties to be established by the inferred phase characterizations.
\iflockserv
For example, the phase structure in \Cref{fig:lockservice-phase} forbids a $\texttt{recv\_grant}(\texttt{c}, \lockelem)$ action in all phases but $\texttt{G}(\lockelem)$.
\else
For example, the phase invariant in \Cref{fig:kv-invariants} forbids a {\small \texttt{recv\_transfer\_message($k$)}} in $\texttt{O[$k$]}$, a fact that can trigger the inference of the characterization in \cref{kv-char:disallow-recv-transfer}.
\fi
These additional safety properties direct the search for characterizations that are likely to be important for the proof.
\\
(3) \emph{Phase-awareness.} Finally, while a phase structure can be encoded in several ways (such as ghost code), a key aspect of our approach is that the phase decomposition and disabled transitions are \emph{explicitly} encoded in the CHC system in \Cref{sec:chc-reduction}, ensuring that they guide the otherwise heuristic search.
\\
In \Cref{sec:eval-dissection} we demonstrate the effects of aspects (1)--(3) on guidance.

\newcommand{\epnruns}{16}
\newcommand{\eptimeout}{1 hour}

\newcommand{\eparch}{a 3.4GHz AMD Ryzen Threadripper 1950X with 16 physical cores, running Linux 4.15.0, using Z3 version 4.7.1}

\section{Implementation and Evaluation}
\label{sec:evaluation}
In this section we apply invariant inference guided by phase structures to distributed protocols modeled in EPR, motivated by previous deductive approaches\iflong to safety of distributed protocols\fi~\cite{pldi/PadonMPSS16,DBLP:journals/pacmpl/PadonLSS17,DBLP:conf/pldi/TaubeLMPSSWW18}.
\iflong
\begin{figure}[t]
  \centering
\includegraphics[width=\textwidth]{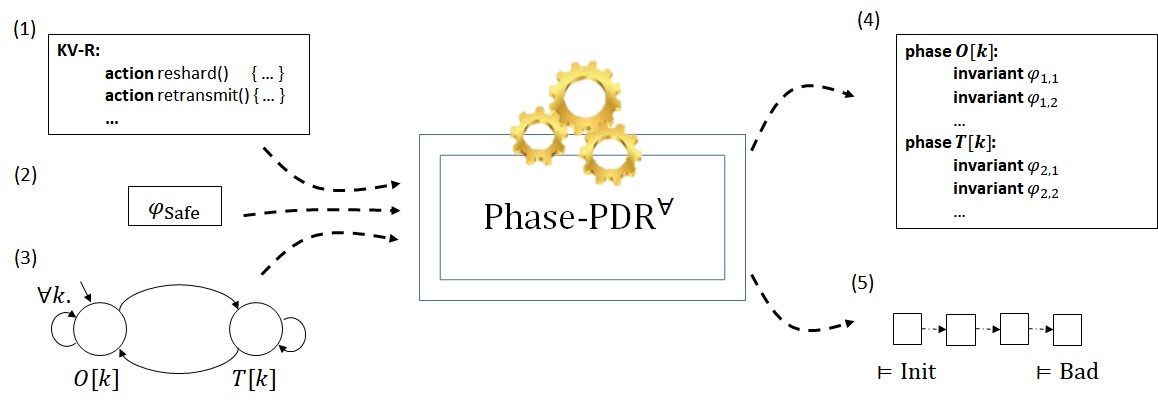}
  \caption{The work-flow of user-guided invariant inference through phase structures.
  The input to the tool is a (1) program, (2) safety property, and (3) phase structure.
  The output is either (4) inductive phase characterizations over the given phase structure, or (5) proof that universally quantified inductive phase characterizations do not exist, in the form of an abstract trace.
  }
  \label{fig:workflow}
\end{figure}
 The work-flow for our approach is illustrated in \Cref{fig:workflow}.
\fi
\subsection{Phase-$\UPDR$ for Inferring Universally Quantified Characterizations}
\label{sec:updr-phase}
We now describe our procedure for solving the CHCs system of \Cref{sec:chc-reduction}.
It either (i) returns universally quantified phase characterizations that induce a safe inductive phase invariant, (ii) returns an abstract counterexample trace demonstrating that this is not possible, or (iii) diverges.

\para{EPR}
Our procedure handles transition systems expressed using the extended {\bf E}ffectively {\bf PR}opositional fragment (EPR) of first order logic~\cite{epr,pldi/PadonMPSS16}, and infers universally quantified phase characterizations.
Satisfiability of (extended) EPR formulas is decidable, enjoys the finite-model property, and supported by 
\iflong
existing SMT solvers such as Z3~\cite{de2008z3} and first order logic provers such as iProver~\cite{DBLP:conf/cade/Korovin08}.
\else
solvers such as Z3~\cite{de2008z3} and iProver~\cite{DBLP:conf/cade/Korovin08}.
\fi

\para{Phase-$\UPDR$}
Our procedure is based on $\UPDR$~\cite{DBLP:journals/jacm/KarbyshevBIRS17}, a variant of PDR~\cite{ic3,pdr} that infers universally quantified inductive invariants.
PDR computes a sequence of \emph{frames} $\Frame_0, \ldots, \Frame_n$
such that $\Frame_i$ overapproximates the set of states reachable in $i$ steps.
In our case, each frame $\Frame_i$ is a mapping from a phase $\sphase$ to characterizations.
The details of the algorithm are standard for PDR; we describe the gist of the procedure in \refappendix{sec:algorithm-long}. We only stress the following:
Counterexamples to safety take into account the safety property as well as disabled transitions. Search for predecessors is performed by going backwards on automaton edges, blocking counterexamples from preceding phases to prove an obligation in the current phase. Generalization is performed w.r.t.\ all incoming edges. As in $\UPDR$, proof obligations are constructed via diagrams~\cite{chang1990model}; in our setting these include the interpretation for the view quantifiers (see 
\ifapp
\refappendix{sec:algorithm-long}
\else
\cite{extended-version}
\fi 
for details).

\para{Edge covering check in EPR}
\iflong
When the transition relation formula is in EPR and the phase characterizations are universally quantified, the checks induced by \Cref{eq:chc-init,eq:chc-ind,eq:chc-safe} translate to checking (un)satisfiability of EPR formulas, potentially causing divergence of the solver.
\Cref{eq:chc-edge-cover} is trickier as checking implication between two EPR transition relations falls outside of EPR.
To use a decidable edge covering check, we exploit the typical structure of transition relations in our setting, which is a disjunction between the transition relation of exported actions (the different {\small \texttt{action}}s in \Cref{fig:kv-ivy}).
In the phase automaton we label an edge $(\sphase,\tphase)$ by a set of exported actions, each action $a$ with a guard $g^{(\sphase,\tphase)}_a$ which is an alternation-free formula (a Boolean combination of universal and existential closed formulas). The edge covering check (\Cref{eq:chc-edge-cover}) can now be written
\begin{equation}
	\forall \qoset. \ \bigg(I_\sphase \land \TR_a \implies \bigvee_{(\sphase,\tphase) \in \autotrans}{g^{(\sphase,\tphase)}_a}\bigg).
\end{equation}
The righthand side of the implication is alternation-free and thus the check falls into the decidable EPR class.
\else
In our setting, \Cref{eq:chc-init,eq:chc-ind,eq:chc-safe} fall in EPR, but not \Cref{eq:chc-edge-cover}.
Thus, we restrict edge labeling so that each edge is labeled with a $\TR$ of an {\small \texttt{action}}, 
together with an alternation-free precondition. It then suffices to check implications between the preconditions and the entire $\TR$ (see the extended version~\cite{extendedVersion}).
\fi
Such edge labeling is sufficiently expressive for \iflong the phase structures of\fi all our examples.
Alternatively, sound but incomplete bounded quantifier instantiation~\cite{DBLP:conf/tacas/FeldmanPISS17} could be used, potentially allowing more complex decompositions of $\TR$. 

\para{Absence of Inductive Phase Characterizations}
\label{sec:infer-absence}
What happens when the user gets the automaton wrong?
One case is when there does not exist an inductive phase invariant with universal phase characterizations over the given structure. When this occurs, our tool can return an \emph{abstract counterexample trace}---a sequence of program transitions and transitions of the automaton (inspired by~\cite{DBLP:journals/jacm/KarbyshevBIRS17,DBLP:conf/popl/PadonISKS16})---which constitutes a proof
of that fact
(see \refappendix{sec:abstract-traces}). The counterexample trace can assist the user in debugging the automaton or the program and modifying them.
For instance, missing edges occurred frequently when we wrote the automata of \Cref{sec:evaluation}, and we used the generated counterexample traces to correct them.

Another type of failure is when an inductive phase invariant exists but the automaton does not direct the search well towards it. In this case the user may decide to terminate the analysis and articulate a different intuition via a different phase structure. In standard inference procedures, the only way to affect the search is by modifying the transition system; instead, phase structures equip the user with an ability to guide the search.

\subsection{Evaluation}
We evaluate our approach for user-guided invariant inference 
\iflong
based on phase structures 
\fi
by comparing Phase-$\UPDR$ to standard $\UPDR$\iflong, its inductive invariant inference counterpart\fi.
We implemented $\UPDR$ and Phase-$\UPDR$ in \Tool~\cite{mypyvy}, a new system for invariant inference inspired by Ivy~\cite{DBLP:conf/sas/McMillanP18}, over Z3~\cite{de2008z3}.
We study\iflong the following effects of guidance by phase structures\fi:
\begin{enumerate}
  \item Can Phase-$\UPDR$ \empheval{converge} to a proof when $\UPDR$ does not (in reasonable time)? 

  \item Is Phase-$\UPDR$ \empheval{faster} than $\UPDR$? \ifsketch
  Do additional phase characterizations in a phase sketch yield further speedup?
  \fi

\OMIT{
\item How do Phase-$\UPDR$ and $\UPDR$ \empheval{scale} with the number of phases in the protocol? (\Cref{sec:eval-scalability})
}

  \item Which aspects of Phase-$\UPDR$ contribute to its performance benefits? \end{enumerate}
\ifsketch
In each protocol, we compare Phase-$\UPDR$ with a phase structure and
Phase-$\UPDR$ with a phase sketch which includes partial characterizations,
to the baseline $\UPDR$ inferring a standard inductive invariant.
\fi

\para{Protocols}
We applied $\UPDR$ and Phase-$\UPDR$ to the most challenging examples admitting universally-quantified invariants, which previous works verified using deductive techniques.
The protocols we analyzed are listed below and in \Cref{tab:examples-evaluation-table}. The full models appear in~\cite{mypyvy-examples}. 
The KV-R protocol analyzed is taken from one of the two realistic systems studied by the IronFleet paper~\cite{IronFleet} using deductive verification.

\para{Phase structures}
The phase structures we used appear in~\cite{mypyvy-examples}.
In all our examples, it was straightforward to translate the existing high-level intuition of important and relevant distinctions between phases in the protocol into the phase structures we report. For example, it took us less than an hour to finalize an automaton for KV-R. 
We emphasize that phase structures do not include phase characterizations; the user need not supply them, nor has to understand the inference procedure. Our exposition of the phase structures below refers to an intuitive meaning of each phase, but this is not part of the phase structure provided to the tool.

\OMIT{
\paragraph{Specifying phase structures} In our experience, producing phase structures exercises existing intuition about the protocol's correctness.
Abstract counterexample traces aided us in articulating the phase structures, in most cases indicating an omitted edge. \TODO{is this a good place for this?}
}

\iflong
\newcolumntype{S}{>{\scriptsize}c}

\begin{table}[t]
\centering
\begin{threeparttable}\centering
\footnotesize
\begin{tabular}{|l|c|c||c|c||c|c||cccc|cccc|}
\hline
\multirow{2}{*}{Program} & \multirow{2}{*}{$\UPDR$} & \multirow{2}{*}{Phase-$\UPDR$} & \multirow{2}{*}{\#p} & \multirow{2}{*}{\#v} & \multirow{2}{*}{\#r} & \multirow{2}{*}{|a|}
& \multicolumn{4}{c|}{Inductive} & \multicolumn{4}{c|}{Phase-Inductive}
\\
& & & & & & & |f| & \#c & \#q & \#l & |f| & \#c & \#q & \#l
\\
\hline
\begin{tabular}{l}
Lock service \\
(single lock)\end{tabular}        & 2.21 (00.03)        & 0.67 (0.01)     & 4   & 1 &  5    & 1

& 11 & 9 & 15 & 21
& 6 & 3--4 & 3--4 & 3--6
\\
\begin{tabular}{l}
Lock service\\(multiple locks)\end{tabular}     & 2.73 (00.02)        & 1.06 (0.01)     & 4   & 1 &  5    & 2

& 11 & 9 & 24 & 21
& 6 & 4 & 3--4 & 4--6
\\
Consensus                  & 60.54 (2.95)        & 1355 (570)$^*$  & 3   & 1 &  7 & 2

& 9 & 6 & 15 & 15 
& 12 & 5--6 & 10--14 & 9--15
\\
KV (basic)                & 1.79 (0.02)         & 1.59 (0.02)     & 2   & 1 &  3 & 3

& 5 & 7 & 27 & 19 
& 5 & 4 & 9--10 & 8--9
\\
Ring leader              & 152.44 (39.41)      & 2.53 (0.04)     & 2   & 2 &  4 & 3

& 6--7 & 6 & 11 & 16 
& 5 & 1--2 & 0--1 & 1--4
\\
\hline
KV-R   & 2070 (370)$^*$      & 372.5 (35.9)    & 2   & 1 &  7 & 5

& 12--15 & 24 & 156 & 106 
& 11--13 & 5--11 & 15--67 & 12--52
\\
Cache coherence              & > 1 hour                 & 90.1 (0.82)     & 10  & 1 &  11 & 2

& n/a & n/a & n/a & n/a 
& 13 & 10--15 & 12--27 & 14--39
\\
\hline
\end{tabular}
\end{threeparttable}
	\caption{
	\label{tab:examples-evaluation-table}
\footnotesize Running times in seconds of $\UPDR$ and Phase-$\UPDR$, presented as
          the mean and standard deviation (in parentheses) over 16 different Z3 random seeds.
          ``${}^*$'' indicates that some runs did not converge after 1 hour and were not included in the summary statistics.
          ``> 1 hour'' means that no runs of the algorithm converged in 1 hour.
          \#p refers to the number of phases and \#v to the number of view quantifiers in the phase structure.
          \#r refers to the number of relations and |a| to the maximal arity.
          The remaining columns describe the inductive invariant/phase invariant obtained in inference. |f| is the maximal frame reached. \#c, \#q, \#l are the mean number of clauses, quantifiers (excluding view quantifiers) and literals per phase, ranging across the different phases.
}
\end{table}
 \else
\newcolumntype{S}{>{\scriptsize}c}

\begin{table}[t]
\centering
\begin{threeparttable}\centering
\footnotesize
\begin{tabular}{|l|c|c||c|c||c|c||ccc|ccc|}
\hline
\multirow{2}{*}{Program} & \multirow{2}{*}{$\UPDR$} & \multirow{2}{*}{Phase-$\UPDR$} & \multirow{2}{*}{\#p} & \multirow{2}{*}{\#v} & \multirow{2}{*}{\#r} & \multirow{2}{*}{|a|}
& \multicolumn{3}{c|}{Inductive} & \multicolumn{3}{c|}{Phase-Inductive}
\\
& & & & & & & |f| & \#c & \#q & |f| & \#c & \#q
\\
\hline
\begin{tabular}{l}
Lock service \\
(single lock)\end{tabular}        & 2.21 (00.03)        & 0.67 (0.01)     & 4   & 1 &  5    & 1

& 11 & 9 & 15
& 6 & 3--4 & 3--4
\\
\begin{tabular}{l}
Lock service\\(multiple locks)\end{tabular}     & 2.73 (00.02)        & 1.06 (0.01)     & 4   & 1 &  5    & 2

& 11 & 9 & 24
& 6 & 4 & 3--4
\\
Consensus                  & 60.54 (2.95)        & 1355 (570)$^*$  & 3   & 1 &  7 & 2

& 9 & 6 & 15
& 12 & 5--6 & 10--14
\\
KV (basic)                & 1.79 (0.02)         & 1.59 (0.02)     & 2   & 1 &  3 & 3

& 5 & 7 & 27
& 5 & 4 & 9--10
\\
Ring leader              & 152.44 (39.41)      & 2.53 (0.04)     & 2   & 2 &  4 & 3

& 6--7 & 6 & 11
& 5 & 1--2 & 0--1
\\
\hline
KV-R   & 2070 (370)$^*$      & 372.5 (35.9)    & 2   & 1 &  7 & 5

& 12--15 & 24 & 156 
& 11--13 & 5--11 & 15--67
\\
Cache coherence              & > 1 hour                 & 90.1 (0.82)     & 10  & 1 &  11 & 2

& n/a & n/a & n/a
& 13 & 10--15 & 12--27
\\
\hline
\end{tabular}
\end{threeparttable}
	\caption{
	\label{tab:examples-evaluation-table}
\footnotesize Running times in seconds of $\UPDR$ and Phase-$\UPDR$, presented as
          the mean and standard deviation (in parentheses) over 16 different Z3 random seeds.
          ``${}^*$'' indicates that some runs did not converge after 1 hour and were not included in the summary statistics.
          ``> 1 hour'' means that no runs of the algorithm converged in 1 hour.
          \#p refers to the number of phases and \#v to the number of view quantifiers in the phase structure.
          \#r refers to the number of relations and |a| to the maximal arity.
          The remaining columns describe the inductive invariant/phase invariant obtained in inference. |f| is the maximal frame reached. \#c, \#q are the mean number of clauses and quantifiers (excluding view quantifiers) per phase, ranging across the different phases. 
}
        \vspace{-0.5cm}
\end{table}
 \fi

\subsubsection{(1) Achieving Convergence Through Phases}
\label{sec:eval-convergence}
In this section we consider the effect of phases on inference for examples on which standard $\UPDR$ does not converge in \eptimeout{}.

\para{Examples}
\iflockserv
\emphexample{Sharded key-value store with retransmissions (KV-R)}, based on IronKV from IronFleet~\cite[\S 5.2.1]{IronFleet}. 
  This is a distributed hash table where each node owns some subset of the keys,
  and keys can be dynamically transferred among nodes to balance load.
  The safety property ensures that each key is globally associated with one value, even in the presence of key transfers.
  Messages might be dropped by the network, and the protocol uses retransmissions and sequence numbers to maintain availability and safety.
  This example is especially challenging for invariant inference in part due to the high arity of relations in its model (two relations have arity 5).
  The \empheval{phase structure} takes the view of a single key in the store, and has two phases: \emph{owned} and \emph{transferring}, intuitively meaning that the key is owned by a node or that an ownership transfer is in progress, respectively.
  The automaton transitions from \emph{owned} to \emph{transferring} upon the system's action in which a node cedes ownership of a key and sends a message (with a sequence number) to the new owner-to-be. The automaton transitions from \emph{transferring} to \emph{owned} when the target node receives the message (and it is not a duplicate).
  \ifsketch
  The \empheval{phase sketch} we consider states, in the form of partial characterizations, that
  in \emph{owned} the key is owned by at most one node and there are no pending messages concerning this key; in phase \emph{transferring}, the key is not owned by any node, and pending non-duplicate transfer messages are unique.
  \fi
\else
\emphexample{Sharded key-value store with retransmissions (KV-R)}: see \Cref{sec:running} and \Cref{ex:phase-structure}. This protocol has not been modeled in decidable logic before.
\fi

\emphexample{Cache coherence}. This example implements the classic MESI protocol for maintaining cache coherence in a shared-memory multiprocessor~\cite{hennessy-patterson-arch-6ed}, modeled in decidable logic for the first time.
  Cores perform reads and writes to memory, and caches snoop on each other's requests using a shared bus and maintain the invariant that there is at most one writer of a particular cache line.
  For simplicity, we consider only a single cache line, and yet the example is still challenging for $\UPDR$.
  Standard explanations of this protocol in the literature already use automata to describe this invariant, and we directly exploit this structure in our phase automaton.
  \empheval{Phase structure}:
  There are 10 phases in total, grouped into three parts corresponding to the modified, exclusive, and shared states in the classical description.
  Within each group, there are additional phases for when a request is being processed by the bus.
  For example, in the shared group, there are phases for handling reads by cores without a copy of the cache line, writes by such cores, and also writes by cores that \emph{do} have a copy.
  Overall, the phase structure is directly derived from textbook descriptions, taking into account that use of the shared bus is not atomic.

\OMIT{
  \item \emphexample{Single-decree Paxos}, based on the model from~\cite{DBLP:journals/pacmpl/PadonLSS17}; see~\cite{DBLP:journals/pacmpl/PadonLSS17} for an exposition to the protocol in EPR. The ``choosable'' relations is provided as a derived relation to reduce the invariant to a universal form---without this, Phase-$\UPDR$ returns an abstract counterexample trace. \TODO{also true for $\UPDR$}
  The \empeval{phase structure} takes the view of two values $v_1,v_2$ as they compete to become proposed in the highest possible round. The automaton transitions between phases in which the highest round (of rounds proposing either $v_1$ or $v_2$) has proposed $v_1$, and a symmetric one in which the highest round has proposed $v_2$.
  The \empheval{phase structure} we consider states exactly these facts about the proposal of the highest round.
  Phase-$\UPDR$ automatically infers additional required characterizations, stating that when the proposal from the highest round belongs to $v_1$ then (1) $v_2$ is not choosable in lower rounds, and that (2) $v_1$ itself it not choosable in rounds even lower than the highest round proposing $v_2$.
        We note that the automaton's structure conveys a slightly different intuition Paxos's safety proof than the standard standpoint of inductive invariant based-proof in previous works.
      \item
}

\para{Results and discussion}
Measurements for these examples appear in \Cref{tab:examples-evaluation-table}.
Standard $\UPDR$ fails to converge in less than an hour on 13 out of 16 seeds for KV-R and all 16 seeds for the cache\iflong coherence protocol\fi.
In contrast, Phase-$\UPDR$ converges to a proof in a few minutes in all cases.
\ifsketch
(Discussion of the improvement gained from partial characterizations in these examples is deferred to the next section.)
\fi
These results demonstrate that phase structures can effectively guide the search and obtain an invariant quickly where standard inductive invariant inference does not.

\subsubsection{(2) Enhancing Performance Through Phases}
\label{sec:eval-performance}
In this section we consider the use of phase structures to improve the speed of convergence to a proof.

\para{Examples}
\iflockserv
\emphexample{Distributed lock service}, which is described in detail in \Cref{sec:running} and \Cref{ex:phase-structure}.
  \ifsketch
  The phase sketch includes in each phase its main invariant: the first invariant in each phase in \Cref{fig:lockserv-invariants}.
  \fi
  We also consider a simpler variant with only a single lock,
  which reduces the arity of all relations and removes the need for a (nontrivial) automaton view. Its \empheval{phase structure} is the same, only for a single lock.
\else
\emphexample{Distributed lock service}, adapted from~\cite{verdi-pldi},
allows clients to acquire and release locks by sending requests to
a central server, which guarantees that only one client holds each lock at a
time. 
\empheval{Phase structure}: for each lock, the phases follow the 4 steps by which a client completes a cycle of acquire and release.
We also consider a simpler variant with only a single lock,
  reducing the arity of all relations and removing the need for an automaton view. Its \empheval{phase structure} is the same, only for a single lock.
\fi

\emphexample{Simple quorum-based consensus}, based on the example in~\cite{DBLP:conf/pldi/TaubeLMPSSWW18}.
  In this protocol, nodes propose themselves and then receive votes from other nodes.
  When a quorum of votes for a node is obtained, it becomes the leader and decides on a value.
  Safety requires that decided values are unique.
  The \empheval{phase structure} distinguishes between the phases before any node is elected leader, once a node is elected, and when values are decided.
  Note that the automaton structure is unquantified.\ifsketch
  The \empheval{phase sketch} states that first there are no leaders and no decisions, then unique leaders and no decisions, and then unique leaders and unique decisions. The rest of the inductive phase invariant relates quorums, leaders and decisions, and votes and vote messages.
  \fi

\emphexample{Leader election in a ring}~\cite{chang1979improved,pldi/PadonMPSS16},
  in which nodes are organized in a directional ring topology with unique IDs,
  and the safety property is that an elected leader is a node with the highest ID.
  \empheval{Phase structure}: for a view of two nodes $n_1,n_2$, 
  in the first phase, messages with the ID of $n_1$ are yet to advance in the ring past $n_2$, while
  in the second phase, a message advertising $n_1$ has advanced past $n_2$.
  \ifsketch
  We did not consider a phase sketch in this example due to the small number of clauses in the inductive phase invariant. \yotam{does it sound bad?}
Phase-$\UPDR$ automatically infers the characterizations that
  in the first phase $n_1$ cannot be a leader and all messages with the ID of $n_1$ are between $n_1$ and $n_2$ in the ring,
  and that in the second phase necessarily the ID of $n_1$ is greater than that of $n_2$ (because otherwise $n_2$ would have dropped the message).
  This latter characterization includes another quantifier on nodes, which constrains interference (see \Cref{sec:related}).
  \else
  The inferred characterizations include another quantifier on nodes, constraining interference (see \Cref{sec:related}).
  \fi

\emphexample{Sharded key-value store (KV)} is a simplified version of KV-R above, without message drops and the retransmission mechanism.
The \empheval{phase structure} is exactly as in \mbox{KV-R}, omitting transitions related to sequence numbers and acknowledgment.
This protocol has not been modeled in decidable logic before.

\para{Results and discussion}
We compare the performance of standard $\UPDR$ and Phase-$\UPDR$ on the above examples,
with results shown in~\Cref{tab:examples-evaluation-table}.
For each example, we ran the two algorithms on 16 different Z3 random seeds.
Measurements were performed on \eparch{}.
By disabling hyperthreading and frequency scaling and pinning tasks to dedicated cores,
variability across runs of a single 
seed was negligible.

In all but one example, Phase-$\UPDR$ improves performance, sometimes drastically; for example, performance for leader election in a ring is improved by a factor of 60.
Phase-$\UPDR$ also improves the \emph{robustness} of inference~\cite{DBLP:conf/cav/0001LMN14} on this example, as the standard deviation falls from 39 in $\UPDR$ to 0.04 in Phase-$\UPDR$.

The only example in which a phase structure actually diminishes inference effectiveness is simple consensus.
We attribute this to an automaton structure that does not capture the essence of the correctness argument very well, overlooking votes and quorums.
This demonstrates that a phase structure might guide the search towards counterproductive directions if the user guidance is ``misleading''.
This suggests that better resiliency of interactive inference framework could be achieved by combining phase-based inference with standard inductive invariant-based reasoning. We are not aware of a single ``good'' automaton for this example.
The correctness argument of this example is better captured by the conjunction of two automata (one for votes and one for accumulating a quorum) with different views, but the problem of inferring phase invariants for mutually-dependent automata is a subject for future work.
\ifsketch
We now turn to the effect of adding partial characterizations in a phase sketch on inference.
In almost all examples, additional phase characterizations provided by the user improve or perform as well as when just a phase structure is provided.
The phase sketch improve over a phase structure especially in the more challenging examples: KV-R and cache coherence, in which the additional characterizations reduce mean time from \textasciitilde 6 minutes to 3.5 and from 1.5 minutes to 44 seconds (respectively).
Broadly, the speedup obtained from adding additional phase characterizations to the sketch is weaker than the benefit of a phase structure over standard inductive invariant inference.
This suggests that in most cases the characterizations the user provides are in fact easy for the automatic algorithm to infer by itself from the phase structure, and thus do not offer much additional benefit. Speedup from partial characterizations (as apparent in KV-R and cache coherence) can occur when the user provides characterizations which are not readily apparent to the algorithm.
\fi

\OMIT{
  \input{figs/microbenchmark}
\para{Phase structures and scalability} In the following benchmark we study the effect of different phase structures and scalability.
\emphexample{$n$-phase commit} is an artificial protocol, based on 2-phase commit, achieving agreement between all nodes in $n$ rounds.
In each round, the coordinator sends a message to all participants and waits for unanimous acknowledgment
before proceeding to the next round.
The safety property is that if the final round has completed, then all nodes acknowledged in
the first round.
This property requires to reason about the entire protocol to associate all different rounds, so the complexity of the proof is in some sense proportional to $n$.
We compared two \empheval{phase structures}: (long) an automaton structure with one phase per round,
and (short) one with two phases, the first corresponding to round one and the second to the rest.
Both structures achieve a performance improvement as seen in \Cref{fig:microbenchmark}, with a preference to the long automaton.
The additional structure of phase invariants uncovers a disjunctive invariant that is not easily approximated by a conjunction of lemmas by $\UPDR$.
Somewhat surprisingly, the long automaton, which we find more intuitive, improves inference even further in spite of the overheard incurred by an increasing number of phases that must be handled.
}
\OMIT{
 Overall, the results suggest that harnessing the phase structure can improve the performance of invariant inference.
\ifsketch
, and that additional phase characterizations in the phase sketch potentially offer some additional improvement in certain cases.
\fi
 
}

\OMIT{
  almost all examples improved by phase structure

  comparing sketch versus structure: sometimes the sketch doesn't provide what algorithm needs,
  and adds some overhead (because it's an extra safety obligation)

  when we're worse, it's because the algorithm can already infer what the user gave.

  in other words, the difference between structure and sketch is explained by saying that
  the user provided something already inferrable from the phase structure
  (but if we're better than updr, then not by updr).

  this also suggests that it would be fruitful to spend additional engineering effort
  to reduce the overhead of additional safety checks.

  about toy consensus: not many disabled transitions,
  the phase characterizations aren't that different among phases

  calls for some combination of phase-based and inductive invariant based search
}

\subsubsection{(3) Anatomy of the Benefit of Phases}
\label{sec:eval-dissection}
We now demonstrate that each of the beneficial aspects of phases discussed in \Cref{sec:inference-benefit} is important for the benefits reported above. 

\para{Phase decomposition}
Is there a benefit from a phase structure even without disabled transitions? An example to a positive answer to this question is leader election in a ring, which demonstrates a huge performance benefit even without disabled transitions.

\para{Disabled transitions}
Is there a substantial gain from exploiting disabled transitions?
We compare Phase-$\UPDR$ on the structure with disabled transitions and a structure obtained by (artificially) adding self loops labeled with the originally impossible transitions, on the example of lock service with multiple locks (\Cref{sec:eval-performance}), seeing that it demonstrates a performance benefit using Phase-$\UPDR$ and showcases several disabled transitions in each phase.
The result is that without disabled transitions, the mean running time of Phase-$\UPDR$  on this example jumps from 2.73 seconds to 6.24 seconds.
This demonstrates the utility of the additional safety properties encompassed in disabled transitions.

\para{Phase-awareness}
Is it important to treat phases explicitly in the inference algorithm, as we do in Phase-$\UPDR$ (\Cref{sec:updr-phase})?
We compare our result on convergence of KV-R with an alternative in which standard $\UPDR$is applied to an encoding of the phase decomposition and disabled transition by \emph{ghost state}: each phase is modeled by a relation over possible view assignments, and the model is augmented with update code mimicking phase changes; the additional safety properties derived from disabled transitions are provided; and the view and the appropriate modification of the safety property are introduced.
This translation expresses all information present in the phase structure, but does not explicitly guide the inference algorithm to use this information.
The result is that with this ghost-based modeling the phase-oblivious $\UPDR$ does not converge in 1 hour on KV-R in any of the 16 runs, whereas it converges when Phase-$\UPDR$ explicitly directs the search using the phase structure.

\section{Related Work} \label{sec:related}
\iflong
\else
\vspace{-0.4cm}
\fi
\para{Phases in distributed protocols}
Distributed protocols are frequently described in informal descriptions as transitioning between different phases.
Recently, PSync~\cite{DBLP:conf/popl/DragoiHZ16} used the Heard-Of model~\cite{DBLP:journals/dc/Charron-BostS09}, which describes protocols as operating in rounds, as a basis for the implementation and verification of fault-tolerant distributed protocols.
Typestates~\cite[e.g.][]{DBLP:journals/tse/StromY86,DBLP:journals/scp/FieldGRY05} also bear some similarity to the temporal aspect of phases.
State machine refinement~\cite{DBLP:journals/tcs/AbadiL91,Garland:2000:UIA:336431.336455} is used extensively in the design and verification of distributed systems (see e.g.~\cite{Newcombe:2015:AWS:2749359.2699417,IronFleet}).
The automaton structure of a phase invariant is also a form of state machine; our focus is on inference of characterizations establishing this.

\para{Interaction in verification}
Interactive proof assistants such as Coq~\cite{DBLP:series/txtcs/BertotC04} and Isabelle/HOL~\cite{DBLP:books/sp/NipkowPW02} interact with users to aid them as they attempt to prove candidate inductive invariants. This differs from interaction through phase structures and counterexample traces.
Ivy uses interaction for invariant inference by interactive generalization from counterexamples~\cite{pldi/PadonMPSS16}. This approach is less automatic as it requires interaction for every clause of the inductive invariant.
In terminology from synthesis~\cite{DBLP:conf/synasc/Gulwani12}, the use of counterexamples is \emph{synthesizer-driven} interaction with the tool, while interaction via phase structures is mainly \emph{user-driven}. Abstract counterexample traces returned by the tool augment this kind of interaction.
As~\cite{DBLP:journals/acta/JhaS17} has shown, interactive invariant inference, when considered as a synthesis problem (see also~\cite{DBLP:conf/cav/0001LMN14,DBLP:journals/fmsd/SharmaA16}) is related to inductive learning.

\para{Template-based invariant inference} Many works employ syntactical templates for invariants, used to constrain the search~\cite[e.g.][]{DBLP:conf/cav/ColonSS03,DBLP:conf/sas/SankaranarayananSM04,DBLP:conf/pldi/SrivastavaG09,DBLP:journals/sttt/SrivastavaGF13,DBLP:series/natosec/AlurBDF0JKMMRSSSSTU15}. The different phases in a phase structure induce a disjunctive form, but crucially each disjunct also has a distinct semantic meaning, which inference overapproximates, as explained in \Cref{sec:inference-benefit}.

\para{Automata in safety verification}
Safety verification through an automaton-like refinement of the program's control has been studied in a number of works. We focus on related techniques for proof automation.
The \emph{Automizer} approach to the verification of sequential programs~\cite{DBLP:conf/sas/HeizmannHP09,DBLP:conf/cav/HeizmannHP13} is founded on the notion of a \emph{Floyd-Hoare automaton}, which is an unquantified inductive phase automaton; an extension to parallel programs~\cite{DBLP:conf/popl/FarzanKP15} uses thread identifiers closed under the symmetry rule,
which are related to view quantifiers.
Their focus is on the automatic, incremental construction of such automata as a union of simpler automata, where each automaton is obtained from generalizing the proof/infeasibility of a single trace.
In our approach the structure of the automaton is provided by the user as a means of conveying their intuition of the proof, while the annotations are computed automatically.
A notable difference is that in Automizer, the generation of characterizations in an automaton constructed from a single trace does not utilize the phase structure (beyond that of the trace), whereas in our approach the phase structure is central in generalization from states to characterizations.
\iflong

\fi
In \emph{trace partitioning}~\cite{DBLP:journals/toplas/RivalM07,DBLP:conf/esop/MauborgneR05}, abstract domains based on transition systems partitioning the program's control are introduced. The observation is that recording historical information forms a basis for case-splitting, as an alternative to fully-disjunctive abstractions. This differs from our motivation of distinguishing between different protocol phases.
The phase structure of the domain is determined by the analyser, and can also be dynamic. In our work the phase structure is provided by the user as guidance. We use a variant of $\UPDR$, rather than abstract interpretation~\cite{POPL:CC79}, to compute universally quantified phase characterizations.
\iflong

\fi
Techniques such as \emph{predicate abstraction}~\cite{DBLP:conf/cav/GrafS97,DBLP:conf/popl/FlanaganQ02} and \emph{existential abstraction}~\cite{DBLP:books/daglib/0007403}, as well as the safety part of \emph{predicate diagrams}~\cite{DBLP:conf/ifm/CansellMM00}, use finite languages for the set of possible characterizations and lack the notion of views, both essential for handling unbounded numbers of processes and resources.
\iflong

\fi
Finally, \emph{phase splitter predicates}~\cite{DBLP:conf/cav/SharmaDDA11} share our motivation of simplifying invariant inference by exposing the different phases the loop undergoes. Splitter predicates correspond to inductive phase characterizations~\cite[Theorem~1]{DBLP:conf/cav/SharmaDDA11}, and are automatically constructed according to program conditionals. In our approach, decomposition is performed by the user using potentially non-inductive conditions, and the inductive phase characterizations are computed by invariant inference. Successive loop splitting results in a sequence of phases, whereas our approach utilizes arbitrary automaton structures.
Borralleras et al.~\cite{DBLP:conf/tacas/BorrallerasBLOR17} also refine the control-flow graph throughout the analysis by splitting on conditions, which here are discovered as preconditions for termination (the motivation is to expose termination proof goals to be established): in a sense, the phase structure is grown from candidate characterizations implying termination. This differs from our approach in which the phase structure is used to guide the inference of characterizations.

\para{Quantified invariant inference}
We focus here on the works on quantifiers in automatic verification most closely related to our work.
In \emph{predicate abstraction}, quantifiers can be used internally as part of the definitions of predicates, and also externally through predicates with free variables~\cite{DBLP:conf/popl/FlanaganQ02,DBLP:journals/tocl/LahiriB07}.
Our work uses quantifiers both internally in phases characterizations and externally in view quantifiers.
The view is also related to the bounded number of quantifiers used in \emph{view abstraction}~\cite{DBLP:journals/sttt/AbdullaHH16,DBLP:conf/vmcai/AbdullaHH13}.
In this work we observe that it is useful to consider views of entities beyond processes or threads,
\iflockserv
such as a single lock in the running example.
\else
such as a single key in the store.
\fi
\iflong

\fi
Quantifiers are often used to their full extent in verification conditions, namely checking implication between two quantified formulas, but they are sometimes employed in weaker checks as part of thread-modular proofs~\cite{DBLP:journals/toplas/AbadiL95,Jones:1983:TST:69575.69577}.
This amounts to searching for invariants provable using specific instantiations of the quantifiers in the verification conditions~\cite{DBLP:conf/sigsoft/GurfinkelSM16,DBLP:conf/popl/HoenickeMP17}.
In our verification conditions, the view quantifiers are localized, in effect performing a single instantiation.
This is essential for exploiting the disjunctive structure under the quantifiers, allowing inference to consider a single automaton edge in each step, and reflecting an intuition of correctness.
When necessary to constrain interference, quantifiers in phase characterizations can be used to establish necessary facts about interfering views.
Finally, there exist algorithms other than $\UPDR$ for solving CHC by predicates with universal invariants~\cite[e.g.][]{DBLP:conf/atva/GurfinkelSV18,DBLP:conf/cav/DrewsA16}.

\OMIT{
Also Coq?
Houdini
ACL/2
Rahul Sharma
Dynamo
}

 \section{Conclusion}

Invariant inference techniques aiming to verify intricate distributed protocols must adjust to the diverse correctness arguments on which protocols are based. In this paper we have proposed to use phase structures as means of conveying users' intuition of the proof, to be used by an automatic inference tool as a basis for a full formal proof. We found that inference guided by a phase structure can infer proofs for distributed protocols that are beyond reach for state of the art inductive invariant inference methods, and can also improve the speed of convergence.
The phase decomposition induced by the automaton, the use of disabled transitions, and the explicit treatment of phases in inference, all combine to direct the search for the invariant.
We are encouraged 
by our experience of specifying phase structures for different protocols.
It would be interesting to integrate the interaction via phase structures with other verification methods and proof logics, as well as interaction schemes based on different, complementary, concepts.
Another important direction for future work is inference beyond universal invariants, required for example for the proof of Paxos~\cite{DBLP:journals/pacmpl/PadonLSS17}. 
\subsubsection*{Acknowledgements}
{\footnotesize
We thank
Kalev Alpernas, Javier Esparza, Neil Immerman, Oded Padon, Andreas Podelski, Tom Reps,
and
the anonymous referees
for insightful comments which improved this paper.
This publication is part of a project that has received funding from the European Research Council (ERC) under the European Union's Horizon 2020 research and innovation programme (grant agreement No [759102-SVIS]).
The research was partially supported by
Len Blavatnik and the Blavatnik Family foundation,
the Blavatnik Interdisciplinary Cyber Research Center, Tel Aviv University,
the Israel Science Foundation (ISF) under grant No. 1810/18,
the United States-Israel Binational Science Foundation (BSF) grant No. 2016260,
and the National Science Foundation under Grant No. 1749570. 
Any opinions, findings, and conclusions or recommendations expressed in this material are those of the authors and do not necessarily reflect the views of the National Science Foundation.
}

\bibliographystyle{splncs04}
\bibliography{refs}

\ifapp
\clearpage
\appendix
\section{Completeness of Inductive Phase Invariants}
\label{sec:completeness}
There are cases where a phase automaton $\phaseauto$ is a phase invariant for $\TS$, but this cannot be established via an inductive phase invariant since there is no strengthening of its phase characterizations that leads to an inductive phase invariant for $\TS$.
This may happen for two reasons.

First, as with standard inductive invariants, it is possible that the strengthening necessary  to ensure inductiveness is not expressible in the logic available to us.

Second, even if we assume an unrestricted language of phase characterizations, it is possible that the edge labeling is too permissive, thus adding transitions that are not necessary for the edge covering requirement.
Such ``redundant'' transitions may sometimes be harmless, but they may also violate preservation along some edge.
Namely, if no state that has such an outgoing transition can reach the corresponding phase $\sphase$ from previous phases, such violations
can be overcome by strengthening $\statelabel{\sphase}$ to exclude all states that have such an outgoing transition (assuming an unrestricted language of phase characterizations), thus disabling the problematic transition along the edge.
In these cases, an inductive automaton can be obtained.
However, in other cases, strengthening the phase characterizations in this way
would exclude states that \emph{can} reach phase $\sphase$ and as such would damage the inductiveness property along incoming edges of $\sphase$.
In such cases, the only way to disable problematic transitions along automaton edges is by strengthening the transition relation formulas (i.e., updating the automaton structure). Hence, no inductive automaton exists for the given phase structure.
The second reason has no counterpart in standard inductive invariants; it reflects the additional structure expressed by a phase automaton, which is enforced by our stronger definition of inductiveness (as opposed to the weaker definition mentioned in \Cref{rem:weaker-ind}).
Fortunately, this reason can be avoided by considering deterministic phase automata:

\begin{lemma} \label{lem:strengthening-exists}
Let $\phaseauto$ be a deterministic phase automaton, and assume an unrestricted language of phase characterizations. Then $\phaseauto$ is a phase invariant for $\TS$ if and only if it has a strengthening $\phaseauto'$ that is inductive w.r.t.\ $\TS$.
\end{lemma}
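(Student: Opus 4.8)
The plan is to prove the nontrivial direction (left-to-right) by exhibiting an explicit inductive strengthening, namely the one that characterizes each phase by the exact set of states that are simulated by it. The right-to-left direction is immediate: if $\phaseauto'$ is an inductive strengthening of $\phaseauto$, then by \Cref{lemma:phase-ind-to-phase-inv} $\phaseauto'$ is a phase invariant for $\TS$, and since $\phaseauto$ has the same structure but weaker characterizations, $\Lang{\phaseauto'} \subseteq \Lang{\phaseauto}$, so $\phaseauto$ is a phase invariant for $\TS$ as well (this is essentially \Cref{lem:ind-strengthening-to-phase-inv}).

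For the forward direction, fix a domain $\Dom$ and a valuation $v \colon \qoset \to \Dom$. The first key step is to recall that for a \emph{deterministic} phase automaton, trace inclusion $\Lang{\TS} \subseteq \Lang{\phaseauto}$ is equivalent to the existence of a simulation relation (in the sense of \Cref{def:simulation}) between $\TS$ and $\phaseauto$: given a reachable state $\sigma$, determinism of the edge labeling guarantees that the phase reached along any trace witnessing $\ov{\sigma} \models \phaseauto$ is uniquely determined by the sequence of transitions, so we can unambiguously speak of ``the phase that simulates $\sigma$ (under $v$)''. The second step is to define, for each phase $\sphase$, its new characterization as
\[
	\statelabel{\sphase}' \;=\; \{\, \sigma \mid \text{$\sigma$ is reachable under some trace ending in phase $\sphase$, and $\sigma, v \models \statelabel{\sphase}$} \,\}
\]
expressed semantically (here is where the assumption of an \emph{unrestricted} language of phase characterizations is used — we need an arbitrary set of states to be expressible, and we also allow dependence on $v$ through the view quantifiers). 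This $\statelabel{\sphase}'$ implies $\statelabel{\sphase}$ by construction, so $\phaseauto'$ is a strengthening.

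The third step is to verify that $\phaseauto'$ with characterizations $\statelabel{}'$ satisfies the three conditions of \Cref{def:ind}. Initiation holds because every initial state lies on the trivial trace ending in $\autoinit$ and satisfies $\statelabel{\autoinit}$ (using that $\phaseauto$ is a phase invariant). Inductiveness and Edge Covering are where determinism does the real work: if $\sigma \models \statelabel{\sphase}'$ and $(\sigma,\sigma') \in \TR$, then $\sigma'$ is again reachable, so $\ov{\sigma}\cdot\sigma' \models \phaseauto$ witnesses some outgoing edge $(\sphase,\tphase)$ with $(\sigma,\sigma'),v \models \edgelabel{(\sphase,\tphase)}$ and $\sigma',v \models \statelabel{\tphase}$; determinism forces $\tphase$ to be the unique phase simulating $\sigma'$, hence $\sigma' \models \statelabel{\tphase}'$, giving both Edge Covering (an outgoing edge is taken) and Inductiveness (the post-state satisfies the target's strengthened characterization). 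I expect the main obstacle to be making the ``unique simulating phase'' argument airtight — in particular, ensuring that the phase assigned to $\sigma'$ by the trace through $\sigma$ agrees with the phase assigned by \emph{any} other trace reaching $\sigma'$, which is exactly where determinism of $\edgelabel{}$ (together with a short induction on trace length, as in the proof of \Cref{lem:sim-to-phase-inv}) is needed; this is also the step that fails for nondeterministic automata and thus explains the hypothesis. A full proof would make this "simulation $\equiv$ trace inclusion for deterministic automata" lemma explicit before invoking it.
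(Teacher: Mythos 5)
Your proposal is correct and takes essentially the same route as the paper: the right-to-left direction follows from inductiveness implying phase invariance, and the left-to-right direction characterizes each phase by the (semantically defined) set of states reachable at that phase, using determinism together with an induction on trace length to show that the phase trace matching a fixed state trace is unique, which yields initiation, inductiveness, and edge covering. The only cosmetic difference is that you phrase the argument via an explicit simulation relation, whereas the paper's proof works directly with matching phase traces; the substance is identical.
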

\iflong
\begin{proof}[Proof (sketch)]
The implication from right to left is clear. Consider the other direction.
For a deterministic phase automaton, $\Lang{\TS} \subseteq \Lang{\phaseauto}$ if and only if there exists a simulation relation between $\TS$ and $\phaseauto$.
Furthermore, in this case, the inductiveness requirement coincides with the requirement that the phase characterizations induce a simulation relation. Hence, in this case, by defining the characterization of $\sphase$ to include all the states simulated by it, we obtain an inductive strengthening of $\phaseauto$.
\end{proof}
\fi

Non-determinism is generally unbeneficial as it only mandates some states to be characterized by multiple phases in the inductive phase invariant (see also \Cref{rem:weaker-ind}).
We point out that restricting our attention to deterministic phase automata does not lose generality in the context of safety verification since every inductive phase invariant, which are the ones we seek, can be translated into a deterministic one, as the following lemma shows. Thus a structure admitting an inductive phase invariant can be converted to a deterministic one with the same property. \yotam{Sharon, can you review? Modified because one reviewer wrote ``you say that you can convert the automata into a deterministic one, but that comes with a price. The resulting automata of lemma 4.7 is not equivalent to the original one, while the text makes it sound that it is.''}
\begin{lemma} \label{lem:determinization}
Let $\phaseauto =(\autostate, \autoinit, \qoset, \edgelabel{}, \statelabel{}) $ be an inductive phase invariant w.r.t.\ $\TS$. Define an arbitrary total order, $<$, on $\autostate$, and define $\phaseauto' = (\autostate, \autoinit, \qoset, \edgelabel{}', \statelabel{})$ where
\[
\edgelabel{(\sphase,\tphase)}' = \edgelabel{(\sphase,\tphase)} \wedge \bigwedge_{\tphase' < \tphase} \neg \edgelabel{(\sphase,\tphase')}
\]
Then $\phaseauto'$ is a deterministic inductive phase invariant w.r.t.\ $\TS$.
\end{lemma}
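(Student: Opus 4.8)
\textbf{Proof plan for \Cref{lem:determinization}.}

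The plan is to verify the three defining properties of an inductive phase invariant (\Cref{def:ind}) for $\phaseauto'$, together with determinism, each being a short syntactic argument built on the fact that $\phaseauto$ and $\phaseauto'$ differ only in the edge labeling, and that $\edgelabel{(\sphase,\tphase)}'$ implies $\edgelabel{(\sphase,\tphase)}$ for every pair of phases. First I would record this implication $\edgelabel{(\sphase,\tphase)}' \implies \edgelabel{(\sphase,\tphase)}$, which is immediate from the conjunctive form of the definition, and note the reverse consequence on the edge sets: $\autotrans' \subseteq \autotrans$. Determinism is then the easy part: given two distinct targets $\tphase_1 < \tphase_2$ with $(\sphase,\tphase_1),(\sphase,\tphase_2) \in \autotrans'$, the definition puts $\neg\edgelabel{(\sphase,\tphase_1)}$ as a conjunct of $\edgelabel{(\sphase,\tphase_2)}'$, so $\edgelabel{(\sphase,\tphase_1)}' \wedge \edgelabel{(\sphase,\tphase_2)}'$ already contains $\edgelabel{(\sphase,\tphase_1)} \wedge \neg\edgelabel{(\sphase,\tphase_1)}$ and is unsatisfiable.

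For the three inductiveness conditions: \textbf{Initiation} is unaffected since it only refers to $\statelabel{\autoinit}$, which is unchanged. \textbf{Inductiveness} for $\phaseauto'$ reads $\forall \qoset.\ (\statelabel{\sphase} \wedge \edgelabel{(\sphase,\tphase)}' \implies \statelabel{\tphase}')$ for $(\sphase,\tphase)\in\autotrans'$; since $\edgelabel{(\sphase,\tphase)}' \implies \edgelabel{(\sphase,\tphase)}$ and $\autotrans'\subseteq\autotrans$, this follows from the corresponding condition for $\phaseauto$ by weakening the hypothesis. \textbf{Edge covering} is the one place where strengthening the edge labels could in principle hurt, so this is the step to handle carefully: I must show $\forall \qoset.\ (\statelabel{\sphase} \wedge \TR \implies \bigvee_{(\sphase,\tphase)\in\autotrans'} \edgelabel{(\sphase,\tphase)}')$. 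The key observation is that, pointwise, $\bigvee_{\tphase} \edgelabel{(\sphase,\tphase)}' \equiv \bigvee_{\tphase} \edgelabel{(\sphase,\tphase)}$: for any valuation and transition $(\sigma,\sigma')$ satisfying $\bigvee_\tphase \edgelabel{(\sphase,\tphase)}$, let $\tphase^\ast$ be the $<$-\emph{minimal} target with $(\sigma,\sigma'),v \models \edgelabel{(\sphase,\tphase^\ast)}$; then by minimality $(\sigma,\sigma'),v \not\models \edgelabel{(\sphase,\tphase')}$ for all $\tphase' < \tphase^\ast$, hence $(\sigma,\sigma'),v \models \edgelabel{(\sphase,\tphase^\ast)}'$. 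Moreover $\tphase^\ast \in \autotrans'$ since $\edgelabel{(\sphase,\tphase^\ast)}' \not\equiv \false$ (it is satisfied here). So the disjunction over $\autotrans'$-edges is satisfied, and edge covering for $\phaseauto'$ follows from edge covering for $\phaseauto$.

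Finally I would assemble these: determinism plus the three conditions give that $\phaseauto'$ is a deterministic phase automaton inductive w.r.t.\ $\TS$, and by \Cref{lemma:phase-ind-to-phase-inv} it is therefore a phase invariant, completing the claim. I expect the only genuinely delicate point to be the pointwise equivalence $\bigvee_\tphase \edgelabel{(\sphase,\tphase)}' \equiv \bigvee_\tphase \edgelabel{(\sphase,\tphase)}$ used in edge covering — specifically the ``take the $<$-minimal matching target'' argument, which relies on $\autostate$ being finite so that a minimum exists — and making sure that this minimal target is actually in $\autotrans'$ rather than just in $\autotrans$; everything else is routine hypothesis-weakening.
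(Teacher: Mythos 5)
Your proposal is correct and follows the same route as the paper, whose proof is only a one-sentence sketch asserting that determinism and edge covering follow from the definition of $\edgelabel{}'$, that initiation is unaffected, and that inductiveness cannot be damaged by strengthening the edge labels. Your expansion of the edge-covering step---taking the $<$-minimal target $\tphase^\ast$ satisfying the old label, so that $\edgelabel{(\sphase,\tphase^\ast)}'$ is satisfied and $(\sphase,\tphase^\ast)$ indeed lies in $\autotrans'$---is exactly the content the paper leaves implicit, and is the right place to spend the care.
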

\OMIT{
\begin{proof}
The definition of $\edgelabel{}'$ ensures that it is deterministic, and inherits the edge covering property from $\edgelabel{}$.
Initiation is not affected by the edge labeling, and inductiveness cannot be damaged by strengthening $\edgelabel{(\sphase,\tphase)}'$.
\end{proof}
}

We note that when the language of phase characterizations is restricted and an inductive phase automaton does not exist for this reason, it may be possible to overcome the limitations of the language and obtain one by changing the automaton structure.
 \section{Abstract Counterexample Traces}
\label{sec:abstract-traces}
Phase structures may not admit a safe inductive phase invariant.
In this section we discuss causes for this, and notions of \emph{concrete} and \emph{abstract} counterexample traces constituting a proof that inductive phase characterizations cannot be found over the given structure in the given language of candidate characterizations.

\commentout{
\subsubsection{Causes for Absence of Inductive Strengthening.}
\label{sec:absence-causes}
We start by listing the potential reasons for absence of an inductive strengthening.

First, if the transition system at hand is not safe w.r.t.\ the original safety property (enforced by \Cref{eq:chc-safe}), no safe inductive strengthening exists, for any phase automaton (by \Cref{cor:ind-safe-implies-safe}).

Second, even if the transition system is safe, some choices of phase automata are intrinsically incorrect, since they exclude true traces of the transition system.
This is the case when the ``additional'' safety properties specified by the phase automaton (\Cref{eq:chc-edge-cover,eq:chc-strengthen}) are erroneous (i.e., do not hold in some trace).
For example, consider a deterministic phase automaton $\phaseauto$ specified by the user. The phase automaton might allow a system trace to reach some phase $\sphase$ with valuation $v$ and state $\sigma$, but exclude it from the phase characterization ($\sigma,v \not\models \statelabel{\sphase})$ or exclude its outgoing transitions from all the outgoing edges of this phase ($(\sigma,\sigma') \in \TR$ but $(\sigma,\sigma'),v \not\models \edgelabel{(\sphase,\sphase')}$ for all $\sphase' \in \autotrans$). In such cases, the phase automaton does not overapproximate all the traces of $\TS$; it is in fact not a phase invariant, hence no inductive strengthening exists (by \Cref{lem:ind-strengthening-to-phase-inv},
an inductive strengthening is only possible when the phase automaton is a phase invariant), even though $\TS$ may be safe.

Finally, even if the given phase automaton $\phaseauto$ is a phase invariant, no inductive strengthening may exist for the two reasons discussed in \Cref{sec:stregthening}.
Specifically, assuming that $\phaseauto$ is deterministic, the remaining reason is that the language of phase characterizations is not expressive enough to capture the required strengthening.
}

\subsection{Concrete Counterexample Traces} \label{sec:conc-cex}

We first consider the case where no safe inductive phase invariant exists, regardless of the language of phase characterizations.
Such a case may be witnessed by a \emph{counterexample trace} that exhibits a violation of one of the safety properties induced by $\phasestruct$. 

\begin{definition}[Counterexample Trace]
A trace $\sigma_1,\ldots,\sigma_n$ of $\TS$ with a valuation $v$ for $\qoset$ is a \emph{counterexample trace} for $\phasestruct$, $\TS$ and $\forall \qoset. \ \Safety$ if
there exists a  trace $\sphase_0,\ldots,\sphase_n$ of $\phasestruct$ such that $({\sigma}_i,\sigma_{i+1}), v \models \edgelabel{(\sphase_i,\sphase_{i+1})}$ for every $1 \leq i < n$, but one of the following holds:
\begin{enumerate}
	\item \label{it:abs-cex-unsafe} A state in the trace is not safe:
		$\sigma_i, v \not\models \Safety$ for some $i$.
\item \label{it:abs-cex-untrans} A state in the  trace allows a transition that is not covered by any outgoing edge:
There exists $\sigma'$ s.t.\ $(\sigma_i,\sigma') \models \TR$ for some $i$, but no edge can cover this transition, i.e.\ $(\sigma_i,\sigma'), v \not\models \edgelabel{(\sphase_i,\sphase')}$ for all $\sphase' \in \autostate$.
\end{enumerate}
\end{definition}

\begin{lemma} \label{lem:cex-trace}
If a counterexample trace exists then there is no safe inductive phase invariant with structure $\phasestruct$, even if the language of phase characterizations is unrestricted. \end{lemma}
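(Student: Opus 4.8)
The plan is to argue directly from the definition of a safe inductive phase invariant, with no detour through simulation or language inclusion. Suppose, for contradiction, that there is a safe inductive phase invariant $\phaseauto = (\autostate, \autoinit, \qoset, \edgelabel{}, \statelabel{})$ whose underlying structure is $\phasestruct$; thus $\phaseauto$ agrees with $\phasestruct$ on $\autostate, \autoinit, \qoset, \edgelabel{}$ and merely supplies characterizations $\statelabel{}$, and by \Cref{def:ind} and \Cref{def:safe-automaton} it satisfies initiation, inductiveness, edge covering, and safety. Let $\sigma_0, \ldots, \sigma_n$ with valuation $v$ be the given counterexample trace (I normalize the index range to $0, \ldots, n$ so it lines up with the phase sequence), and let $\sphase_0, \ldots, \sphase_n$ be the trace of $\phasestruct$ witnessing it, so $\sphase_0 = \autoinit$, every $(\sphase_i, \sphase_{i+1}) \in \autotrans$, and $(\sigma_i, \sigma_{i+1}), v \models \edgelabel{(\sphase_i, \sphase_{i+1})}$ for all $i < n$.

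The first step is a short induction showing that $\sigma_i, v \models \statelabel{\sphase_i}$ for every $0 \le i \le n$. For $i = 0$: $\sphase_0 = \autoinit$ and $\sigma_0 \models \Init$ since the trace starts from an initial state, so initiation of $\phaseauto$ gives $\sigma_0, v \models \statelabel{\autoinit}$. For the inductive step, assuming $\sigma_i, v \models \statelabel{\sphase_i}$: the edge $(\sphase_i, \sphase_{i+1})$ belongs to $\autotrans$ (hence is an edge of $\phaseauto$, which shares $\edgelabel{}$ with $\phasestruct$) and $(\sigma_i, \sigma_{i+1}), v \models \edgelabel{(\sphase_i, \sphase_{i+1})}$, so the inductiveness clause for that edge yields $(\sigma_i, \sigma_{i+1}), v \models \statelabel{\sphase_{i+1}}'$, i.e.\ $\sigma_{i+1}, v \models \statelabel{\sphase_{i+1}}$ by the standard fact that a primed formula over a pair of states is evaluated in the post-state.

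The second step is a two-way case split on which clause of the counterexample-trace definition is violated. If $\sigma_i, v \not\models \Safety$ for some $i$: safety of $\phaseauto$ gives $\forall \qoset.\, \statelabel{\sphase_i} \implies \Safety$, which together with $\sigma_i, v \models \statelabel{\sphase_i}$ forces $\sigma_i, v \models \Safety$ --- a contradiction. Otherwise there is a transition $(\sigma_i, \sigma') \models \TR$ with $(\sigma_i, \sigma'), v \not\models \edgelabel{(\sphase_i, \sphase')}$ for every $\sphase' \in \autostate$; but $\sigma_i, v \models \statelabel{\sphase_i}$, so the edge-covering clause $\forall \qoset.\, \big(\statelabel{\sphase_i} \land \TR \implies \bigvee_{(\sphase_i, \tphase) \in \autotrans} \edgelabel{(\sphase_i, \tphase)}\big)$ produces some $\tphase$, necessarily in $\autostate$, with $(\sigma_i, \sigma'), v \models \edgelabel{(\sphase_i, \tphase)}$ --- again a contradiction. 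Either way the assumed $\phaseauto$ cannot exist, which is the claim; and since the argument imposes no restriction whatsoever on the formulas $\statelabel{}$, it applies verbatim with an unrestricted language of phase characterizations.

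I do not expect a genuine obstacle, but the point that needs care --- and the reason the induction is phrased as above --- is that $\phasestruct$ need not be deterministic, so a priori the characterizations $\statelabel{}$ of a hypothetical $\phaseauto$ could route the concrete trace through phases other than $\sphase_0, \ldots, \sphase_n$. The induction sidesteps this by fixing attention on the very phase sequence that witnesses the counterexample and pinning down each $\statelabel{\sphase_i}$ on the pair $(\sigma_i, v)$; the safety and edge-covering clauses are then invoked at precisely the phase $\sphase_i$ at which the violation was exhibited. The only other chore is reconciling the index conventions between $\sigma_1,\ldots,\sigma_n$ as written in the statement and $\sphase_0,\ldots,\sphase_n$, which I would handle up front by reindexing both to $0,\ldots,n$.
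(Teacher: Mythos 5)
Your proof is correct and is essentially the paper's own argument: the paper derives \Cref{lem:cex-trace} from \Cref{lem:abstract-cex-trace} with the identity preorder (explicitly noting one may ``reiterate the proof while ignoring abstraction steps''), and that underlying proof is exactly your induction along the witnessing phase sequence establishing $\sigma_i, v \models \statelabel{\sphase_i}$ via initiation and inductiveness, followed by the same two-way case split contradicting safety or edge covering. Your handling of the index mismatch and the remark about non-determinism are sensible housekeeping but do not change the route.
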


In case~\ref{it:abs-cex-unsafe}, the trace violates the original safety property (enforced by \Cref{eq:chc-safe}).
This means that the transition system at hand is not safe and no safe inductive phase invariant can be expected, for any phase structure (by \Cref{cor:ind-safe-implies-safe}).

In case~\ref{it:abs-cex-untrans}, the trace violates the ``additional'' safety property specified by the phase structure (\Cref{eq:chc-edge-cover}).
This does not indicate that the transition system is not safe, but rather that the choices of the phase structure are intrinsically incorrect: it includes a trace such that a corresponding trace of the transition system reaches a state that allows a transition that is not covered by any outgoing edge in the automaton. If the phase structure is nondeterministic, this may indicate that some of the edges are redundant and prevent obtaining phase inductive characterizations even if it is a phase invariant.
If the structure is deterministic (where every trace of $\TS$ corresponds to at most one trace of $\phaseauto$), this means that the automaton excludes true traces of the transition system and is therefore not a phase invariant, hence no corresponding inductive phase invariant exists. \TODO{automaton is not defined here}
In both cases, $\TS$ may be safe but the user has to modify the phase structure in order to be able to verify it.

\subsection{Abstract Counterexample Traces} In practice, phase characterizations are restricted since actual inference algorithms restrict their search space to a certain class of formulas.
This may be viewed as a form of abstraction, and may be one of the potential causes of absence of inductive phase invariant. In this case, we may obtain an \emph{abstract} counterexample trace that may indicate any of the violations discussed in \Cref{sec:conc-cex}, but may also reflect the limitations of the language of phase characterizations.

\para{Language of Phase Characterization}
We denote by $\Lset$ the class of formulas used to represent phase characterizations.
We denote by $\Lset_{\voc}(\qoset)$ the set of formulas in $\Lset$ over vocabulary $\voc$  with free variables from $\qoset$.
\iflong
We assume an implication relation over $\Lset$:
for every $\psi_1,\psi_2 \in \Lset_{\voc}(\qoset)$, a structure $\sigma$ over $\voc$ and a valuation $v$ we have the relation $\sigma, v \models \psi_1 \implies \sigma, v \models \psi_2$.
We use the implication relation to define a preorder on structures (accompanied by valuations) that captures which structure satisfies more formulas from $\Lset_{\voc}(\qoset)$:
\else
We define an preorder on structures paired by valuations that captures which structure satisfies more formulas from $\Lset_{\voc}(\qoset)$:
\fi

\begin{definition}
For finite structures over $\voc$ and valuation $v$ for $\qoset$, we define $(\sigma_1,v) \leqLof{\Lset_{\voc}(\qoset)} (\sigma_2,v)$ if $v$ is defined in both $\sigma_1$ and $\sigma_2$ and for all $\psi \in \Lset_{\voc}(\qoset)$, $\sigma_2, v \models \psi \Rightarrow \sigma_1, v \models \psi$.
\end{definition}
Intuitively, $(\sigma_1,v) \leqLof{\Lset_{\voc}(\qoset)} (\sigma_2,v)$ means that $(\sigma_2,v)$ is more abstract than $(\sigma_1,v)$: any formula in $\Lset_{\voc}(\qoset)$ that is satisfied by $(\sigma_2,v)$ is also satisfied by $(\sigma_1,v)$.
In particular, no formula that is satisfied by $(\sigma_2,v)$ can distinguish it from $(\sigma_1,v)$.
\iflong
Note that $\leqLof{\Lset_{\voc}(\qoset)}$ is defined for structures paired with the same valuation, which means that they interpret $\qoset$ in the same way.
\fi
We often omit $\voc$ and $\qoset$ from the notation and write $(\sigma_1,v) \leqLof{\Lset} (\sigma_2,v)$.

\begin{example}[Universal Characterizations] \label{ex:subtructure}
Consider $L = \Univ$, i.e., the class of universally quantified formulas.
In this case, $(\sigma_1,v) \leqU (\sigma_2,v)$ if 
$v$ is defined in both $\sigma_1,\sigma_2$ and 
$\sigma_1$ is a substructure of $\sigma_2$ (up to isomorphism).
(The structure $\sigma_1 = (\Dom_1,\Int_1)$ is a substructure of the structure $\sigma_2=(\Dom_2, \Int_2)$ if $\Dom_1 \subseteq \Dom_2$ and $\Int_2$ agrees with $\Int_1$ on $\Dom_1$.\iflong
That is, for every constant symbol $c \in \voc$, $\Int_2(c)=\Int_1(c)$, for every function symbol $f \in \voc$ with arity $k$, $\Int_2(f)(d_1,\ldots,d_k) = \Int_1(f)(d_1,\ldots,d_k)$ for every $d_1,\ldots,d_k \in \Dom_1$,
and for every relation symbol $r \in \voc$ with arity $k$, $\Int_2(r) \cap \Dom_1^k = \Int_1(r)$.
\fi )
\end{example}

\para{Abstract Traces}
We view the preorder $\leqLof{\Lset}$ as an abstraction relation, and use it to define a notion of an abstract trace, where each transition consists of an ``abstraction step'' followed by a concrete transition of the system. An abstraction step
transitions to a ``less abstract'' state (that cannot be distinguished by any formula satisfied by the source of the transition -- the more abstract state).

\begin{definition}[Abstract Trace]
Given a transition system $\TS = (\Init, \TR)$ over vocabulary $\voc$, an \emph{abstract trace} is a finite sequence of states ${\sigma}_1,\ldots,{\sigma_n}$ over $\voc$ with a valuation $v$ over $\qoset$ which is defined in all $\sigma_i$
such that for every $1 \leq i < n$, there exists $\tilde{\sigma_i}$ such that $(\tilde{\sigma}_i,v) \leqL (\sigma_i,v)$ and $(\tilde{\sigma}_i,\sigma_{i+1}) \models \TR$.
\end{definition}

Note that since $\leqL$ is reflexive, $\tilde{\sigma_i}$ may be equal to $\sigma_i$, in which case the abstract trace is concrete.

An \emph{abstract counterexample trace} is similar to a counterexample trace, except that it consists of an abstract trace. While the violation exhibited by an abstract counterexample trace may not be real, it indicates that no safe inductive phase invariant exists in $\Lset$.

\begin{lemma} \label{lem:abstract-cex-trace}
If an abstract counterexample trace exists then there is no safe inductive phase invariant with structure $\phasestruct$ and phase characterizations in $\Lset$.
\end{lemma}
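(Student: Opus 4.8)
The plan is to argue by contradiction. Suppose that, besides the given abstract counterexample trace, there does exist a safe inductive phase invariant $\phaseauto = (\autostate, \autoinit, \qoset, \edgelabel{}, \statelabel{})$ whose structure is $\phasestruct$ and all of whose characterizations lie in $\Lset$, i.e.\ $\statelabel{\sphase} \in \Lset_{\voc}(\qoset)$ for every $\sphase \in \autostate$; I will derive a contradiction. Unfolding the abstract counterexample trace, we have: states $\sigma_1, \ldots, \sigma_n$ over $\voc$, a valuation $v$ of $\qoset$, and a trace $\sphase_1, \ldots, \sphase_n$ of $\phasestruct$ (so $\sphase_1 = \autoinit$ and $(\sphase_i, \sphase_{i+1}) \in \autotrans$ for each $i$), with $\sigma_1 \models \Init$, and for each $1 \leq i < n$ a state $\tilde{\sigma}_i$ such that $(\tilde{\sigma}_i, v) \leqL (\sigma_i, v)$, $(\tilde{\sigma}_i, \sigma_{i+1}) \models \TR$, and $(\tilde{\sigma}_i, \sigma_{i+1}), v \models \edgelabel{(\sphase_i, \sphase_{i+1})}$; finally, either some $\sigma_i$ violates $\Safety$, or some $\tilde{\sigma}_i$ admits a $\TR$-successor $\sigma'$ with $(\tilde{\sigma}_i, \sigma'), v \not\models \edgelabel{(\sphase_i, \tphase)}$ for all $\tphase \in \autostate$.

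The core of the proof is the claim that $\sigma_i, v \models \statelabel{\sphase_i}$ for every $i$, shown by induction on $i$. For $i = 1$: $\sigma_1 \models \Init$, so the Initiation clause $\Init \implies (\forall \qoset.\ \statelabel{\autoinit})$ gives $\sigma_1, v \models \statelabel{\autoinit} = \statelabel{\sphase_1}$. For the inductive step, assume $\sigma_i, v \models \statelabel{\sphase_i}$. This is the single place where the restriction to $\Lset$ matters: since $\statelabel{\sphase_i} \in \Lset_{\voc}(\qoset)$ and $(\tilde{\sigma}_i, v) \leqLof{\Lset_{\voc}(\qoset)} (\sigma_i, v)$, the defining property of the preorder carries the $\Lset$-formula $\statelabel{\sphase_i}$ from the more abstract $\sigma_i$ to the more concrete $\tilde{\sigma}_i$, so $\tilde{\sigma}_i, v \models \statelabel{\sphase_i}$. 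Since $(\sphase_i, \sphase_{i+1}) \in \autotrans$, the Inductiveness clause $\forall \qoset.\ (\statelabel{\sphase_i} \land \edgelabel{(\sphase_i,\sphase_{i+1})} \implies \statelabel{\sphase_{i+1}}')$ applies to the pair $(\tilde{\sigma}_i, \sigma_{i+1})$: from $\tilde{\sigma}_i, v \models \statelabel{\sphase_i}$ and $(\tilde{\sigma}_i, \sigma_{i+1}), v \models \edgelabel{(\sphase_i,\sphase_{i+1})}$ we get $(\tilde{\sigma}_i, \sigma_{i+1}), v \models \statelabel{\sphase_{i+1}}'$, i.e.\ $\sigma_{i+1}, v \models \statelabel{\sphase_{i+1}}$. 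This closes the induction.

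Now I contradict whichever violation the trace exhibits. If $\sigma_i, v \not\models \Safety$ for some $i$, then since $\sigma_i, v \models \statelabel{\sphase_i}$ and $\phaseauto$ is safe w.r.t.\ $\forall \qoset.\ \Safety$, i.e.\ $\forall \qoset.\ (\statelabel{\sphase_i} \implies \Safety)$, we obtain $\sigma_i, v \models \Safety$, a contradiction. Otherwise, some $\tilde{\sigma}_i$ has $(\tilde{\sigma}_i, \sigma') \models \TR$ with $(\tilde{\sigma}_i, \sigma'), v \not\models \edgelabel{(\sphase_i, \tphase)}$ for every $\tphase \in \autostate$; in particular $\bigvee_{(\sphase_i, \tphase) \in \autotrans} \edgelabel{(\sphase_i, \tphase)}$ is false at $(\tilde{\sigma}_i, \sigma'), v$. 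But $\tilde{\sigma}_i, v \models \statelabel{\sphase_i}$ (from the claim via the preorder, as above) and $(\tilde{\sigma}_i, \sigma') \models \TR$, so the Edge Covering clause $\forall \qoset.\ (\statelabel{\sphase_i} \land \TR \implies \bigvee_{(\sphase_i, \tphase) \in \autotrans} \edgelabel{(\sphase_i, \tphase)})$ forces that disjunction to hold --- again a contradiction. Hence no safe inductive phase invariant with structure $\phasestruct$ and characterizations in $\Lset$ can exist.

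The substitutions into the inductiveness and safety clauses are routine. The step I expect to need the most care is the bookkeeping between $\sigma_i$ and $\tilde{\sigma}_i$ in light of the definition of $\leqL$, which relates only $\voc$-structures-with-valuation: every use of the preorder must move a characterization $\statelabel{\sphase_i}$ (a $\voc$-formula) from $\sigma_i$ down to $\tilde{\sigma}_i$ --- never a $\vocdouble$-formula, and never in the abstracting direction --- which works out precisely because the $\edgelabel{}$- and $\TR$-conditions of an abstract counterexample trace are imposed on the genuine transition $(\tilde{\sigma}_i, \sigma_{i+1})$ while only the $\Safety$-condition is tied to $\sigma_i$. I would pin down this alignment against the exact definition of an abstract counterexample trace (and check that the reflexive case $\tilde{\sigma}_i = \sigma_i$ reproduces the reasoning behind \Cref{lem:cex-trace}) before writing out the full proof.
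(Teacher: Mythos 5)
Your proposal is correct and follows essentially the same route as the paper's proof: an induction establishing $\sigma_i, v \models \statelabel{\sphase_i}$ from Initiation, using the preorder $\leqL$ to transfer the $\Lset$-characterization from $\sigma_i$ to $\tilde{\sigma}_i$ before applying Inductiveness across the edge, and then contradicting Safety or Edge Covering according to which violation the trace exhibits. The only (harmless) difference is that you apply the edge-covering contradiction at $\tilde{\sigma}_i$ (justified via the preorder) where the paper applies it at $\sigma_i$ directly; both readings of the definition go through.
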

\OMIT{
\begin{proof}
    For the sake of the proof, we explicitly split each transition in an abstract trace to an abstraction step followed by a transition.
Let $(\sigma_1,\tilde{\sigma}_1,\ldots,\sigma_n,\tilde{\sigma}_n),v)$ be such a trace and $\sphase_1,\ldots,\sphase_n$ be as in the definition.
    Let $\phaseauto'$ be a strengthening of $\phaseauto$ with phase characterization $\statelabel{}'$.
	Assume that $\phaseauto'$ is safe and show that $\phaseauto'$ is not inductive.
	Assume for the sake of contradiction that it is.
	We claim by induction on $i$ that $\sigma_i, v \models \statelabel{\sphase_i}'$.
	The base claim follows from initiation.
	For the induction step, assume that $\sigma_i, v \models \statelabel{\sphase_i}'$.
	Since $(\tilde{\sigma}_i, v) \leqL (\sigma_i, v)$ and $\statelabel{\sphase_i}' \in \Lset$, $\tilde{\sigma}_i, v \models \statelabel{\sphase_i}'$.
	Now $(\tilde{\sigma}_i, \sigma_{i+1}), v \models \edgelabel{(\sphase_i, \sphase_{i+1})}$, and from the assumption that $\phaseauto'$ is inductive necessarily $\sigma_{i+1}, v \models \statelabel{\sphase_{i+1}}'$, as required.

	Let us consider the cause of the abstract counterexample trace. There are three cases:
	In case \ref{it:abs-cex-unsafe}, $\sigma_i, v \not \models \Safety$ is a contradiction to the safety of $\phaseauto'$ (\Cref{eq:safe-automaton}).
	\ifsketch
	In case \ref{it:abs-cex-unchar}, $\sigma_i, v \not \models \statelabel{\sphase_i}$ implies that $\sigma_i, v \not \models \statelabel{\sphase_i}'$ (strengthening), which we have shown to be impossible.
	\fi
	In case \ref{it:abs-cex-untrans}, if $(\sigma_i, \sigma') \models \TR$, since we have shown $\sigma_i, v \models \statelabel{\sphase_i}'$ it must follow from the edge covering that
    $(\sigma_i,\sigma'), v \models \edgelabel{(\sphase_i,\sphase')}$ for some $\sphase' \in \autostate$, which is a contradiction.
\end{proof}
}

\para{Diagnosis of Abstract Counterexample Traces}
When a user is presented with an abstract counterexample trace, diagnosing the cause of the trace assists the user in understanding whether (i) the program is faulty, (ii) the phase structure needs to be modified (and how), or (iii) the language of phase characterizations is not expressive enough to capture the required characterizations. These cases can be differentiated by performing bounded model checking along the given abstract counterexample trace.
If a concrete trace is found, it demonstrates whether the system is not safe or the automaton needs to be changed. Otherwise, the counterexample is attributed to the limited logical language used for candidate characterizations. In this case, the user can proceed by extending the logical language, or modify the program and/or automaton so that they admit an inductive phase invariant in the given logical language.

\commentout{
\subsubsection{Diagnosis of Abstract Counterexample Traces.}
As we have shown, an abstract counterexample trace is evidence that the given phase automaton cannot be strengthened to a safe inductive phase invariant.
Diagnosing the cause of the trace assists the user in understanding whether (i) the program is faulty, (ii) the phase automaton needs to be modified, or (iii) the language of phase characterizations is not expressive enough to capture the required strengthening:. We elaborate on these possibilities below.

The abstract counterexample trace may represent a concrete trace. If the trace violates the original safety property (enforced by \Cref{eq:chc-safe}), then the transition system at hand is not safe and no safe inductive strengthening can be expected, for any phase automaton (by \Cref{cor:ind-safe-implies-safe}).

However, the trace may also violate one of the ``additional'' safety properties specified by the phase automaton (\Cref{eq:chc-edge-cover,eq:chc-strengthen}).
This does not indicate that the transition system is not safe, but rather that the choices of the phase automaton are intrinsically incorrect: it includes a trace such that a corresponding trace of the transition system reaches a state that violates the phase characterization, or allows a transition that is not covered by any outgoing edge in the automaton. If the latter happens for a \emph{deterministic} phase automaton (where every trace of $\TS$ corresponds to at most one trace of $\phaseauto$), then the automaton excludes true traces of the transition system and is therefore not a phase invariant, hence no inductive strengthening exists (by \Cref{lem:ind-strengthening-to-phase-inv},
an inductive strengthening is only possible when the phase automaton is a phase invariant), even though $\TS$ may be safe.
Therefore, the user has to modify the phase automaton itself.

Finally, the abstract trace may be attributed to the limited logical language used for candidate characterizations.
In this case, even if the given phase automaton $\phaseauto$ is a phase invariant, no inductive strengthening may exist.

These cases can be differentiated by performing bounded model checking along the given abstract counterexample trace. If no concrete trace is found, the user can proceed by extending the logical language of characterizations, or modify the program and/or automaton so that they admit an inductive phase invariant in the given logical language.
}

 \section{Inductive Invariant for Sharded Key-Value Store}
\label{sec:kv-ind}
\begin{figure}[H]
  \begin{lstlisting}[numbers=left, numberstyle=\tiny, numbersep=5pt, name=lockserv, morekeywords={state}]
  invariant $\forall k,n_1,n_2,v_1,v_2.$ table($n_1$,$k$,$v_1$)$\land$table($n_2$,$k$,$v_2$)$\rightarrow n_1 = n_2 \land v_1 = v_2$
  
  invariant $\forall k,n_1,n_2.$ owner($n_1$,$k$)$\land$owner($n_2$,$k$)$\rightarrow n_1=n_2$
  invariant $\forall k,n,v.$ table($n$,$k$,$v$)$\rightarrow$owner($n$,$k$)

  invariant $\forall k,\vsrc,\vdst,v,s,n.$ $\neg$(transfer_msg($\vsrc$,$\vdst$,$k$,$v$,$s$)$\land$$\neg$seqnum_recvd($\vdst$,$\vsrc$,$s$)$\land$owner($n$,$k$))
  invariant $\forall k,\vsrc,\vdst,v,s,n.$ $\neg$(unacked($\vsrc$,$\vdst$,$k$,$v$,$s$)$\land$$\neg$seqnum_recvd($\vdst$,$\vsrc$,$s$)$\land$owner($n$,$k$))

  invariant $\forall k,\vsrc_1,\vsrc_2,\vdst_1,\vdst_2,v_1,v_2,s_1,s_2.$ transfer_msg($\vsrc_1$,$\vdst_1$,$k$,$v_1$,$s_1$)$\land$$\neg$seqnum_recvd($\vdst_1$,$\vsrc_1$,$s_1$)
    $\land$transfer_msg($\vsrc_2$,$\vdst_2$,$k$,$v_2$,$s_2$)$\land$$\neg$seqnum_recvd($\vdst_2$,$\vsrc_2$,$s_2$)$\rightarrow \vsrc_1 = \vsrc_2 \land \vdst_1 = \vdst_2 \land v_1=v_2 \land s_1=s_2$
  invariant $\forall k,\vsrc_1,\vsrc_2,\vdst_1,\vdst_2,v_1,v_2,s_1,s_2.$transfer_msg($\vsrc_1$,$\vdst_1$,$k$,$v_1$,$s_1$)$\land$$\neg$seqnum_recvd($\vdst_1$,$\vdst_1$,$s_1$)
    $\land$unacked($\vsrc_2$,$\vdst_2$,$k$,$v_2$,$s_2$)$\land$$\neg$seqnum_recvd($\vdst_2$,$\vsrc_2$,$s_2$)$\rightarrow \vsrc_1 = \vsrc_2 \land \vdst_1 \land \vdst_2 \land v_1 = v_2 \land s_1 = s_2$
  invariant $\forall \vsrc_1,\vsrc_2,\vdst_1,\vdst_2,v_1,v_2,s_1,s_2.$ unacked($\vsrc_1$,$\vdst_1$,$k$,$v_1$,$s_1$)$\land$$\neg$seqnum_recvd($\vdst_1$,$\vsrc_1$,$s_1$)
    $\land$unacked($\vsrc_2$,$\vdst_2$,$k$,$v_2$,$s_2$)$\land$$\neg$seqnum_recvd($\vdst_2$,$\vsrc_2$,$s_2$)$\rightarrow \vsrc_1 = \vsrc_2 \land \vdst_1 = \vdst_2 \land v_1 = v_2 \land s_1 = s_2$
\end{lstlisting}
\caption{\footnotesize Inductive invariant for the running example.
}
\end{figure}  \section{Overview of Phase-$\UPDR$}
\label{sec:algorithm-long}
Our procedure is based on $\UPDR$~\cite{DBLP:journals/jacm/KarbyshevBIRS17}, a variant of PDR~\cite{ic3,pdr} that infers universally quantified inductive invariants.
PDR computes a sequence of \emph{frames}, $\Frame_0, \ldots, \Frame_n$
such that $\Frame_i$ overapproximates the set of states reachable in $i$ steps.
In our case, each frame $\Frame_i$ is a mapping from a phase $\sphase$ to a characterization.

We describe the gist of the procedure using the terminology of phase automata.
An inductive trace is a sequence of frames such that
$\forall \qoset. \ \Frame_{i}(\sphase) \implies \Frame_{i+1}(\sphase)$ for all $i$ and $\sphase \in \autostate$,
the first frame satisfies $\Frame_0(\autoinit) = \Init$ where $\autoinit$ is the initial phase and $\Frame_0(\sphase) = \false$ for other phases (in accordance with \Cref{eq:chc-init}),
all frames satisfy \Cref{eq:chc-edge-cover,eq:chc-safe}, and the constraint of \Cref{eq:chc-ind} is interpreted between successive frames, namely for all $0 \leq i < n$ and for all $(\sphase,\tphase) \in \autotrans$,
\begin{equation}
\forall \qoset. \ \left(\Frame_{i}(\sphase) \land \edgelabel{(\sphase,\tphase)} \implies \left(\Frame_{i+1}(\sphase)\right)'\right).
\end{equation}
These properties ensure that $\Frame_i$ induces phase characterizations such that the language of the induced phase automaton includes all traces of $\TS$ of length at most $i$.

The procedure gradually constructs the inductive trace by generating and blocking proof obligations, where a proof obligation $(m, \sphase,i)$ consists of system state(s) $m$ that need to be proved unreachable at phase $\sphase$ of the automaton at frame $i$ (i.e., with traces of length bounded by $i$).
When a proof obligation is shown to hold, it is generalized into a new lemma that excludes the corresponding states and is added as a conjunct to the characterization of phase $\sphase$ at frame $i$, where the phase characterizations of each frame are initially set to $\true$.\footnote{Our description here omits some details, such as pushing lemmas between frames; see~\cite{pdr} for further discussion.}

Proof obligations are generated by a backward traversal. First, whenever a new frame $\Frame_{n}$ is added, proof obligations are generated from counterexamples to the safety properties of \Cref{eq:chc-edge-cover,eq:chc-safe} in some phase $\sphase'$ based on $\Frame_n(\sphase')$.
Then, to check whether a proof obligation $\psi$ at phase $\sphase'$ can be blocked in frame $\Frame_{i}$, our procedure checks whether the phase characterizations of its pre-phases in the previous frame suffice to show that $\psi$ holds in $\sphase'$ in accordance with constraint (\ref{eq:chc-ind}), i.e.\ $\forall \qoset. \ \left(\Frame_{i-1}(\sphase) \land \edgelabel{(\sphase,\sphase')} \implies \psi'\right)$ for all $\sphase$ such that $(\sphase,\sphase') \in \autotrans$. Otherwise, there is a pre-phase $\sphase$, a valuation $v$ and a transition $(\sigma,\sigma'), v \models \edgelabel{(\sphase,\sphase')}$ such that $\sigma, v \models \Frame_{i-1}(\sphase)$ but $\sigma', v \not\models \psi$. This generates a proof obligation $\theta$ for phase $\sphase$ in frame $i-1$.

In $\UPDR$, the proof obligation $\theta$ is constructed as the diagram of the counterexample, which is the strongest existentially quantified abstraction of $\sigma$~\cite{chang1990model}. In our case, $\theta = \diag{\sigma, v, \qoset}$ is defined by
\begin{equation}
	\diag{\sigma, v, \qoset} = \exists x_1,\ldots,x_{m}. \
			\psi_{\text{distinct}} \land \psi_{\text{identity}} \land \psi_{\text{rels}}
\end{equation}
where
\begin{itemize}
    \item $\sigma$ has domain $\{e_1,\ldots,e_m\}$ and interpretation $\Int$.
	\item $\psi_{\text{distinct}}$ is a conjunction of all inequalities $x_i \neq x_j$ for $i \neq j$.

	\item $\psi_{\text{identity}}$ is a conjunction of the equalities $x_i = c$ if $c$ is a constant symbol and $\Int(c)=e_i$, and of the equalities $x_i = y$ if $y \in \qoset$ and $v(y) = e_i$.

	\item $\psi_{\text{rels}}$ is a conjunction of atomic formulas $r(x_{i_1}, \ldots, x_{i_a})$ for every relation symbol $r$ of arity $a$ and elements $\ov{e} = e_{i_1},\ldots,e_{i_a}$ such that $\ov{e} \in \Int(r)$, and $\neg r(x_{i_1}, \ldots, x_{i_a})$ if $\ov{e} \not\in \Int(r)$. Function symbols are treated similarly.
\end{itemize}	

Note that $\diag{\sigma,v,\qoset}$ has $\qoset$ as free variables.
A proof obligation $(\diag{\sigma,v,\qoset}, \sphase, i)$ is blocked by adding to the characterization of phase $\sphase$ at frame $i$ a universally quantified clause (possibly with free variables in $\qoset$) that implies $\neg \diag{\sigma,v,\qoset}$.

The procedure continues to generate (and block) proof obligations across automaton edges, going backwards in the frames, until it reaches the initial frame.
If a proof obligation does not hold there, the procedure has found an abstract counter-trace, and returns it as evidence of absence of a safe inductive phase invariant. The reason is that the backward traversal is performed over diagrams, where $({\sigma},v) \models \diag{\tilde{\sigma},v,\qoset}$ if and only if $\tilde{\sigma}$ is a substructure of ${\sigma}$ (up to isomorphism)~\cite{chang1990model}, and hence if and only if $(\tilde{\sigma},v) \leqU ({\sigma},v)$ (\Cref{ex:subtructure}).
Otherwise, the procedure terminates if one of the frames constitutes a solution to the CHC system, namely an inductive phase invariant. \section{Proofs}
\label{sec:proofs}

\begin{proof*}{Proof of \Cref{lemma:phase-ind-to-phase-inv}}
Given a domain $\Dom$ and a valuation $v$, the phase characterizations of $\phaseauto$ induce the following simulation relation from $\TS$ to $\phaseauto$:
$H = \{(\sigma, \sphase) \, | \, \sigma, v \models \statelabel{\sphase}\}$
in the sense that:
\begin{enumerate}
	\item (\emph{Labeling}) $(\sigma, \sphase) \in H$ implies $\sigma,v \models \statelabel{\sphase}$.

	\item (\emph{Initial}) For every $\sigma_0 \models \Init$, $(\sigma_0, \autoinit) \in H$.
		  This follows from the initiation property of $\phaseauto$ (\Cref{def:ind}).

	\item (\emph{Step}) For every transition $(\sigma, \sigma') \models \TR$, if $(\sigma, \sphase) \in H$ then there exists $\tphase \in \autostate$ such that  $(\sigma', \tphase) \in H$ and
	$(\sigma, \sigma'), v \models \edgelabel{(\sphase,\tphase)}$.
	This holds due to the edge covering and inductiveness properties of $\phaseauto$ (\Cref{def:ind}):
	From edge covering, there exists $\tphase \in \autostate$ such that $(\sigma, \sigma'), v \models \edgelabel{(\sphase,\tphase)}$.
	From inductiveness, necessarily $\sigma', v \models \statelabel{\tphase}'$. 
	Thus $(\sigma', \tphase) \in H$, as required.
\end{enumerate}
Hence, for every $\Dom,v$, trace inclusion follows.
\qed
\end{proof*}

\begin{proof*}{Proof of \Cref{lemma:phase-to-ind}}
	Inductive phase invariant to inductive invariant:
	$\Init \implies \text{Inv}_{\phaseauto}$ follows from the initiation condition of $\phaseauto$.
	For consecution, assume $\sigma \models \text{Inv}_{\phaseauto}$ and $(\sigma, \sigma') \models \TR$.
	Then there is some $\sphase \in \autostate$ such that $\sigma \models \statelabel{\sphase}$.
	From edge covering condition of $\phaseauto$, there exists $\tphase \in \autostate$ such that $(\sigma, \sigma') \models \edgelabel{(\sphase,\tphase)}$.
	From the inductiveness condition of $\phaseauto$ necessarily $\sigma' \models \statelabel{\tphase}'$, and thus $\sigma' \models \text{Inv}'_{\phaseauto}$, as required.
	The converse direction is clear.
	\qed
\end{proof*}

\begin{proof*}{Proof of \Cref{cor:ind-safe-implies-safe}}
Consider the inductive invariant $I = \forall \qoset. \ \bigvee_{\sphase \in \autostate}{\statelabel{\sphase}}$ per \Cref{lemma:phase-to-ind}. The safety of $\phaseauto$ (\Cref{def:safe-automaton}) implies that $I \implies \forall \qoset. \, \Safety$, and the claim follows.

It is also possible to prove the lemma through phase invariants, by claiming that if $\phaseauto$ is safe and an invariant for $\TS$ then $\TS$ is safe:
Let $\sigma_1,\ldots,\sigma_n$ be a finite trace of $\TS$.
	Let $v$ be a valuation for $\qoset$.
	$\ov{\sigma} \models \phaseauto$, so there exists a trace of phases $\sphase_0,\ldots,\sphase_n$ such that $\sigma_i, v \models \statelabel{\sphase_i}$.
	Since $v \models \statelabel{\sphase_i} \to \Safety$, for every $i$ it holds that $\sigma_i, v \models \Safety$.
The claim follows.
	\qed
\end{proof*}

\begin{proof*}{Proof of \Cref{lem:inference}}
Assume that $\mathbf{I}$ is a solution to the CHC system. Then $\phaseauto$ is a safe inductive phase invariant for $\TS$: $\phaseauto$ satisfies initiation due to constraint \ref{eq:chc-init}, satisfies inductiveness due to constraint \ref{eq:chc-ind}, and edge covering due to constraint \ref{eq:chc-edge-cover}, and thus $\phaseauto$ is inductive w.r.t.\ $\TS$ (\Cref{def:ind}). Finally, $\phaseauto$ is safe (\Cref{def:safe-automaton}) due to constraint \ref{eq:chc-safe}.

Conversely, assume that $\phaseauto$ is an inductive phase invariant. Then constraint \ref{eq:chc-init} is satisfied because $\phaseauto$ satisfies initiation, constraint \ref{eq:chc-ind} is satisfied because $\phaseauto$ satisfies inductiveness, and constraint \ref{eq:chc-edge-cover} because $\phaseauto$ satisfies edge covering---all from the definition of an inductive phase invariant (\Cref{def:ind}). Finally, constraint \ref{eq:chc-safe} is satisfied because $\phaseauto$ is safe (\Cref{def:safe-automaton}).
\qed
\end{proof*}

\begin{proof*}{Proof of \Cref{lem:strengthening-exists}}
The implication from right to left is a consequence of \Cref{lemma:phase-ind-to-phase-inv}. 
Consider the other direction. We construct an inductive phase invariant as follows:
For a given valuation $v$, characterize each phase by the set of states that can reach this phase: reachability of $\sigma$ to phase $\sphase$ means that there exist trace of program states $\sigma_0,\ldots,\sigma_n=\sigma$ and a matching trace of phases $\sphase_0,\ldots,\sphase_n=\sphase$ per \Cref{def:language-phase-automaton}.
The result is indeed an inductive phase invariant: initiation follows from $\phaseauto$ being a phase invariant.
Inductiveness follows from taking the characterizations to be sets of reachable states (recall that the language of characterizations is assumed to be unrestricted).
It remains to argue that edge covering holds. Let $\sigma$ be reachable in phase $\sphase$ and $(\sigma,\sigma') \models \TR$. Thus there is a sequence of states program states $\sigma_0,\ldots,\sigma_n=\sigma$ and a matching trace of phases $\sphase_0,\ldots,\sphase_n=\sphase$.
Now, $\sigma'$ is also reachable, and since $\phaseauto$ is a phase invariant there exists a trace of phases trace of phases $\sphase'_0,\ldots,\sphase'_{n},\sphase'_{n+1}=\sphase'$ matching $\sigma_0,\ldots,\sigma_{n},\sigma'$.
But $\phaseauto$ is \emph{deterministic}, so necessarily $\sphase'_{n} = \sphase_{n}$ (by induction over $n$) which is $\sphase$. In particular this gives $(\sigma,\sigma') \models \edgelabel{(\sphase,\tphase)}$ where $\tphase = \sphase'_{n+1}$, as required.
\OMIT{
	For a deterministic phase automaton, $\Lang{\TS} \subseteq \Lang{\phaseauto}$ if and only if there exists a simulation relation between $\TS$ and $\phaseauto$.
Furthermore, in this case, the inductiveness requirement coincides with the requirement that the phase characterizations induce a simulation relation. Hence, in this case, by defining the characterization of $\sphase$ to include all the states simulated by it, we obtain an inductive strengthening of $\phaseauto$.	
}
\qed
\end{proof*}

\begin{proof*}{Proof of \Cref{lem:determinization}}
The definition of $\edgelabel{}'$ ensures that it is deterministic, and inherits the edge covering property from $\edgelabel{}$.
Initiation is not affected by the edge labeling, and inductiveness cannot be damaged by strengthening $\edgelabel{(\sphase,\tphase)}'$.
\qed
\end{proof*}

\begin{proof*}{Proof of \Cref{lem:cex-trace}}
Follows from \Cref{lem:abstract-cex-trace} with the identity preorder (or simply by reiterating the proof while ignoring abstraction steps).
\end{proof*}

\begin{proof*}{Proof of \Cref{lem:abstract-cex-trace}}
    For the sake of the proof, we explicitly split each transition in an abstract trace to an abstraction step followed by a transition.
Let $(\sigma_1,\tilde{\sigma}_1,\ldots,\sigma_n,\tilde{\sigma}_n),v)$ be such a trace and $\sphase_1,\ldots,\sphase_n$ be as in the definition.
    Let $\phaseauto'$ be a strengthening of $\phaseauto$ with phase characterization $\statelabel{}'$.
	Assume that $\phaseauto'$ is safe and show that $\phaseauto'$ is not inductive.
	Assume for the sake of contradiction that it is.
	We claim by induction on $i$ that $\sigma_i, v \models \statelabel{\sphase_i}'$.
	The base claim follows from initiation.
	For the induction step, assume that $\sigma_i, v \models \statelabel{\sphase_i}'$.
	Since $(\tilde{\sigma}_i, v) \leqL (\sigma_i, v)$ and $\statelabel{\sphase_i}' \in \Lset$, $\tilde{\sigma}_i, v \models \statelabel{\sphase_i}'$.
	Now $(\tilde{\sigma}_i, \sigma_{i+1}), v \models \edgelabel{(\sphase_i, \sphase_{i+1})}$, and from the assumption that $\phaseauto'$ is inductive necessarily $\sigma_{i+1}, v \models \statelabel{\sphase_{i+1}}'$, as required.

	Let us consider the cause of the abstract counterexample trace. In case \ref{it:abs-cex-unsafe}, $\sigma_i, v \not \models \Safety$ is a contradiction to the safety of $\phaseauto'$ (\Cref{def:safe-automaton}).
	\ifsketch
	In case \ref{it:abs-cex-unchar}, $\sigma_i, v \not \models \statelabel{\sphase_i}$ implies that $\sigma_i, v \not \models \statelabel{\sphase_i}'$ (strengthening), which we have shown to be impossible.
	\fi
	In case \ref{it:abs-cex-untrans}, if $(\sigma_i, \sigma') \models \TR$, since we have shown $\sigma_i, v \models \statelabel{\sphase_i}'$ it must follow from edge covering that
    $(\sigma_i,\sigma'), v \models \edgelabel{(\sphase_i,\sphase')}$ for some $\sphase' \in \autostate$, which is a contradiction.
    The claim follows.
\qed
\end{proof*} \fi

\end{document}